\newcommand{\squishlist}{
 \begin{list}{$\bullet$}
  {  \setlength{\itemsep}{0pt}
     \setlength{\parsep}{3pt}
     \setlength{\topsep}{3pt}
     \setlength{\partopsep}{0pt}
     \setlength{\leftmargin}{2em}
     \setlength{\labelwidth}{1.5em}
     \setlength{\labelsep}{0.5em}
} }
\newcommand{\squishlisttight}{
 \begin{list}{$\bullet$}
  { \setlength{\itemsep}{0pt}
    \setlength{\parsep}{0pt}
    \setlength{\topsep}{0pt}
    \setlength{\partopsep}{0pt}
    \setlength{\leftmargin}{2em}
    \setlength{\labelwidth}{1.5em}
    \setlength{\labelsep}{0.5em}
} }
\newcommand{\squishdesc}{
 \begin{list}{}
  {  \setlength{\itemsep}{0pt}
     \setlength{\parsep}{3pt}
     \setlength{\topsep}{3pt}
     \setlength{\partopsep}{0pt}
     \setlength{\leftmargin}{1em}
     \setlength{\labelwidth}{1.5em}
     \setlength{\labelsep}{0.5em}
} }
\newcommand{\squishend}{
  \end{list}
}
\newcommand{\spara}[1]{\smallskip\noindent{\bf #1}}
\newcommand{\mpara}[1]{\medskip\noindent{\bf #1}}
\newtheorem{theorem}{Theorem}[section]
\newtheorem{lemma}[theorem]{Lemma}
\newcommand{\fpr}[1]{\mathopen{}\left(#1\right)}
\newcommand{\dispfunc}[2]{%
  \ensuremath{%
    \ifthenelse{\equal{\noexpand#2}{}}%
	     {#1}%
		      {{#1}\fpr{#2}}}}
\DeclareMathAlphabet{\pazocal}{OMS}{zplm}{m}{n}
\newcommand{\bigO}{\ensuremath{\mathcal{O}}\xspace}
\newcommand{\np}{\ensuremath{\mathbf{NP}}\xspace}
\newcommand{\kpath}{\ensuremath{k\text{\sc{-Path}}}\xspace}
\newcommand{\ktemppath}{\ensuremath{k\text{\sc{-Temp}\-Path}}\xspace}
\newcommand{\temppath}{\ensuremath{\text{\sc Temp\-Path}}\xspace}
\newcommand{\pathmotif}{\ensuremath{\text{\sc Path\-Motif}}\xspace}
\newcommand{\rainbowpath}{\ensuremath{\text{\sc Rain\-bow\-Path}}\xspace}
\newcommand{\sdcolorfulpath}{\ensuremath{(s,d)\text{\sc-Color\-ful\-Path}}\xspace}
\newcommand{\colorfulpath}{\ensuremath{\text{\sc Color\-ful\-Path}}\xspace}
\newcommand{\ectemppath}{\ensuremath{\text{\sc EC-Temp\-Path}}\xspace}
\newcommand{\ecpathmotif}{\ensuremath{\text{\sc EC-Path\-Motif}}\xspace}
\newcommand{\vcpathmotif}{\ensuremath{\text{\sc VC-Path\-Motif}}\xspace}
\newcommand{\vccolorfulpath}{\ensuremath{\text{\sc VC-Color\-ful\-Path}}\xspace}
\newcommand{\tikzscale}{{0.7}}
\tikzset{multicircle/.style  args={#1, #2}{%
 alias=tmp@name, %
  postaction={%
    insert path={
     \pgfextra{%
     \pgfpointdiff{\pgfpointanchor{\pgf@node@name}{center}}%
                  {\pgfpointanchor{\pgf@node@name}{east}}%
     \pgfmathsetmacro\insiderad{\pgf@x}%
        \fill[white] (\pgf@node@name.center)  circle (\insiderad-\pgflinewidth);%
        \draw[#2] (\pgf@node@name.center)  circle (\insiderad-\pgflinewidth);%
        \fill[#2] (\pgf@node@name.center)  -- ++(0:\insiderad-\pgflinewidth) arc (0:#1:\insiderad-\pgflinewidth)--cycle;%
        }}}}}
\definecolor{yafaxiscolor}{rgb}{0.3, 0.3, 0.3}
\definecolor{yafcolor1}{rgb}{0.4, 0.165, 0.553}
\definecolor{yafcolor2}{rgb}{0.949, 0.482, 0.216}
\definecolor{yafcolor3}{rgb}{0.47, 0.549, 0.306}
\definecolor{yafcolor4}{rgb}{0.925, 0.165, 0.224}
\definecolor{yafcolor5}{rgb}{0.141, 0.345, 0.643}
\definecolor{yafcolor6}{rgb}{0.965, 0.933, 0.267}
\definecolor{yafcolor7}{rgb}{0.627, 0.118, 0.165}
\definecolor{yafcolor8}{rgb}{0.878, 0.475, 0.686}
\definecolor{yafcolor9}{rgb}{0.965, 0.733, 0.767}
\newlength{\yafaxispad}
\newlength{\yaftlpad}
\newlength{\yaflabelpad}
\newlength{\yafaxiswidth}
\newlength{\yafticklen}
\def\pgfplots@drawtickgridlines@INSTALLCLIP@onorientedsurf#1{}
\newcommand{\yafdrawxaxis}[2]{
  \pgfplotstransformcoordinatex{#1}\let\xmincoord=\pgfmathresult 
  \pgfplotstransformcoordinatex{#2}\let\xmaxcoord=\pgfmathresult 
  \pgfsetlinewidth{\yafaxiswidth} 
  \pgfsetcolor{yafaxiscolor}
  \pgfpathmoveto{\pgfpointadd{\pgfpointadd{\pgfplotspointrelaxisxy{0}{0}}{\pgfqpointxy{\xmincoord}{0}}}{\pgfqpoint{-0.5\yafaxiswidth}{\yafaxispad}}}
  \pgfpathlineto{\pgfpointadd{\pgfpointadd{\pgfplotspointrelaxisxy{0}{0}}{\pgfqpointxy{\xmaxcoord}{0}}}{\pgfqpoint{0.5\yafaxiswidth}{\yafaxispad}}}
  \pgfusepath{stroke}

}
\newcommand{\yafdrawyaxis}[2]{
  \pgfplotstransformcoordinatey{#1}\let\ymincoord=\pgfmathresult 
  \pgfplotstransformcoordinatey{#2}\let\ymaxcoord=\pgfmathresult 
  \pgfsetlinewidth{\yafaxiswidth} 
  \pgfsetcolor{yafaxiscolor}
  \pgfpathmoveto{\pgfpointadd{\pgfpointadd{\pgfplotspointrelaxisxy{0}{0}}{\pgfqpointxy{0}{\ymincoord}}}{\pgfqpoint{\yafaxispad}{-0.5\yafaxiswidth}}}
  \pgfpathlineto{\pgfpointadd{\pgfpointadd{\pgfplotspointrelaxisxy{0}{0}}{\pgfqpointxy{0}{\ymaxcoord}}}{\pgfqpoint{\yafaxispad}{0.5\yafaxiswidth}}}
  \pgfusepath{stroke}
}
\pgfplotsset{axis y line=left, axis x line=bottom,
  tick align=outside,
  compat = 1.3,
  tickwidth=\yafticklen,
  clip = false,
  every axis title shift = 0pt,
    x axis line style= {-, line width = 0pt, opacity = 0},
    y axis line style= {-, line width = 0pt, opacity = 0},
    x tick style= {line width = \yafaxiswidth, color=yafaxiscolor, yshift = \yafaxispad},
    y tick style= {line width = \yafaxiswidth, color=yafaxiscolor, xshift = \yafaxispad},
    x tick label style = {font=\scriptsize, yshift = \yaftlpad},
    y tick label style = {font=\scriptsize, xshift = \yaftlpad},
    every axis y label/.style = {at = {(ticklabel cs:0.5)}, rotate=90, anchor=center, font=\scriptsize, yshift = -\yaflabelpad},
    every axis x label/.style = {at = {(ticklabel cs:0.5)}, anchor=center, font=\scriptsize, yshift = \yaflabelpad},
    x tick label style = {font=\scriptsize, yshift = 1pt},
    grid = major,
    major grid style  = {dash pattern = on 1pt off 3 pt},
  every axis plot post/.append style= {line width=\yafaxiswidth} ,
  legend cell align = left,
  legend style = {inner sep = 1pt, cells = {font=\scriptsize}},
  legend image code/.code={%
    \draw[mark repeat=2,mark phase=2,#1] 
    plot coordinates { (0cm,0cm) (0.15cm,0cm) (0.3cm,0cm) };%
  } 
}
\title{Finding path motifs in large temporal graphs\\using algebraic fingerprints%
\thanks{An earlier version of this work appeared in the SIAM International
Conference on Data Mining (SDM20) titled ``Pattern detection in large temporal
graphs using algebraic fingerprints''. A final version of this work will
appear in the Big Data journal special issue titled ``Best of SDM 2020''.}
} 
\author{Suhas Thejaswi\thanks{Department of Computer Science, Aalto University, Finland.}
\and Aristides Gionis\thanks{Department of Computer Science, KTH Royal Institute
of Technology, Sweden, and Department of Computer Science, Aalto University, Finland.}
\and Juho Lauri}
\date{}
\begin{document}
\maketitle


\begin{abstract}
We study a family of pattern-detection problems in \emph{vertex-colored} \emph{temporal} graphs. In particular, given a vertex-colored temporal graph and a multiset of colors as a query, we search for \emph{temporal paths} in the graph that contain the colors specified in the query. These types of problems have several applications, for example in recommending tours for tourists or detecting abnormal behavior in a network of financial transactions.

\smallskip
For the family of pattern-detection problems we consider, we establish complexity results and design an algebraic-algorithmic framework based on {\em constrained multilinear sieving}. We demonstrate that our solution scales to massive graphs with up to a billion edges for a multiset query with five colors and up to hundred million edges for a multiset query with ten colors, despite the problems being \np-hard.
Our implementation, which is publicly available, exhibits practical edge-linear scalability and is highly optimized. For instance, in a real-world graph dataset with more than six million edges and a multiset query with ten colors, we can extract an optimum solution in less than eight minutes on a \emph{Haswell} desktop with four cores.

\end{abstract}

\section{Introduction}
Pattern mining in graphs has become increasingly popular due to applications in
analyzing and understanding structural properties of data originating from
information networks, social networks, transportation networks, and many more.
Searching for patterns in graphs is a fundamental graph-mining task that has
applications in computational biology and analysis of metabolic
networks~\cite{lacroix2006motif}, discovery of controversial discussions in
social media~\cite{colettoKGL2017}, and understanding the connectivity of the
brain~\cite{honey2007network}, among others.  At the same time, real-world data
are inherently complex. To accurately represent the heterogeneous and dynamic
nature of real-world graphs, we need to enrich the basic graph model with
additional features. Thus, researchers have considered \emph{labeled
graphs}~\cite{yang2013community}, or \emph{heterogeneous
graphs}~\cite{liu2010mining}, where vertices and/or edges are associated with
additional information represented with labels, and \emph{temporal
graphs}~\cite{holme2012temporal}, where edges are associated with timestamps
that indicate when interactions between pairs of vertices took place.

\smallskip
In this paper we study a family of pattern-detection problems in graphs that are
both \emph{labeled} and \emph{temporal}.  In particular, we consider graphs in
which each vertex is associated with one (or more) labels, to which we refer as
\emph{colors}, and each edge is associated with a timestamp.  We then consider a
\emph{motif query}, which is a multiset of colors.  The problem we consider is
to decide whether there exists a \emph{temporal path} whose vertices contain
exactly the colors specified in the motif~query.  A temporal path in a temporal
graph refers to a path in which the timestamps of consecutive edges are strictly
increasing. If such a path exists, we also want to find it and return it as
output.

\smallskip
The family of problems we consider have several applications. One
application is in the domain of tour recommendations~\cite{DeFAGLY2010} for
travelers or tourists in a city.  In this case, vertices correspond to
locations.  The colors associated with each location represent different
activities that can be enjoyed in that particular location.  For example,
activity types may include items such as museums, archaeological sites, or
restaurants.  Edges correspond to transportation links between different
locations, and each transportation link is associated with a timestamp
indicating departure time and duration.  Furthermore, for each location we may
have information about the amount of time recommended to spent in that location,
e.g., minimum amount of time required to enjoy a meal or appreciate a museum.
Finally, the multiset of colors specified in the motif query represents the
multiset of activities that a user is interested in enjoying.  In the
tour-recommendation problem we would like to find a temporal path, from a
starting location to a destination, which satisfies temporal constraints (e.g.,
feasible transportation links, visit times, and total duration) as well as the
activity requirements of the~user, i.e., 
what kind of places they want to visit.

\smallskip
Another application is in the domain of analyzing networks of financial
transactions.  Here, the vertices represent financial entities, the vertex
colors represent features of the entities, and the temporal edges represent
financial transactions between entities, annotated with the time of the
transaction, amount, and possibly other features.  An analyst may be interested
in finding long chains of transactions among entities that have certain
characteristics, for example, searching for money laundering activities may
require querying for paths that involve public figures, companies with certain
types of contracts, and banks in offshore locations.

\smallskip
The use cases outlined above, as well as similar applications, 
can be abstracted and formulated as problems of finding paths 
in vertex-colored (vertex-labeled) and temporal graphs. 
More concretely, in this paper we consider the following problems:

\squishlist
\item \ktemppath:
decide if there exists a temporal path of length $k-1$;
\item \pathmotif:
decide if there exists a temporal path whose vertices contain the set of
colors specified by a motif query;
\item \colorfulpath:
decide if there exists a temporal path of length $k-1$ whose vertices have all the
colors precisely once;
\item \sdcolorfulpath:
decide if there exists a temporal path of length $k+1$, whose internal vertices have all
the colors precisely once,
and with a given source $s$ and destination $d$;
\item \rainbowpath: decide if there exists a temporal path of length $k-1$ having $k$ distinct
colors in a graph with $q > k$ colors;
\item \ectemppath:
decide if there exists a temporal path of length $k-1$ with specific edge timestamps;
\item \ecpathmotif:
decide if there exists a temporal path with specific edge timestamps and vertices containing the
set of colors specified by a motif query;
\item \vcpathmotif:
decide if there exists a temporal path whose vertices contain the set of colors specified by motif
query in the specified order; and
\item \vccolorfulpath:
decide if there exists a temporal path of length $k-1$, whose vertices have distinct
colors in the specified order.
\squishend

\smallskip
These problem variants can be useful in
different scenarios of our application domain, 
depending on the user constraints and/or requirements.
To motivate some of the different problem variants we presented above, 
consider again the tour-recommendation use-case in which
the vertices correspond to locations and
vertex labels correspond to location types, e.g.,
museum, restaurant, caf\'e, etc. The temporal edges between vertices correspond to
travel connections at specific time\-stamps. 
The tour-recommendation problem asks 
to find an itinerary by taking into
consideration the tourist's preferences
with respect to the locations which they want to visit.
The case that a tourist prefers not to visit more than one location of the same type 
can be modeled as a \colorfulpath problem. 
The case that, in addition to the previous constraint, 
a tourist knows their start and end location
(for example, starting at the hotel they stay and ending at a favorite restaurant)
can be modeled as an \sdcolorfulpath problem.
Finally, the case that a tourist wants to maximize  
the number of different types of locations they visit
can be modeled as a \rainbowpath path.

\smallskip
Most of the problems we consider are \np-hard;\footnote{\vccolorfulpath problem
is solvable in polynomial time (see \S\,\ref{sec:algorithm-ext:vccolorfulpath}).} thus, there is no known efficient
algorithm to find an exact solution. In such cases most algorithmic solutions
resort to heuristics or approximation schemes for the reason of scalability. In this paper we
present an (exact) \emph{algebraic approach} based on \emph{constrained
multilinear sieving} for pattern detection in temporal graphs and demonstrate
that our approach is scalable to large graphs.

\smallskip
The algorithms based on 
constrained multilinear detection offer the theoretically best-known results
for a set of fundamental combinatorial problems
including $k$-path~\cite{BjorklundHKK2017}, Hamiltonian path~\cite{Bjorklund2014}, 
and many variants of the graph motif problem~\cite{BjorklundKK2016}.
The implementations based on multilinear siev\-ing are known to saturate the
empirical arithmetic and memory bandwidth on modern CPU and GPU
micro-architectures. Furthermore, these implementations can scale to large
graphs as well as large query sizes \cite{BjorklundKKL2015, kaskiLT2018}.

\smallskip
Even though these algebraic techniques have been studied extensively in the
algorithms community, they have not been applied to data-mining problems to the
best of our knowledge. As such, this is the first work to do so and also to
apply these techniques for pattern detection in temporal graphs.

\smallskip
Our key contributions are as follows:
\squishlist
\item We introduce a set of pattern-detection problems that originate in the
\emph{vertex-colored} and \emph{temporal} graphs.
For the problems we consider we present \np-hardness results,
while showing that they are {\em fixed-parameter tractable}~\cite{CyganFKLMPPS2015},
meaning that if we restrict the size of the motif query the
problems are solvable in polynomial time in the size of the host graph.
\item We present a general algebraic-algorithmic framework based on
constrained multilinear sieving.
Our solution exhibits edge-linear scalability.
The algorithmic approach described in this work is not
limited to temporal paths, but rather it can be extended to study information
cascades, temporal arborescences and temporal subgraphs. An overview of our key
results is given in Table~\ref{table:introduction:1}.

\item We extend the vertex-localization variant of the constrained multilinear
sieving to solve path problems in temporal graphs. In this approach, we work
with a family of polynomials one for each vertex, rather than a single
polynomial, there by isolating the vertices which are part of a match. Most
importantly, vertex-localization comes with no additional cost with respect to
either space or time. This approach is effective for preprocessing the graph and
extracting a solution for many variants of the temporal-path problem.

\item We engineer a memory-efficient implementation of the algebraic algorithm
and demonstrate with extensive experiments that our implementation can scale to
graphs with up to a billion edges for multiset query with five colors and up to
one hundred million edges for multiset query with ten colors.
\item Open-source release: our implementations and datasets are released as open
source \cite{conf-code, journal-code}.
\squishend


\begin{table*}[t]
\caption{An overview of our key results. Here, $n$ is the number of vertices, $m$ is the
number of edges, $t$ is the maximum timestamp, $k-1$ is the length of path and $q$ is
the number of colors in the graph.}
\label{table:introduction:1}
\centering
\footnotesize
\begin{tabular}{l l l l}
\toprule
Problem & Hardness & Time complexity & Space complexity\\
\midrule
\ktemppath     & \np-complete (Lemma~\ref{lemma:temppath:1})   & $\bigO(2^k k (nt+m))$     & $\bigO(nt)$\\
\pathmotif     & \np-complete (Lemma~\ref{lemma:pathmotif-np})  & $\bigO(2^k k (nt+m))$     & $\bigO(nt)$\\
\colorfulpath   & \np-complete (Lemma~\ref{lemma:colorfulpath:1}) & $\bigO(2^k k (nt+m))$     & $\bigO(nt)$\\
\sdcolorfulpath & \np-complete (Lemma~\ref{lemma:sd-rainbow})    & $\bigO(2^k k (nt+m))$     & $\bigO(nt)$\\
\rainbowpath  & \np-complete (Lemma~\ref{lemma:colorful})      & $\bigO(q^k 2^k k (nt+m))$ & $\bigO(nt)$\\
\ectemppath    & \np-complete (Lemma~\ref{lemma:ectemppath:1})  & $\bigO(2^k (nk+m))$       & $\bigO(n)$\\
\ecpathmotif   & \np-complete (Lemma~\ref{lemma:ecpathmotif:1}) & $\bigO(2^k (nk+m))$       & $\bigO(n)$\\
\vcpathmotif   & \np-complete (Lemma~\ref{lemma:vcpathmotif:1}) & $\bigO(2^k k (nt+m))$     & $\bigO(nt)$\\
\vccolorfulpath & Polynomial                                     & $\bigO(mt)$               & $\bigO(nt)$\\
\bottomrule
\end{tabular}
\end{table*}

\section{Related work}

Pattern detection and pattern counting are fundamental problems in data mining.
In the context of paths and trees, pattern matching problems have been
extensively studied in non-temporal graphs both in theory
\cite{BjorklundHKK2017, BjorklundKK2016, cicaleseGGLLRT13, GagieHLW13,
GiaquintaG13, KowalikL2016}
as well as applications
\cite{BensonGL2016, ColettoGGL2017, Holmes2012, MiloSKC2002}.
For many restricted variants of path problems Kowalik and Lauri~\cite{KowalikL2016} 
presented complexity results and deterministic algorithms with runtime
bounds that are optimal under plausible complexity-theoretic assumptions. 
Most of these problems are known to be fixed-parameter tractable and 
the best known randomized algorithms for a subset
of path and subgraph pattern detection problems is due to {Bj{\"o}rklund~et~al.}~\cite{BjorklundHKK2017, BjorklundKK2016}. 
An algorithmic technique known as color coding can be used to
approximately count the patterns in $\bigO^*(2^k)$ time, however, these
algorithms require $\bigO^*(2^k)$ memory \cite{AlonPIFC2008}.\footnote{The 
notation $\bigO^*$ hides factors bounded polynomially in the input size.} A practical
implementation of color coding using adaptive sampling and succint encoding was
demonstrated by Bressan et al.~\cite{BressanLP2019} for a pattern counting problem.
However, the techniques based on color coding are mostly used to detect and count
patterns in graphs with no vertex labels.

\smallskip
Algebraic algorithms based on multilinear and constrained multilinear sieving
are due to the pioneering work of Koutis~\cite{koutis-icalp08, koutis-dagstuhl, koutis-ipl}, 
Williams~\cite{williams-ipl}, Koutis and
Williams~\cite{koutis-williams-icalp09, koutisW2016}. The approach has been
extended to various combinatorial problems using a multivariate variant of the
sieve by Bj{\"o}rklund et al.~\cite{BjorklundHKK2017}. 
The authors introduced decision oracles which were used by Dell et al.~\cite{DellLM2020} to approximately count motifs.
A practical implementation of multilinear sieving and its scalability
to large graphs has been demonstrated by Bj{\"o}rklund et
al.~\cite{BjorklundKKL2015}.  Furthermore, its parallelizability to
vector-parallel architectures and scalability to large multiset sizes was shown
by Kaski et al.~\cite{kaskiLT2018}.

\smallskip
In the recent years there has been a lot of progress with respect to mining
temporal graphs. The most relevant work includes methods for efficient
computation of network measures, such as centrality, connectivity, density,
and motifs~\cite{Dechter1991, Holme2015, holme2012temporal, Kostakos2009, latapy2018stream}, 
as well as mining frequent subgraphs in temporal networks
\cite{LiuBC2019, ParanjapeBL2017, wackersreuther2010frequent}.
Path problems in temporal graphs are well studied 
\cite{GeorgeKS2007, WuCHKLX2014}.
In fact, many variants of these path problems are known to be solvable in polynomial-time
\cite{WuCHKLX2014, WuC2016}. Perhaps surprisingly, a simple variant where one 
wants to check the existence of a temporal path with waiting time constraints 
was shown to be \np-complete by Casteigts et al.~\cite{CasteigtsHMZ2019}. 
A known variant of the temporal-path problem is finding top-$k$ shortest paths,
which not only asks us to find a shortest path, but also the next $k-1$ shortest
paths; which may be longer than the shortest path~\cite{GuptaAH2011}. Here by
shortest path we mean that the total elapsed time of the temporal path is minimized. 
Note that the top-$k$ shortest path is different from the \ktemppath problem studied
in our work.

\smallskip
With the availability of social media data in the recent years there has been
growing interest to study pattern mining problems in temporal graphs. Paranjape
et al.~\cite{ParanjapeBL2017} presented efficient algorithms for counting small
temporal patterns. Liu et al.~\cite{LiuBC2019} presented complexity results and
approximation methods for counting patterns in temporal graphs. However, they
mainly study temporal graphs with no vertex-labels (colors).
Kovanen et al.~\cite{KovanenKKKS2011} studied a general variant of the temporal
subgraph problem in temporal graphs with vertex labels.  Aslay et
al.~\cite{AslayMFGG2018} presented methodologies for counting frequent patterns
with vertex and edge labels in streaming graphs. However, most of these
approaches were limited to small pattern sizes of up to three vertices.

\smallskip
To the best of our knowledge, there is no existing work
related to detecting and extracting temporal patterns with vertex labels.
The problems considered in this paper are closely related to variants of classical
problems such as the orienteering problem, TSP and Hamiltonian path
\cite{VansteenwegenSO2011, garey2002computers}. A~motivating application for the
problems can be traced to the context of tour recommendations~\cite{DeFAGLY2010,
GionisLPT2014}.
%


\section{Method overview}

Our method relies on the \emph{algebraic-fingerprinting}
technique~\cite{koutis-williams-icalp09,williams-ipl}. As this technique is not
well known in the data-mining community, we provide a bird's eye view.
The approach is described in more detail in Section~\ref{sec:algorithm}.

\smallskip
In a nutshell, the problem is to decide the existence of a pattern, or
a structure in the data. The idea is to encode the pattern-discovery problem as a
polynomial over a set of variables. The variables represent entities of the
problem instance (e.g., vertices and/or edges), and their values represent
possible solutions (e.g., whether a vertex belongs to a path). The challenge is
to find a poly\-nomial encoding that has the property that a solution to our
problem exists if and only if the poly\-nomial evaluates to a non-zero term.
We can then verify the existence of a solution, using \emph{polynomial iden\-ti\-ty
testing}, in particular, by evaluating random sub\-sti\-tu\-tions of variables: if one
of them does not evaluate to zero, then the polynomial is not identically zero.
Thus, the method can give false negatives, but the error probability can be
brought arbitrarily close to zero.

\smallskip
It should also be noted that an explicit representation of the polynomial can be
exponentially large. However, we do not need to represent the polynomial
explicitly, since we only need to be able to evaluate the variable substitutions
fast.

\smallskip
This paper is organized as follows.
In the next section we will introduce the terminology. 
In Section~\ref{sec:problems} we introduce the path problems in temporal graphs and 
in Section~\ref{sec:algorithm} we present an algebraic algorithm to solve the
temporal-path problems. In Section~\ref{sec:algorithm-ext}, we extend our
algebraic framework to solve temporal path problems with additional constraints.
In Sections~\ref{sec:implementation}, \ref{sec:experimental-setup}, and
\ref{sec:experimental-eval} we discuss implementation, experimental setup and
experimental evaluation, respectively. 
Finally, we conclude in Section~\ref{sec:conclusion}.

\section{Terminology}
\label{sec:terminology}

In this section we introduce the basic terminology used in the paper.

\smallskip
A \emph{graph} $G$ is a tuple $(V,E)$ where $V$ is a set of \emph{vertices}
and $E$ is a set of unordered pairs of vertices called \emph{edges}. We denote
the number of vertices $|V| = n$ and the number of edges $|E|= m$.
Vertices $u$ and $v$ are {\em adjacent} if there exists an edge $(u,v) \in E$.
The set of vertices adjacent to vertex $u$ excluding $u$ itself is the 
\emph{neighborhood} of vertex $u$ and it is denoted by $N(u)$.
A \emph{walk} between any two vertices is an alternating sequence of vertices
and edges $u_1 e_1 u_2 \dots e_{k} u_{k+1}$ such that there exists an edge
$e_i=(u_i, u_{i+1}) \in E$ for each $i \in [k]$.\footnote{For
convenience, we write $[k] = \{1,2,\dots,k\}$ for a positive integer $k$.}
We call the vertices $u_1$ and $u_{k+1}$ the \emph{start} and \emph{end}
vertices of the walk, respectively. The vertices $v_2,\dots,v_k$ are called
internal vertices.
The \emph{length} of a walk is the number of edges in the walk.
A \emph{path} is a walk with no repetition of vertices.

\smallskip
A \emph{temporal graph} $G^{\tau}$ is a tuple $(V,E^{\tau})$, where $V$ is a set
of vertices and $E^{\tau}$ is a set of temporal edges. A \emph{temporal edge} is
a triple $(u,v,j)$ where $u, v \in V$ and $j \in Z_{+}$ is a
\emph{timestamp}. The \emph{maximum timestamp} in $G^\tau$ is denoted by $t$.
The total number of edges at time instance $j \in [t]$ is denoted by $m_j$ and
the total number of edges in a temporal graph is $m = \sum_{j \in [t]} m_j$.
A vertex $u$ is adjacent to vertex $v$ at timestamp $j$ if there exists an edge
$(u,v,j) \in E^\tau$. The set of vertices adjacent to vertex $u$ at time step
$j$ is denoted by $N_j(u)$. The set of vertices adjacent to vertex $u$ excluding
$u$ itself is
denoted by $N(u) = \bigcup_{j \in [t]} N_j(u)$.
A temporal graph can also be defined as $G^{\tau} =
\bigcup_{j \in [t]} G^j$, where $G^j=(V,E^j)$ is a \emph{snapshot} of the graph at
time instance $j \in [t]$, where $t$ is the maximum time instance. In our
discussions we mostly use the former definition of a temporal graph.

\smallskip
A \emph{temporal walk} $W^\tau$ between any two vertices in a temporal graph is an
alternating sequence of vertices and temporal edges $u_1 e_1 u_2 e_2 \dots e_k
u_{k+1}$ such that there exists an edge $e_{i}=(u_i,u_{i+1},j) \in E^{\tau}$
for all $i \in [k]$ and for any two edges $e_i=(u_i,u_{i+1},j)$,
$e_{i+1}=(u_{i+1},u_{i+2},j')$ in the walk $W^\tau$, it is $j<j'$.
In other words, the timestamps of
the edges should always be in strictly increasing order. We call the vertices
$v_1$ and $v_{k+1}$ the \emph{start} and \emph{end} vertices, respectively. The
vertices $v_2,\dots,v_k$ are called internal vertices.
The \emph{length} of a temporal walk is the number of edges in the temporal walk.
A \emph{temporal path} is a temporal walk with no repetition of vertices.

\smallskip
To distinguish a graph from a temporal graph sometimes we explicitly refer to 
a graph as \emph{non-temporal} graph or \emph{static} graph.

\section{Path problems in temporal graphs}
\label{sec:problems}
In this section we will introduce a set of path problems in temporal graphs. 
An exact algorithm based on {\em multilinear sieving} is presented in the next section. 

\smallskip
Before proceeding, we note that in all our hardness proofs it is 
straightforward to establish membership to \np.
Thus, to streamline the presentation, we explicitly omit showing this part
in all subsequent proofs.
Let us begin our discussion with the $k$-path problem for non-temporal graphs
before continuing to path problems in temporal graphs.

\subsection{$k$-path problem in non-temporal graphs (\kpath)}
Given a graph $G=(V,E)$ and an integer $k \leq n$ the \kpath problem asks to
decide whether there exists a path of length at least $k-1$ in $G$.

\smallskip
The \kpath problem is \np-com\-plete~\cite[ND29]{garey2002computers} with the
fastest known randomized fixed-parameter algorithm for the problem is due to
Bj{\"o}rklund et al.~\cite{BjorklundHKK2017} and has complexity
$\bigO^*(1.66^k)$. The fastest known deterministic algorithm for the
problem is due to Fomin et al.~\cite{Fomin2016efficient} and has complexity
$\bigO^*(2.62^k)$.

\subsection{$k$-path problem in temporal graphs (\ktemppath)}
Given a temporal graph $G^{\tau}=(V,E^{\tau})$ and an integer $k \leq n$
the \ktemppath problem asks to decide whether there exists a \emph{temporal path}
of length at least $k-1$ in $G^{\tau}$. An example of the \kpath problem is
illustrated in Figure~\ref{fig:temppath:1}.

\begin{figure}[t]
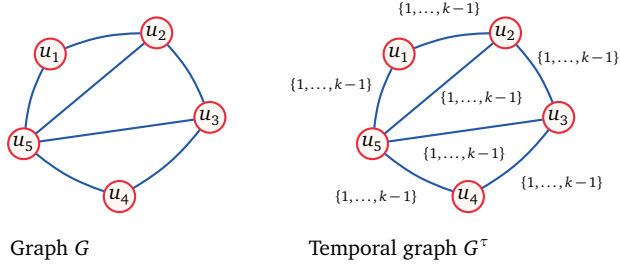

\centering
\begin{tikzpicture}[scale=\tikzscale,every node/.style={scale=\tikzscale}]]

\input{tikz-defs}

\node[exnode] (u1) at ( 0,   2.7) {$u_1$};
\node[exnode] (u2) at ( 2,   3.1) {$u_2$};
\node[exnode] (u3) at ( 3,   1.5) {$u_3$};
\node[exnode] (u4) at ( 1.3, 0  ) {$u_4$};
\node[exnode] (u5) at (-0.5, 1  ) {$u_5$};

\draw (u1) edge[-, exedge, bend left  = 10] (u2);
\draw (u2) edge[-, exedge, bend left  = 10] (u3);
\draw (u3) edge[-, exedge, bend left  = 10] (u4);
\draw (u4) edge[-, exedge, bend left  = 10] (u5);
\draw (u5) edge[-, exedge, bend left  = 10] (u1);

\draw (u5) edge[-, exedge] (u2);
\draw (u5) edge[-, exedge] (u3);

\node[fill=white] at (0,-1) {\large Graph $G$};

\end{tikzpicture}
\begin{tikzpicture}[scale=\tikzscale,every node/.style={scale=\tikzscale}]]

\input{tikz-defs}

\node[exnode] (u1) at ( 0,   2.7) {$u_1$};
\node[exnode] (u2) at ( 2,   3.1) {$u_2$};
\node[exnode] (u3) at ( 3,   1.5) {$u_3$};
\node[exnode] (u4) at ( 1.3, 0  ) {$u_4$};
\node[exnode] (u5) at (-0.5, 1  ) {$u_5$};

\draw (u1) edge[-, exedge, bend left  = 10] node[exlabel, pos = 0.7, inner sep=-3pt] {\scriptsize $\{1,\ldots,k-1\}$} (u2);
\draw (u2) edge[-, exedge, bend left  = 10] node[exlabel, pos = 0.8] {\scriptsize $\{1,\ldots,k-1\}$} (u3);
\draw (u3) edge[-, exedge, bend left  = 10] node[exlabel, pos = 0.3] {\scriptsize $\{1,\ldots,k-1\}$} (u4);
\draw (u4) edge[-, exedge, bend left  = 10] node[exlabel, pos = 0.7, inner sep=-1pt] {\scriptsize $\{1,\ldots,k-1\}$} (u5);
\draw (u5) edge[-, exedge, bend left  = 10] node[exlabel, pos = 0.1] {\scriptsize $\{1,\ldots,k-1\}\quad$} (u1);

\draw (u2) edge[-, exedge] node[exlabel, pos = 0.3,  inner sep=-5pt] {\scriptsize $\{1,\ldots,k-1\}\quad$} (u5);
\draw (u3) edge[-, exedge] node[exlabel, pos = 0.6,  inner sep=-15pt] {\scriptsize $\{1,\ldots,k-1\}\quad$} (u5);

\node[fill=white] at (0,-1) {\large Temporal graph $G^{\tau}$};

\end{tikzpicture}%
\caption{\label{fig:temppath:1}
An illustration of graph construction $G^\tau$.}
\end{figure}

\smallskip
For the \np-hardness, we reduce the \kpath problem to \ktemppath problem.

\begin{lemma}
\label{lemma:temppath:1}
Problem \ktemppath is \np-complete.
\end{lemma}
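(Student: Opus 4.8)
The plan is to reduce from the \kpath problem, which is \np-complete as noted above, so it suffices to give a polynomial-time many-one reduction \kpath $\leq_p$ \ktemppath. Given a \kpath instance consisting of a static graph $G = (V, E)$ and an integer $k$, I would construct a temporal graph $G^\tau = (V, E^\tau)$ on the same vertex set by replacing each edge with a bundle of temporal copies: for every edge $(u,v) \in E$ and every timestamp $j \in \{1, \ldots, k-1\}$ I add the temporal edge $(u,v,j)$ to $E^\tau$. This is the construction depicted in Figure~\ref{fig:temppath:1}; its maximum timestamp is $t = k-1$ and $|E^\tau| = (k-1)m$, so since $k \leq n$ it has polynomial size and is computable in polynomial time. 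The correctness claim to prove is that $G$ has a path of length at least $k-1$ if and only if $G^\tau$ has a temporal path of length at least $k-1$.

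For the forward direction I would take a path $u_1 u_2 \cdots u_k$ in $G$ and assign timestamp $i$ to its $i$-th edge; the copy $(u_i, u_{i+1}, i)$ lies in $E^\tau$ for each $i \in [k-1]$ and the timestamps are strictly increasing, yielding a temporal path of length $k-1$ on the same (distinct) vertices. For the reverse direction I would observe that dropping timestamps maps any temporal walk to a walk in $G$, so a temporal path — having no repeated vertices — projects to a path of equal length in $G$; thus a temporal path of length at least $k-1$ in $G^\tau$ gives a path of length at least $k-1$ in $G$.

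The one point requiring care, and the crux that aligns the two directions, is the length bookkeeping. Because $G^\tau$ offers only the $k-1$ timestamps $\{1, \ldots, k-1\}$ and consecutive edges along a temporal path must strictly increase in time, no temporal path can exceed $k-1$ edges; hence ``length at least $k-1$'' collapses to ``length exactly $k-1$'' on the temporal side, and the supply of timestamps is precisely what is needed to realize such a path in the forward direction. I expect no genuine obstacle beyond stating this counting argument cleanly, together with the \np-membership check that is deliberately omitted here.
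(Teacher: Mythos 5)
Your proof is correct and follows essentially the same route as the paper's: the identical edge-replication construction (each static edge copied across the first $k-1$ timestamps), the same forward direction assigning timestamp $i$ to the $i$-th edge, and the same reverse direction projecting a temporal path back to a static path by dropping timestamps. Your explicit remark that the limited supply of timestamps forces ``length at least $k-1$'' to collapse to ``length exactly $k-1$'' is a small bookkeeping point the paper leaves implicit, but it does not change the argument.
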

\begin{proof}
To prove our claim, we proceed by giving a polynomial-time reduction from the \kpath problem.

Given an instance $\langle G=(V,E), k \rangle$ of \kpath, 
we construct a temporal graph $G^{\tau}=(V^{\tau},E^{\tau})$
such that $V^{\tau} = V$ and $E^{\tau} = \bigcup_{i\in [k]} E^i$, where
$E^i = \bigcup_{(u,v)\in E} (u,v,i)$. 
The construction is illustrated in Figure~\ref{fig:temppath:1}.
We claim that there exists a \kpath in $G$ if and only if there exists a \ktemppath in $G^{\tau}$.

Assume that there exists a path $P=u_1 e_1 u_2 \dots e_{k-1} u_{k}$ of length
$k$ in $G$. By construction of $G^\tau$, we know that all edges in $E$ are
present in $E^\tau$ for every timestamp in $[k-1]$. Thus, we can construct a
temporal path $P^\tau$ such that for all vertices in $u_i \in P$ we keep it as
it is in $P^\tau$ and for each edge $e_i=(u_i, u_{i+1}) \in P$ we replace it by
$e_i^* = (u_i, u_{i+1}, i)$ in $P^\tau$ (by construction such an edge always
exist in $G^\tau$). So $P^\tau = u_1 e_1^* u_2 \dots e_{k-1}^* u_{k}$ is a
temporal path of length $k-1$ in $G^\tau$.

Conversely, assume that there exists a temporal path
$P^\tau=u_1 e_1^* u_2 \dots e_{k-1}^* u_{k}$ of length $k$ in $G^\tau$.
We construct a path $P=u_1 e_1 u_2 \dots e_{k-1} u_{k}$ by replacing
$e_i^* = (u_i, u_{i+1}, i)$ by $e_i=(u_i, u_{i+1})$
(such an edge always exist in $G$ by construction).
Evidently, $P$ is a path of length $k-1$ in $G$, completing the proof.
\end{proof}

\subsection{Path motif problem in temporal graphs (\pathmotif)}
Given a vertex-colored temporal graph $G^\tau=(V,E^\tau)$ and a multiset $M$ of
colors the \pathmotif problem asks to decide whether there exists a temporal path $P^\tau$ in
$G^\tau$ such that the vertex colors of $P^\tau$ agree with $M$. An example of
the \pathmotif problem is illustrated in Figure~\ref{fig:pathmotif:1}.

\begin{figure}[t]
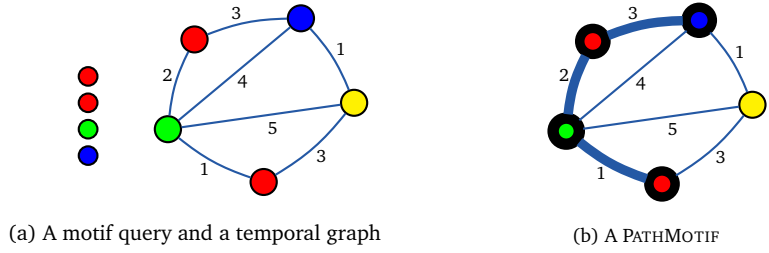

\centering
\begin{tikzpicture}[scale=\tikzscale,every node/.style={scale=\tikzscale}]]

\input{tikz-defs}

\node[mnode, fill=red]   (m1) at ( -2, 2   ) {};
\node[mnode, fill=red]   (m2) at ( -2, 1.5 ) {};
\node[mnode, fill=green] (m3) at ( -2, 1   ) {};
\node[mnode, fill=blue]  (m4) at ( -2, 0.5 ) {};

\node[node, fill=red]    (u1) at (   0, 2.7) {};
\node[node, fill=blue]   (u2) at (   2, 3.1) {};
\node[node, fill=yellow] (u3) at (   3, 1.5) {};
\node[node, fill=red]    (u4) at ( 1.3, 0  ) {};
\node[node, fill=green]  (u5) at (-0.5, 1  ) {};

\draw (u1) edge[-, exedge, bend left  = 10] node[exlabel, pos = 0.5, inner sep=-8pt] {\small $3$} (u2);
\draw (u2) edge[-, exedge, bend left  = 10] node[exlabel, pos = 0.5] {\small $1$} (u3);
\draw (u3) edge[-, exedge, bend left  = 10] node[exlabel, pos = 0.5] {\small $3$} (u4);
\draw (u4) edge[-, exedge, bend left  = 10] node[exlabel, pos = 0.5] {\small $1$} (u5);
\draw (u5) edge[-, exedge, bend left  = 10] node[exlabel, pos = 0.5] {\small $2$} (u1);

\draw (u2) edge[-, exedge] node[exlabel, pos = 0.5,  inner sep=-7pt] {\small $4$} (u5);
\draw (u3) edge[-, exedge] node[exlabel, pos = 0.5,  inner sep=-9pt] {\small $5$} (u5);

\node[fill=white] at (0,-1) {\large (a) A motif query and a temporal graph};

\end{tikzpicture}
\begin{tikzpicture}[scale=\tikzscale,every node/.style={scale=\tikzscale}]]

\input{tikz-defs}

\node[node, fill=red,   line width = 1.25mm]    (u1) at ( 0,   2.7) {};
\node[node, fill=blue,  line width = 1.25mm]   (u2) at ( 2,   3.1) {};
\node[node, fill=yellow                    ] (u3) at ( 3,   1.5) {};
\node[node, fill=red,   line width = 1.25mm]    (u4) at ( 1.3, 0  ) {};
\node[node, fill=green, line width = 1.5 mm]  (u5) at (-0.5, 1  ) {};

\draw (u1) edge[-, exedge, bend left  = 10, line width=1.25mm] node[exlabel, pos = 0.5, inner sep=-8pt] {\small $3$} (u2);
\draw (u2) edge[-, exedge, bend left  = 10                   ] node[exlabel, pos = 0.5] {\small $1$} (u3);
\draw (u3) edge[-, exedge, bend left  = 10                   ] node[exlabel, pos = 0.5] {\small $3$} (u4);
\draw (u4) edge[-, exedge, bend left  = 10, line width=1.25mm] node[exlabel, pos = 0.5] {\small $1$} (u5);
\draw (u5) edge[-, exedge, bend left  = 10, line width=1.25mm] node[exlabel, pos = 0.5] {\small $2$} (u1);

\draw (u2) edge[-, exedge] node[exlabel, pos = 0.5,  inner sep=-7pt] {\small $4$} (u5);
\draw (u3) edge[-, exedge] node[exlabel, pos = 0.5,  inner sep=-9pt] {\small $5$} (u5);

\node[fill=white] at (1,-1) {(b) A \pathmotif};

\end{tikzpicture}%
\caption{\label{fig:pathmotif:1}
An example of \pathmotif problem in temporal graphs.}
\end{figure}

\smallskip
The \pathmotif problem is \np-complete and a reduction from \ktemppath to
\pathmotif is straightforward.

\begin{lemma}
\label{lemma:pathmotif-np}
Problem \pathmotif is \np-complete.
\end{lemma}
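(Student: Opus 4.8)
The plan is to establish \np-hardness by a polynomial-time reduction from \ktemppath, which is \np-complete by Lemma~\ref{lemma:temppath:1}; membership in \np\ is omitted following the convention stated at the start of this section.

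Given an instance $\langle G^\tau = (V, E^\tau), k \rangle$ of \ktemppath, I would build a \pathmotif instance by coloring every vertex of $V$ with a single common color, say color $1$, and setting the motif query to the multiset $M$ consisting of exactly $k$ copies of color $1$. This transformation is clearly computable in polynomial time. The entire argument rests on one observation: since every vertex carries the same color, a temporal path $P^\tau$ has vertex-color multiset equal to $M$ if and only if $P^\tau$ uses exactly $k$ vertices, that is, has length exactly $k-1$.

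With this in hand, the two directions are routine. If $G^\tau$ has a temporal path of length at least $k-1$, then restricting to its first $k$ vertices yields a temporal path on exactly $k$ vertices --- the prefix of a temporal walk still has strictly increasing edge timestamps --- whose colors agree with $M$, so the \pathmotif instance is a yes-instance. Conversely, any temporal path whose colors agree with $M$ has exactly $|M| = k$ vertices and hence length $k-1$, which is a valid temporal path for the original \ktemppath instance.

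There is no genuine obstacle here; the only step meriting a sentence is the passage from ``length at least $k-1$'' to ``length exactly $k-1$'', which is handled by the prefix argument above and is valid precisely because truncating a temporal path cannot violate the increasing-timestamp requirement.
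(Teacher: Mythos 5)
Your proposal is correct and follows essentially the same reduction as the paper: color all vertices with a single color, set $M=\{1^k\}$, and observe that matching $M$ is equivalent to being a temporal path on exactly $k$ vertices. If anything, you are slightly more careful than the paper's proof, which implicitly assumes a temporal path of length exactly $k-1$; your prefix-truncation remark cleanly handles the ``length at least $k-1$'' phrasing in the \ktemppath definition.
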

\begin{proof}
Given an instance of a \ktemppath with the input temporal graph
$G^{\tau}=(V,E^{\tau})$ we reduce it, in polynomial time, to \pathmotif as follows. 
We build a vertex-colored temporal graph such that all its vertices have the same color
and the multiset $M$ comprises the color $1$ exactly $k$ times.
More precisely, we let $G_c^{\tau}=(V_c,E^{\tau}_c)$ with the vertex
set $V_c = V$, the edge set $E_c^\tau = E^\tau$, the color mapping
$c:V\rightarrow \{1\}$ and the multiset $M=\{1^{k}\}$.
This finishes the construction.
We claim that there exists a \ktemppath in $G^{\tau}$ if and only if there
exists a \pathmotif in $G_c^{\tau}$. 
  
Let $P$ be a temporal path of length $k-1$ in $G^\tau$. We choose $P_c = P$ as a
\pathmotif of length $k$ in $G_c^\tau$ since all vertices have the same color
and the multiset $M$ agree with the colors of vertices in $P_c$.
In the other direction, it suffices to observe that any \pathmotif of 
length $k-1$ in $G_c^\tau$ is also a temporal path in $G$.
By Lemma~\ref{lemma:temppath:1} we have that \ktemppath is \np-hard,
so by our construction \pathmotif is \np-complete, which is what we wanted to show.
\end{proof}

\subsection{Colorful path problem in temporal graphs (\colorfulpath)}

Given a temporal graph $G^\tau=(V, E^\tau)$, an integer $k \leq n$, and a
coloring function $c:V \rightarrow [k]$, the \colorfulpath problem asks us to
decide whether there exists a temporal path $P^\tau$ of length $k-1$ such that
all vertex colors of $P^\tau$ are different. 

\smallskip
In \cite{conf-paper}, we introduced the colorful path problem as the rainbow
path problem, nevertheless to keep the problem definition concurrent with
non-temporal graphs we rename the problem as the colorful path problem.

\smallskip
The \colorfulpath problem is a special case of the \pathmotif problem, where all
the colors in the multiset $M$ are different, that is $M=[k]$. 
It is easy to see that the \colorfulpath problem in static graphs can be reduced to
the \colorfulpath problem in temporal graphs by replacing each static edge with
$k-1$ temporal edges.
So, the \colorfulpath problem is \np-complete. 
We skip the proof as the construction
is similar to that of Lemma~\ref{lemma:temppath:1}.

\smallskip
The \colorfulpath problem in static graphs is defined as follows: given a static
graph $G=(V,E)$ and a coloring function $c: V \rightarrow[k]$, the problem asks
us to find a path $P$ of length $k-1$ such that all vertex colors of $P$ are
different. 
The \colorfulpath problem in static graphs is known to be
\np-complete~\cite{Chen2011,Uchizawa2013}; however, it can be noted 
that these works consider a variant in which one requires the \emph{internal} 
vertices of a path not to repeat colors. Nevertheless, this variant is 
computationally equivalent to \colorfulpath (see e.g.,~\cite{KowalikL2016}).

\begin{figure}[t]
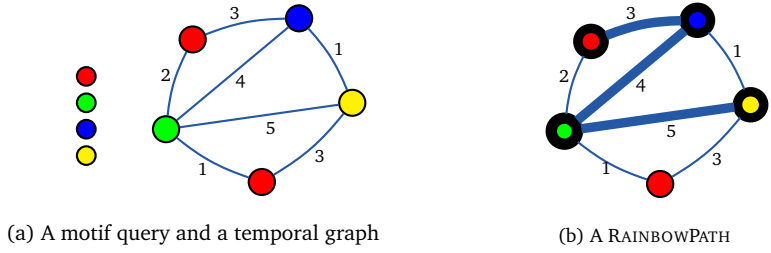

\centering
\begin{tikzpicture}[scale=\tikzscale,every node/.style={scale=\tikzscale}]]

\input{tikz-defs}

\node[mnode, fill=red]     (m1) at ( -2, 2   ) {};
\node[mnode, fill=green]   (m2) at ( -2, 1.5 ) {};
\node[mnode, fill=blue]    (m3) at ( -2, 1   ) {};
\node[mnode, fill=yellow]  (m4) at ( -2, 0.5 ) {};

\node[node, fill=red]    (u1) at (   0, 2.7) {};
\node[node, fill=blue]   (u2) at (   2, 3.1) {};
\node[node, fill=yellow] (u3) at (   3, 1.5) {};
\node[node, fill=red]    (u4) at ( 1.3, 0  ) {};
\node[node, fill=green]  (u5) at (-0.5, 1  ) {};

\draw (u1) edge[-, exedge, bend left  = 10] node[exlabel, pos = 0.5, inner sep=-8pt] {\small $3$} (u2);
\draw (u2) edge[-, exedge, bend left  = 10] node[exlabel, pos = 0.5] {\small $1$} (u3);
\draw (u3) edge[-, exedge, bend left  = 10] node[exlabel, pos = 0.5] {\small $3$} (u4);
\draw (u4) edge[-, exedge, bend left  = 10] node[exlabel, pos = 0.5] {\small $1$} (u5);
\draw (u5) edge[-, exedge, bend left  = 10] node[exlabel, pos = 0.5] {\small $2$} (u1);

\draw (u2) edge[-, exedge] node[exlabel, pos = 0.5,  inner sep=-7pt] {\small $4$} (u5);
\draw (u3) edge[-, exedge] node[exlabel, pos = 0.5,  inner sep=-9pt] {\small $5$} (u5);

\node[fill=white] at (0,-1) {\large (a) A motif query and a temporal graph};

\end{tikzpicture}
\begin{tikzpicture}[scale=\tikzscale,every node/.style={scale=\tikzscale}]]

\input{tikz-defs}

\node[node, fill=red,    line width = 1.25mm]  (u1) at ( 0,   2.7) {};
\node[node, fill=blue,   line width = 1.25mm]  (u2) at ( 2,   3.1) {};
\node[node, fill=yellow, line width = 1.25mm]  (u3) at ( 3,   1.5) {};
\node[node, fill=red                       ]  (u4) at ( 1.3, 0  ) {};
\node[node, fill=green,  line width = 1.5 mm]  (u5) at (-0.5, 1  ) {};

\draw (u1) edge[-, exedge, bend left  = 10, line width=1.25mm] node[exlabel, pos = 0.5, inner sep=-8pt] {\small $3$} (u2);
\draw (u2) edge[-, exedge, bend left  = 10                   ] node[exlabel, pos = 0.5] {\small $1$} (u3);
\draw (u3) edge[-, exedge, bend left  = 10                   ] node[exlabel, pos = 0.5] {\small $3$} (u4);
\draw (u4) edge[-, exedge, bend left  = 10                   ] node[exlabel, pos = 0.5, inner sep=-1pt] {\small $1$} (u5);
\draw (u5) edge[-, exedge, bend left  = 10                   ] node[exlabel, pos = 0.5] {\small $2$} (u1);

\draw (u2) edge[-, exedge, line width = 1.25mm] node[exlabel, pos = 0.5,  inner sep=-7pt] {\small $4$} (u5);
\draw (u3) edge[-, exedge, line width = 1.25mm] node[exlabel, pos = 0.5,  inner sep=-9pt] {\small $5$} (u5);

\node[fill=white] at (1,-1) {(b) A \rainbowpath };

\end{tikzpicture}%
\caption{\label{fig:colorfulpath:1}
An example of \colorfulpath problem in temporal graphs.}
\end{figure}

\begin{lemma}
\label{lemma:colorfulpath:1}
Problem \colorfulpath is \np-complete.
\end{lemma}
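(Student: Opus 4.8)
The plan is to give a polynomial-time reduction from the \colorfulpath problem in static graphs, which is \np-complete by~\cite{Chen2011,Uchizawa2013} once combined with the observation (see~\cite{KowalikL2016}) that the all-vertices-distinct variant is computationally equivalent to the internal-vertices-distinct variant used in those references. The construction mirrors the one in Lemma~\ref{lemma:temppath:1}: given a static instance $\langle G=(V,E),\, c:V\to[k]\rangle$, I would build a temporal graph $G^{\tau}=(V^{\tau},E^{\tau})$ with $V^{\tau}=V$, keeping the same coloring $c$, and with $E^{\tau}=\bigcup_{i\in[k-1]}E^{i}$, where $E^{i}=\{(u,v,i):(u,v)\in E\}$. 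In words, every static edge is duplicated across all timestamps in $[k-1]$, so that an arbitrary ordering of at most $k-1$ edges can be realized as a strictly increasing temporal sequence. This is clearly computable in polynomial time, and membership in \np\ is omitted as noted earlier.

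The key claim to verify is that $G$ admits a colorful path of length $k-1$ if and only if $G^{\tau}$ admits a colorful temporal path of length $k-1$. For the forward direction, I would take a colorful path $P=u_{1}e_{1}u_{2}\cdots e_{k-1}u_{k}$ in $G$ and replace each edge $e_{i}=(u_{i},u_{i+1})$ by the temporal edge $e_{i}^{*}=(u_{i},u_{i+1},i)$; since the timestamps $1,2,\dots,k-1$ are strictly increasing and the vertex sequence is untouched, the result is a temporal path whose $k$ vertices carry the same pairwise-distinct colors, hence a colorful temporal path. For the converse, I would take a colorful temporal path $P^{\tau}$ of length $k-1$ in $G^{\tau}$, strip the timestamp from each temporal edge to obtain $e_{i}=(u_{i},u_{i+1})$ (which lies in $E$ by construction), and note that the resulting walk has no repeated vertex (inherited from $P^{\tau}$) and distinct vertex colors, so it is a colorful path of length $k-1$ in $G$.

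Because the reduction is both vertex-preserving and color-preserving and introduces only timestamps, there is no serious technical obstacle in the reduction itself; the argument is essentially bookkeeping. The most delicate point is really the \np-hardness of the \emph{source} problem: one must invoke the equivalence of the two static colorful-path variants so that the cited hardness result applies to the all-vertices-distinct version we use here. Once that is in place, the two implications above close the reduction and establish that \colorfulpath is \np-complete.
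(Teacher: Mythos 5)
Your proposal is correct and matches the paper's intended argument exactly: the paper sketches precisely this reduction from static \colorfulpath (replacing each static edge by $k-1$ temporal copies, one per timestamp in $[k-1]$) and explicitly skips the details because the construction parallels Lemma~\ref{lemma:temppath:1}, while also invoking, as you do, the equivalence of the all-vertices-distinct and internal-vertices-distinct static variants via~\cite{KowalikL2016} so that the hardness results of~\cite{Chen2011,Uchizawa2013} apply. Your write-up simply supplies the bookkeeping the paper omits, with no substantive deviation.
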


\subsection{Colorful $(s,d)$-connectivity in temporal graphs (\sdcolorfulpath)}

Given a temporal graph $G^\tau=(V, E^\tau)$, an integer $k \leq n$,
a source vertex $s \in V$, a destination vertex $d \in V$, and a coloring
function $c: V \setminus \{s,d\} \rightarrow [k]$, the \sdcolorfulpath
problem is to decide whether there exists a temporal path $P^\tau$ of length $k+1$
between vertices $s$ and $d$ such that the colors of internal vertices in
$P^\tau$ are different.

\smallskip
The \sdcolorfulpath problem is \np-hard. The proof, obtained by a reduction from
\colorfulpath problem is presented in Lemma~\ref{lemma:sd-rainbow}.

\begin{lemma}
\label{lemma:sd-rainbow}
Problem \sdcolorfulpath is \np-complete.
\end{lemma}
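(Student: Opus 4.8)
The plan is to give a polynomial-time reduction from \colorfulpath, exactly as the statement anticipates. Recall that a \colorfulpath instance asks for a temporal path on $k$ vertices, i.e.\ of length $k-1$, whose $k$ vertex colors are pairwise distinct (hence a bijection onto $[k]$); an \sdcolorfulpath solution is a temporal path on $k+2$ vertices, i.e.\ of length $k+1$, whose $k$ \emph{internal} vertices carry pairwise distinct colors. The counts match: if I attach a fresh source and a fresh destination to a colorful path, the $k$ colored vertices become exactly the $k$ internal vertices of the extended path.

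First I would take an instance $\langle G^\tau=(V,E^\tau), k, c\rangle$ of \colorfulpath with maximum timestamp $t$ and build an \sdcolorfulpath instance $\langle H^\tau=(V_H,E_H^\tau), k, s, d, c_H\rangle$ as follows. Set $V_H = V \cup \{s,d\}$ with two new vertices $s$ and $d$. Shift every original edge up in time, replacing each $(u,v,j)\in E^\tau$ by $(u,v,j+1)$, so the copied edges occupy the timestamp window $\{2,\dots,t+1\}$. Then add an edge $(s,v,1)$ for every $v\in V$ and an edge $(v,d,t+2)$ for every $v\in V$. Keep the coloring on the old vertices, i.e.\ $c_H = c$ on $V = V_H\setminus\{s,d\}$, and leave $k$ unchanged. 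This takes time $\bigO(n+m)$, so the reduction is polynomial (and $k\le |V_H|$ is inherited from $k\le n$).

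Next I would prove equivalence in both directions. For the forward direction, given a colorful temporal path $v_1\cdots v_k$ in $G^\tau$ with strictly increasing edge times $j_1<\dots<j_{k-1}$, I prepend $(s,v_1,1)$ and append $(v_k,d,t+2)$; since $1 < j_1+1 < \dots < j_{k-1}+1 < t+2$, the resulting sequence is strictly time-increasing, so $s\,v_1\cdots v_k\,d$ is a temporal path of length $k+1$ whose internal vertices $v_1,\dots,v_k$ inherit the $k$ distinct colors. For the converse, the crucial structural observation is that in $H^\tau$ the vertex $s$ has no incoming edges and $d$ has no outgoing edges, so any $(s,d)$-path must begin at $s$, end at $d$, and have all $k$ internal vertices in $V$; stripping $s$ and $d$ and undoing the time shift leaves a length-$(k-1)$ temporal path in $G^\tau$ on exactly those internal vertices, which is colorful since the internal colors were pairwise distinct.

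I expect the only real obstacle to be the temporal bookkeeping rather than any combinatorial difficulty: I must ensure the added source and destination edges can always be scheduled first and last without violating the strict-increase requirement. This is precisely why I reserve timestamp $1$ for the source edges and $t+2$ for the destination edges while pushing the original edges into the intermediate window $\{2,\dots,t+1\}$; combined with $s$ and $d$ having one-sided adjacency, this guarantees both that the extended path is always temporally valid and that $s,d$ can never appear as internal vertices. Membership in \np\ is immediate and, per the convention stated earlier, omitted.
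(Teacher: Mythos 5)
Your proposal is correct and follows essentially the same reduction as the paper: add fresh vertices $s$ and $d$, shift every original edge's timestamp by one, attach $(s,v,1)$ and $(v,d,t+2)$ for all $v\in V$, and argue equivalence in both directions. The only (cosmetic) difference is parameter bookkeeping --- the paper also assigns fresh colors $k+1,k+2$ to $s,d$ and sets $k'=k+2$, whereas you keep $k$ and leave $s,d$ uncolored, which actually matches the stated definition of \sdcolorfulpath (with $c:V\setminus\{s,d\}\to[k]$ and path length $k+1$) at least as faithfully, and your explicit observation that the reserved timestamps $1$ and $t+2$ force $s$ and $d$ to be endpoints makes the converse direction slightly more airtight than the paper's version.
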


\begin{proof}
We proceed via a polynomial-time reduction from \colorfulpath.
Specifically, given an instance $\langle G^\tau, k, c \rangle$ of \colorfulpath, we
construct an instance $\langle G^{\tau'}, k', c' \rangle$ of \sdcolorfulpath, where
$G^{\tau'}=(V',E^{\tau'})$, $V'=V \cup \{s,d\}$, 
$E^{\tau'}=\bigcup_{(u,v,i) \in E^\tau} (u,v,i+1) \cup \bigcup_{u \in V} (s,u,1)
\bigcup_{u \in V} (u,d,t+2)$, $c'(v) = c(v)$ for all $v \in V$, $c(s) = k+1$,
$c(d) = k+2$, and $k' = k+2$. Here, $t$ is  the maximum timestamp in
$G^\tau$. 
We claim that there exists a \colorfulpath of length $k-1$ in
$G^\tau$ if and only there exists a \sdcolorfulpath of length $k+1$ between
vertices $s$ and $d$ in $G^{\tau'}$.

Let $P^\tau = v_1 e_1 v_2 \dots e_{k-1} v_{k}$ be a \colorfulpath in $G^\tau$.
We construct a path $P^{\tau'} = s e_0^* v_1 e_1^* \dots e_{k-1}^* v_{k} e_{k}^*d$ 
such that $e_0^*=(s,v_1,1)$, $e_{k+1}^*=(v_{k+1},d,t+2)$ and
$e_i^*=\{(u,v,j+1): e_i=(u,v,j)\}$ for all $i \in [k]$. Since the colors of $v_i
\in V(P^\tau)$ are all different, the vertices in path $P^{\tau'}$ will also
have different colors. So $P^{\tau'}$ is a colorful path of length $k+1$ between
vertices $s$ and $d$.

Let $P^{\tau'}=s e_0^* v_1 e_1^* \dots e_{k-1}^* v_k e_k^* d$ be a temporal path of
length $k+1$ in $G^{\tau'}$. We construct a path $P^\tau= v_1 e_1 \dots e_{k-1}
v_{k}$ such that $e_i=\{(u,v,j-1): e_i^*=(u,v,j)\}$ for all $i \in [k-1]$. Since
the colors of vertices in $P^{\tau'}$ are all different which means the colors
of vertices in $P^\tau$ will also be different. So $P^\tau$ is a \colorfulpath of
length $k-1$ in $G^\tau$.
\end{proof}

\subsection{Rainbow path problem in temporal graphs (\rainbowpath)}

Given a vertex-colored temporal graph $G^\tau=(V,E^\tau)$, an integer $k$ and a
coloring function $c: V \rightarrow [q]$ with $k < q \leq n$, the
\rainbowpath problem is to decide whether there exists a temporal path $P^\tau$
of length $k-1$ in $G^\tau$ such that the vertex colors of $P^\tau$ are
different. 

\smallskip
For the \np-hardness, we reduce the \kpath problem in non-temporal graphs to the
\rainbowpath problem in temporal graphs.
The proof is presented in Lemma~\ref{lemma:colorful}.

\begin{lemma}
\label{lemma:colorful}
Problem \rainbowpath is \np-complete.
\end{lemma}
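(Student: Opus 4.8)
The plan is to give a polynomial-time reduction from the \kpath problem in non-temporal graphs, which is \np-complete, combining the temporalization trick from the proof of Lemma~\ref{lemma:temppath:1} with a distinct vertex-coloring. The guiding observation is that if every vertex receives its own color, then the rainbow (all-colors-distinct) requirement on a temporal path becomes automatic, since a path never repeats a vertex; hence detecting a rainbow temporal path collapses to detecting an ordinary temporal path, which in turn encodes a static path.

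Concretely, given an instance $\langle G=(V,E), k\rangle$ of \kpath with $|V|=n$, I would first build a temporal graph exactly as in Lemma~\ref{lemma:temppath:1}: replace each static edge $(u,v)\in E$ by the $k-1$ temporal copies $(u,v,i)$ for $i\in[k-1]$. I would then add one isolated dummy vertex $w$ and define the coloring $c$ to be a bijection from $V\cup\{w\}$ onto $[n+1]$, so the number of colors is $q=n+1$. The dummy vertex serves only to satisfy the definitional constraint $k<q\le n'$ of \rainbowpath, where $n'=n+1$ is the vertex count of the constructed graph: since $k\le n<n+1=q$ we always have $k<q$, which in particular handles the awkward Hamiltonian-path case $k=n$ that a bare distinct coloring on $V$ alone could not accommodate. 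This construction is clearly computable in polynomial time.

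For correctness I would establish the equivalence ``there is a path of length $k-1$ in $G$ iff there is a rainbow temporal path of length $k-1$ in the constructed graph.'' In the forward direction, a static path $u_1 e_1 \cdots e_{k-1} u_k$ lifts to the temporal path $u_1 e_1^* \cdots e_{k-1}^* u_k$ with $e_i^*=(u_i,u_{i+1},i)$, exactly as in Lemma~\ref{lemma:temppath:1}; because the $u_i$ are pairwise distinct and $c$ is injective, their colors are pairwise distinct, so this temporal path is rainbow. In the backward direction, a rainbow temporal path of length $k-1$ has $k$ distinct vertices, none of which can be the isolated $w$; projecting each temporal edge $(u_i,u_{i+1},j)$ back to the static edge $(u_i,u_{i+1})$ then recovers a length-$(k-1)$ path in $G$. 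Note that the rainbow hypothesis is used only in the forward direction; in the backward direction distinctness of colors comes for free from the fact that a path does not repeat vertices.

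The only genuinely delicate point, as opposed to the routine path-lifting, is the bookkeeping around the color-count constraint $k<q\le n'$: one must respect the definitional requirement that there are strictly more colors than path-vertices while staying within $n'$ colors, and the single dummy vertex is the cleanest device for this, since it is inert for path existence. Everything else follows the pattern already established in Lemma~\ref{lemma:temppath:1}, together with membership in \np, which we omit as stated.
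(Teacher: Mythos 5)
Your proof is correct, but it takes a genuinely different route from the paper's. You reduce from \kpath by temporalizing $G$ exactly as in Lemma~\ref{lemma:temppath:1} and making the coloring injective, so that the rainbow condition is vacuously satisfied by any temporal path and \rainbowpath collapses onto \ktemppath on the constructed instances; the isolated dummy vertex cleanly secures the definitional constraint $k < q \le n'$, including the Hamiltonian-type cases $k \in \{n-1,n\}$. The paper also reduces from \kpath but distributes the work in the opposite way: it builds a layered temporal graph with $k+3$ copies of the vertex set plus source/sink gadget vertices $u_{n+1}^1$ and $u_{n+2}^{k+3}$, places the edges of $G$ between consecutive layers with timestamps equal to the layer index, and \emph{reuses} the same $n+2$ colors in every layer (vertex $u_i^j$ gets color $i$). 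There the timestamps force any temporal path to march through the layers, and it is the rainbow constraint that does the real combinatorial work: distinctness of colors forbids using two copies $u_i^j, u_i^{j'}$ of the same original vertex, which is precisely what makes the projected walk in $G$ a simple path. Comparing the two: your argument is shorter, reuses an already-proven lemma, and via the dummy vertex explicitly respects the requirement that the number of colors strictly exceed the number of path vertices --- a point the paper's construction (seeking a rainbow path on $k+3$ vertices with $q=n+2$ colors) silently strains when $k \ge n-1$; the paper's construction, in exchange, shows that hardness does not rest on injectively colored instances alone but persists when every color class has $k+3$ members, i.e., when the rainbow condition is a genuinely binding constraint rather than a vacuous one. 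One cosmetic remark: in your forward direction the rainbow property is \emph{established} (from injectivity of $c$) rather than \emph{used}; as you note, the backward direction needs only the no-repeated-vertex property of temporal paths. Both proofs are valid, and \np-membership is routine in either case.
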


\begin{figure}
\begin{center}
\begin{tikzpicture}[scale=\tikzscale,every node/.style={scale=\tikzscale}]]

\input{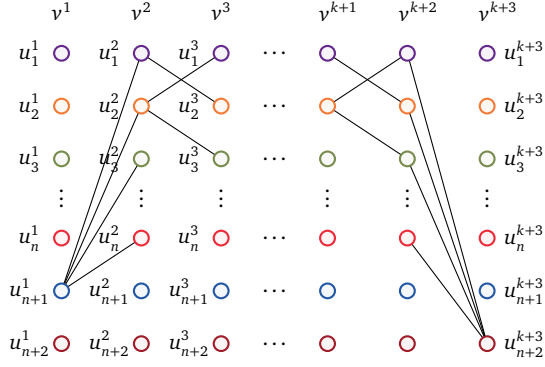}


\node[label] at    (    0,  7.3) {\large $v^1$};

\node[node1] (u11) at ( 0,  6.5) {};
\node[label] at    ( -0.6,  6.5) {\large $u^1_1$};

\node[node2] (u12) at ( 0,  5.5) {};
\node[label] at    ( -0.6,  5.5) {\large $u^1_2$};

\node[node3] (u13) at ( 0,  4.5) {};
\node[label] at    ( -0.6,  4.5) {\large $u^1_3$};

\node[label] at   ( 0,  3.85) {\large $\vdots$};

\node[node4] (u1n) at ( 0,  3) {};
\node[label] at    ( -0.6,  3) {\large $u^1_n$};

\node[node5] (u1n1) at ( 0,  2) {};
\node[label] at     ( -0.6,  2) {\large $u^1_{n+1}$};

\node[node7] (u1n2) at ( 0,  1) {};
\node[label] at     ( -0.6,  1) {\large $u^1_{n+2}$};


\node[label] at    (    1.5,  7.3) {\large $v^2$};

\node[node1] (u21) at ( 1.5,  6.5) {};
\node[label] at       ( 0.9,  6.5) {\large $u^2_1$};

\node[node2] (u22) at ( 1.5,  5.5) {};
\node[label] at       ( 0.9,  5.5) {\large $u^2_2$};

\node[node3] (u23) at ( 1.5,  4.5) {};
\node[label] at       ( 0.9,  4.5) {\large $u^2_3$};

\node[label] at       ( 1.5,  3.85) {\large $\vdots$};

\node[node4] (u2n) at ( 1.5,  3) {};
\node[label] at       ( 0.9,  3) {\large $u^2_n$};

\node[node5] (u2n1) at ( 1.5, 2) {};
\node[label] at        ( 0.9, 2) {\large $u^2_{n+1}$};

\node[node7] (u2n2) at ( 1.5,  1) {};
\node[label] at        ( 0.9,  1) {\large $u^2_{n+2}$};


\node[label] at    (    3.0,  7.3) {\large $v^3$};

\node[node1] (u31) at ( 3.0,  6.5) {};
\node[label] at       ( 2.4,  6.5) {\large $u^3_1$};

\node[node2] (u32) at ( 3.0,  5.5) {};
\node[label] at       ( 2.4,  5.5) {\large $u^3_2$};

\node[node3] (u33) at ( 3.0,  4.5) {};
\node[label] at       ( 2.4,  4.5) {\large $u^3_3$};

\node[label] at       ( 3.0,  3.85) {\large $\vdots$};

\node[node4] (u3n) at ( 3.0,  3) {};
\node[label] at       ( 2.4,  3) {\large $u^3_n$};

\node[node5] (u3n1) at ( 3.0, 2) {};
\node[label] at        ( 2.4, 2) {\large $u^3_{n+1}$};

\node[node7] (u3n2) at ( 3.0,  1) {};
\node[label] at        ( 2.4,  1) {\large $u^3_{n+2}$};


\node[label] at ( 4.0,  6.5) {\large $\ldots$};
\node[label] at ( 4.0,  5.5) {\large $\ldots$};
\node[label] at ( 4.0,  4.5) {\large $\ldots$};
\node[label] at ( 4.0,  3) {\large $\ldots$};
\node[label] at ( 4.0,  2) {\large $\ldots$};
\node[label] at ( 4.0,  1) {\large $\ldots$};


\node[label] at       (  5.2,  7.3) {\large $v^{k+1}$};

\node[node1] (uk11) at ( 5.0,  6.5) {};
\node[node2] (uk12) at ( 5.0,  5.5) {};
\node[node3] (uk13) at ( 5.0,  4.5) {};
\node[label] at        ( 5.0,  3.85) {\large $\vdots$};
\node[node4] (uk1n) at ( 5.0,  3) {};
\node[node5] (uk1n1) at ( 5.0, 2) {};
\node[node7] (uk1n2) at ( 5.0,  1) {};


\node[label] at       (  6.7,  7.3) {\large $v^{k+2}$};

\node[node1] (uk21) at ( 6.5,  6.5) {};
\node[node2] (uk22) at ( 6.5,  5.5) {};
\node[node3] (uk23) at ( 6.5,  4.5) {};
\node[label] at        ( 6.5,  3.85) {\large $\vdots$};
\node[node4] (uk2n) at ( 6.5,  3) {};
\node[node5] (uk2n1) at ( 6.5, 2) {};
\node[node7] (uk2n2) at ( 6.5,  1) {};


\node[label] at       (  8.2,  7.3) {\large $v^{k+3}$};

\node[node1] (uk31) at ( 8.0,  6.5) {};
\node[label] at        ( 8.7,  6.5) {\large $u^{k+3}_1$};

\node[node2] (uk32) at ( 8.0,  5.5) {};
\node[label] at        ( 8.7,  5.5) {\large $u^{k+3}_2$};

\node[node3] (uk33) at ( 8.0,  4.5) {};
\node[label] at        ( 8.7,  4.5) {\large $u^{k+3}_3$};

\node[label] at        ( 8.0,  3.85) {\large $\vdots$};

\node[node4] (uk3n) at ( 8.0,  3) {};
\node[label] at        ( 8.7,  3) {\large $u^{k+3}_n$};

\node[node5] (uk3n1) at ( 8.0, 2) {};
\node[label] at         ( 8.7, 2) {\large $u^{k+3}_{n+1}$};

\node[node7] (uk3n2) at ( 8.0,  1) {};
\node[label] at         ( 8.7,  1) {\large $u^{k+3}_{n+2}$};


\draw (u1n1) edge[-] (u21);
\draw (u1n1) edge[-] (u22);
\draw (u1n1) edge[-] (u23);
\draw (u1n1) edge[-] (u2n);

\draw (u21) edge[-] (u32);
\draw (u22) edge[-] (u31);
\draw (u22) edge[-] (u33);

\draw (uk11) edge[-] (uk22);
\draw (uk12) edge[-] (uk21);
\draw (uk12) edge[-] (uk23);

\draw (uk21) edge[-] (uk3n2);
\draw (uk22) edge[-] (uk3n2);
\draw (uk23) edge[-] (uk3n2);
\draw (uk2n) edge[-] (uk3n2);

\end{tikzpicture}
\caption{\label{fig:rainbowpath:1}
Construction of $G^{\tau}$.}
\end{center}
\end{figure}

\begin{proof}

We reduce the \kpath problem in general graphs to the \rainbowpath problem in 
temporal graphs. Given an instance of a $k$-path problem on a graph $G=(V,E)$, 
we construct a vertex-colored temporal graph
$G^{\tau}=(V^{\tau}, E^{\tau})$
such that the vertex set
$V^{\tau}= \bigcup_{i=1}^{k+3}V^i$,
where $V^i = \bigcup_{j=1}^{n+2} u_j^i$ for all $i \in [k+3]$,
%
and the edge set $E^{\tau}=\bigcup_{i=1}^{k+2} E^i$ where
for all $i \in \{2,3,\dots,k+1\}$ and for each edge $(u_x, u_y) \in E$ we add edges
$(u_x^i, u_y^{i+1}, i), (u_x^{i+1}, u_y^i, i)$ to $E^{i}$.
Additionally, add edges
$(u_{n+1}^1, u_{j}^2, 1)$ and
$(u_{j}^{k+2}, u_{n+2}^{k+3}, k+2)$ for all $j \in [n]$.
The edges between $V^i$ and $V^{i+1}$ are the temporal edges in $E^{i}$ at time
instance $i$ for all $i \in [k+2]$. Finally, for all $i \in [n+2]$ $j \in [k+3]$
the vertex $u_i^j$ is assigned color $i$. The construction of graph $G^\tau$ is
illustrated in Figure~\ref{fig:rainbowpath:1}.

We claim that there exists a path of length $k$ in $G$ if and only if there
exists a \rainbowpath of length $k+2$ in $G^{\tau}$. So the \rainbowpath
problem in temporal graphs is at least as hard as the \kpath problem in graphs.

Let us assume that there exists a path
$P=u_1 e_1 u_2 \dots e_k u_{k+1}$
of length $k$ in $G$. We construct a temporal path
$P^\tau = u_1^* e_1^* u_2^* \dots e_{k+2}^* u_{k+3}^*$ in $G^\tau$ as follows.
For all $i \in \{2,\dots,k+2\}$ we have $u_{i-1} \in P$ so
$u_i^*=u_{i-1}^{i}$ and
$u_1^*=u_{n+1}^1$, $u_{k+3}^*=u_{n+2}^{k+3}$.
%
%
The edges in $P^\tau$ are constructed as follows:
$e_i^* = (u_{i-1}^{i},u_{i}^{i+1},i+1)$ for all $i \in \{2,\dots,k+2\}$
where the original edge in the path $P$ is $e_i=(u_{i-1}, u_{i})$ (such a
temporal edge always exists by construction of $G^\tau$). Finally,
$e_1^* = (u_{n+1}^1,u_{1}^2,1)$ where $u_1$ is the first vertex in $P$,
$e_{k+2}^* = (u_{k+1}^{k+2}, u_{n+2}^{k+3}, k+2)$ where $u_{k+1}$ is the end
vertex of path $P$.
Since $P$ is a path there exists no two vertices $u_i, u_j \in P$ such that
$u_i=u_j$, so no two vertices $u_i^*, u_j^* \in P^\tau$ have the same color.
Consequently, $P^\tau$ is a \rainbowpath of length $k+2$ in $G^\tau$.

Let us assume that
$P^\tau = u_1^1 e_1^1 u_2^2 \dots e_{k+2}^{k+2} u_{k+1}^{k+2}$ is a
\rainbowpath of length $k+2$ in $G^\tau$. We construct a path
$P=u_1^* e_1^* u_2^* \dots, e_k^* u_{k+1}^*$
such that for all $i \in \{2,\dots,k+2\}$ $u_{i-1}^* = u_{i}$ such that
$u_{i+1}{i+1} \in P^\tau$ and
for each edge
$e_{i}^{i} = (u_i^i, u_{i+1}^{i+1}, i) \in P^\tau$
such that $i \in \{2, \dots, k+2\}$ we replace
$e_{i-1}^* = (u_i, u_{i+1})$ in $P$.
Since $P^\tau$ is a rainbow path there exists no two vertices in $P^\tau$ that
have the same color, implying that there exist no two vertices $v_i, v_j \in P$
such that $v_i = v_j$. Thus, $P$ is a path of length $k$ in $G$ and we are done.
\end{proof}

\subsection{Temporal path problem with edge constraints (\ectemppath)}
Given a temporal graph $G^\tau=(V,E^\tau)$, an integer $k \leq n $, a tuple
$T=(j_1,\dots,j_{k-1})$ of timestamps such that 
$j_i < j_{i+1}$ for all $i \in [k-2]$, the problem asks us to 
decide the existence of a temporal path $P^\tau$ such that the timestamps in
$P^\tau$ agree with the timestamps in $T$ in the specified order.

\smallskip
To establish the \np-hardness we reduce the \kpath problem in non-temporal
graphs to the \ectemppath problem in temporal graphs. The proof of the reduction is
presented in Lemma~\ref{lemma:ectemppath:1}.

\begin{lemma}
\label{lemma:ectemppath:1}
Problem \ectemppath is \np-complete.
\end{lemma}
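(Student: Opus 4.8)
The plan is to reduce from the \kpath problem on non-temporal graphs, reusing the time-expansion construction of Lemma~\ref{lemma:temppath:1}. Given an instance $\langle G=(V,E),k\rangle$ of \kpath, I would build the temporal graph $G^\tau=(V^\tau,E^\tau)$ with $V^\tau=V$ and $E^\tau=\bigcup_{i\in[k-1]}E^i$, where $E^i=\bigcup_{(u,v)\in E}(u,v,i)$; that is, every static edge is replicated once at each timestamp $1,\dots,k-1$. I then form the \ectemppath instance $\langle G^\tau,k,T\rangle$ with the natural edge-constraint tuple $T=(1,2,\dots,k-1)$, which is strictly increasing and hence a valid input, and with $|T|=k-1$. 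The construction is clearly polynomial, since $|E^\tau|=(k-1)|E|$.

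For the forward direction, suppose $G$ has a path of length at least $k-1$; restricting to its first $k$ vertices we may assume a path $P=u_1e_1u_2\cdots e_{k-1}u_k$ of length exactly $k-1$. I would map each $e_i=(u_i,u_{i+1})$ to the temporal edge $e_i^*=(u_i,u_{i+1},i)$, which exists in $E^\tau$ by construction. The resulting sequence $P^\tau=u_1e_1^*u_2\cdots e_{k-1}^*u_k$ has strictly increasing timestamps $1<2<\cdots<k-1$, repeats no vertex (because $P$ does not), and matches $T$ edge by edge, so it is a feasible solution to the \ectemppath instance.

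For the converse, any temporal path $P^\tau$ whose timestamps agree with $T$ in order has exactly $k-1$ edges, the $i$-th carrying timestamp $i$. Dropping the timestamps projects each temporal edge $(u,v,i)$ onto the static edge $(u,v)\in E$, so $P^\tau$ maps to a walk of length $k-1$ in $G$; since $P^\tau$ is a temporal path it repeats no vertex, and the projection is therefore a genuine path of length $k-1$ in $G$. This establishes the equivalence of the two instances.

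The reduction is almost identical to Lemma~\ref{lemma:temppath:1}, so I do not anticipate a substantive obstacle; the only point requiring care is the length convention, namely reconciling the ``at least $k-1$'' phrasing of \kpath with the fact that $T$ pins the path length to exactly $k-1$. This is handled by the standard observation that a prefix of a longer simple path is again a simple path of the desired length. Membership in \np follows as in the earlier lemmas by guessing the path and verifying in polynomial time that its edges realize the timestamps of $T$ in order, so I omit it.
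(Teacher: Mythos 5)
Your proposal is correct and follows essentially the same route as the paper's proof: a polynomial-time reduction from \kpath that replicates every static edge at consecutive timestamps and fixes $T$ to the increasing sequence $(1,2,\dots)$, mapping the $i$-th path edge to the copy with timestamp $i$. The only differences are cosmetic and in your favor: you use $k-1$ timestamps, consistent with the problem definition's tuple $T=(j_1,\dots,j_{k-1})$ (the paper's proof uses $\{1,\dots,k\}$, a small internal inconsistency), and you spell out the converse direction and the ``at least $k-1$'' length convention, which the paper dismisses as analogous.
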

\begin{proof}
We proceed by a polynomial-time reduction from \kpath 
(in non-temporal graphs) to \temppath with edge constraints.

Given an instance $\langle G=(V,E), k \rangle$ of \kpath, 
we construct an instance of \ectemppath problem such that 
$G^\tau = (V, E^\tau)$, where $V\tau = V$ and $E^\tau = \bigcup_{(u,v) \in
E, i \in [k]} (u, v, i)$, and the set time constraints on edges $T=\{1,2,\dots,k\}$. 
We claim that there exists a solution for the \kpath problem in $G$ if and only if
there exists a solution for \ectemppath problem in $G^\tau$.

Let $P=\{v_1,e_1,v_2,\dots,e_k,v_{k+1}\}$ be a solution for the \kpath problem
in $G$. We construct a temporal path
$P^\tau=\{v_1,e_1^{'},v_2,\dots,e_k^{'},v_{k+1}\}$ such that
$e_i^{'}=(v_{i},v_{i+1},i)$ for each $i \in [k]$. By construction, we have that
such an edge always exists in $E^\tau$. The proof in other direction is analogous
to the previous.
The claim follows.
\end{proof}

\subsection{Path motif problem with edge constraints (\ecpathmotif)}
Given a vertex-colored temporal graph $G^\tau=(V,E^\tau)$, an integer $k \leq n$, 
a multiset $M$ of colors and a tuple $T=(j_1,\dots,j_{k-1})$ of timestamps
such that $j_i < j_{i+1}$ for all $i \in [k-2]$. The problem asks us to decide
the existence of a temporal path $P^\tau$ such that the timestamps of $P^\tau$
agree to that of timestamps in $T$ in the specified order and the vertex colors
of $P^\tau$ agree to that of colors in $M$. 

\smallskip
A reduction from \colorfulpath problem in non-temporal graphs to \ecpathmotif
problem in temporal graphs is straightforward. The
construction is similar to that of Lemma~\ref{lemma:temppath:1}, so we omit a
detailed proof.

\begin{lemma}
\label{lemma:ecpathmotif:1}
Problem \ecpathmotif is \np-complete.
\end{lemma}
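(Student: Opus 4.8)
The plan is to give a polynomial-time reduction from the \colorfulpath problem in \emph{non-temporal} graphs, which was established above to be \np-complete, to \ecpathmotif. The idea mirrors the construction in Lemma~\ref{lemma:temppath:1}: starting from a static instance I replicate every edge across all the timestamps in $[k-1]$, then use the ordered edge-constraint tuple $T$ to pin down the length of any solution while using the motif $M$ to enforce that the vertex colors are all distinct. Since \colorfulpath already supplies a coloring, it is the natural source problem (as opposed to plain \kpath), matching the remark preceding the statement.

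Concretely, given an instance $\langle G=(V,E), k, c \rangle$ of static \colorfulpath with $c\colon V \to [k]$, I would build the temporal graph $G^\tau = (V, E^\tau)$ with $E^\tau = \bigcup_{(u,v)\in E,\, i \in [k-1]} (u,v,i)$, keep the same coloring $c$, set the motif to $M = [k]$ (each of the $k$ colors exactly once), and set the timestamp tuple $T = (1,2,\dots,k-1)$. This construction is clearly computable in time polynomial in $|V|$, $|E|$, and $k$, since it produces at most $(k-1)|E|$ temporal edges.

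For correctness, in the forward direction I would take a colorful path $P = v_1 e_1 \dots e_{k-1} v_{k}$ in $G$ and replace each $e_i = (v_i, v_{i+1})$ by the temporal edge $(v_i, v_{i+1}, i)$, which always exists by construction. The resulting temporal path then reads timestamps $(1,\dots,k-1)$, agreeing with $T$ in order, and because $P$ is colorful its $k$ vertices carry the distinct color set $[k]$, so they agree with $M$. In the converse direction, any solution $P^\tau$ to \ecpathmotif has timestamps exactly $(1,\dots,k-1)$, hence exactly $k-1$ edges and $k$ vertices; dropping the timestamps yields a (static) path in $G$, and agreement of its colors with $M=[k]$ forces all $k$ vertices to bear distinct colors, i.e., the path is colorful.

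The only point requiring care — and the step I expect to be the main, if mild, obstacle — is arguing that the two constraints together lock a solution into exactly the intended shape: the ordered tuple $T$ of length $k-1$ fixes the path length to $k-1$ (so we cannot satisfy the instance with a shorter or longer path), and matching the multiset $M=[k]$ is equivalent to the $k$ vertex colors being pairwise distinct. Once these two equivalences are spelled out, \np-hardness follows; membership in \np is immediate, since a claimed temporal path can be verified against $T$ and $M$ in polynomial time, so \ecpathmotif is \np-complete.
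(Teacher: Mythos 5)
Your proposal is correct and follows essentially the same route as the paper, which explicitly states that Lemma~\ref{lemma:ecpathmotif:1} is proved by reducing static \colorfulpath to \ecpathmotif via a construction like that of Lemma~\ref{lemma:temppath:1} (replicating each static edge across timestamps) and omits the details. You have simply filled in those omitted details --- the choice $T=(1,\dots,k-1)$, $M=[k]$, and the two-directional correspondence between colorful static paths and edge-constrained temporal paths --- and they check out.
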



\subsection{Path motif problem with vertex constraints (\vcpathmotif)}
Given a vertex-colored temporal graph $G^\tau=(V,E^\tau)$, an integer $k \leq n$, 
a tuple $M$ of colors. The problem asks us to find a temporal path
of length $k-1$ such that the vertex colors of the path match to that of colors in
$M$ in the specified order. 

\smallskip
A reduction from the \ktemppath problem to \vcpathmotif problem with multiset
$M=\{1^k\}$ is straightforward. The hardness follows.

\begin{lemma}
\label{lemma:vcpathmotif:1}
Problem \vcpathmotif is \np-complete.
\end{lemma}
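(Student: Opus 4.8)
The plan is to mirror the reduction used for \pathmotif in Lemma~\ref{lemma:pathmotif-np}, reducing from \ktemppath, which is \np-complete by Lemma~\ref{lemma:temppath:1}. Membership in \np is routine and, following the convention stated earlier, I would omit it.

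First I would take an arbitrary instance $\langle G^\tau=(V,E^\tau), k\rangle$ of \ktemppath and build a vertex-colored temporal graph with the same vertex and edge sets, equipping it with the constant coloring $c:V\to\{1\}$ and the ordered query tuple $M=(1,1,\dots,1)$ consisting of $k$ copies of the color $1$. This construction is clearly computable in polynomial time, and the length-$(k-1)$ requirement of \vcpathmotif matches the length sought in the \ktemppath instance.

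The crux of the argument is the observation that, under a monochromatic coloring, the ordering requirement imposed by \vcpathmotif becomes vacuous: every temporal path on $k$ vertices reads off the same color sequence $(1,\dots,1)$, which matches $M$ in the specified order automatically. Hence, in the forward direction, any temporal path of length $k-1$ in $G^\tau$ has exactly $k$ vertices, all colored $1$, so its color sequence agrees with $M$ in order and it is a valid solution to the constructed \vcpathmotif instance. In the backward direction, any solution to \vcpathmotif is by definition a temporal path of length $k-1$, which is therefore immediately a solution to the original \ktemppath instance.

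I expect no genuine obstacle here; this is why the reduction is described as straightforward. The only feature distinguishing \vcpathmotif from \pathmotif is the prescribed \emph{order} of colors, and selecting a single color annihilates that constraint so that the ordered-tuple problem collapses to the unordered one. The one point to handle carefully is the bookkeeping on lengths—ensuring that a temporal path of length $k-1$ has precisely $k$ vertices so that the tuple $M$ of length $k$ is the correct size—but this correspondence is immediate from the definition of path length.
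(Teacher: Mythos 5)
Your proposal is correct and is exactly the reduction the paper has in mind: the paper proves Lemma~\ref{lemma:vcpathmotif:1} by the one-line remark that reducing \ktemppath to \vcpathmotif with the monochromatic coloring and $M=\{1^k\}$ is straightforward, and your writeup simply spells out that same construction, correctly noting that a constant coloring makes the order constraint vacuous. No gaps.
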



\subsection{Colorful path problem with vertex constraints (\vccolorfulpath)}
Given a temporal graph $G^\tau=(V,E^\tau)$, an integer $k \leq n$, a coloring
function $c:V \rightarrow [k]$ and a tuple $M$ of $k$ different colors. The
problem asks us to decide the existence of a temporal path $P^\tau$ of length
$k-1$ such that the vertex colors of $P^\tau$ agree to that of $M$ in the
order specified. 

\smallskip
The \vccolorfulpath problem is solvable in polynomial time. We present a
dynamic programming algorithm in Section~\ref{sec:algorithm-ext:vccolorfulpath}.

\section{Algebraic algorithm for temporal paths}
\label{sec:algorithm}

We now present an algorithm for the \ktemppath and \pathmotif problems.
Our algorithm relies on a \emph{polynomial encoding} of temporal walks and the
\emph{algebraic fingerprinting}
technique~\cite{BjorklundKK2016,koutis-icalp08,koutis-williams-icalp09,williams-ipl}.
The algorithm is presented in three steps:
\squishlist
\item[($i$)] a dynamic programming recursion to generate a polynomial
encoding of temporal walks; 
\item[($ii$)] an algebraic algorithm to detect the existence of a
multilinear monomial in the polynomial generated using the recursion in ($i$) ---
furthermore, we prove that the existence of a multilinear monomial implies the existence
of a temporal path; and
\item[($iii$)] finally, an extension of the approach to detect temporal paths with
additional color constraints via constrained multilinear detection.
\squishend

\smallskip
Let us begin our discussion with the concept of polynomial encoding of walks and
continue to polynomial encoding of temporal walks.

\smallskip
Let $P$ be a multivariate polynomial such that every monomial $M$ is of the form
\[
x_1^{d_1} x_2^{d_2} \dots x_q^{d_q} y_1^{f_1} y_2^{f_2} \dots y_r^{f_r}.
\]
A monomial is \emph{multilinear} if $d_i \in \{0,1\}$ for all $i \in [q]$, 
and $f_j \in \{0,1\}$ for all $j \in [r]$. A monomial is \emph{$x$-multilinear} if
$d_i \in \{0,1\}$ for all $i \in [q]$ --- in other words, we do not take into account the
degrees of the $y$-variables. The \emph{degree} of a monomial $M$ is the sum of the
degrees of all its variables. 
More restrictively, the \emph{$x$-degree} of $M$ is the sum of the degrees of all its $x$-variables.

\subsection{Monomial encoding of a walk}
Let $W = v_1 e_1 v_2 \dots e_{k-1} v_k$ be walk in a temporal
graph $G = (V, E)$ for any integer $k > 1$. 
Let $\{x_{v_1},\dots,x_{v_n}\}$ be a set of
variables corresponding to the vertices in $V = \{ v_1, \ldots, v_n \}$ and 
let $\{y_{uv,\ell}: (u,v) \in E, \ell \in [k]\}$ be a set of variables such that $y_{uv,\ell}$ 
corresponds to an edge $(u,v) \in E$ 
that appears at position $\ell$ in~$W$.
A monomial encoding of $W$ is represented as
\[
x_{v_1} y_{v_1v_2,1} x_{v_2} y_{v_1v_2,2} \dots y_{v_{k-1}v_k,k-1}, x_{v_k}.
\]

\smallskip
It is easy to see that this encoding of $W$ is multilinear if and only if $W$ is a path
for in a path no vertex repeats.
Moreover, for the later discussion, the reader should expect a convention where 
$x$-variables correspond to vertices and $y$-variables to edges of a graph in question.

\subsection{Generating polynomial for walks}

\smallskip
Consider an example illustrated in Figure~\ref{fig:walk-encoding}.
Here, let $P_{v_2,\ell-1}$, $P_{v_3,\ell-1}$, and $P_{v_4,\ell-1}$ denote
the polynomial encoding of all walks of length $\ell-1$ ending at $v_2$, $v_3$, and $v_4$,
respectively. 
We construct the polynomial encoding denoting all walks with
length $\ell$ and ending at vertex $v_1$ by writing
\begin{eqnarray*}
P_{v_1, \ell}  = &x_{v_1} y_{v_1v_2,\ell-1} \, P_{\ell-1, v_2} +
                 x_{v_1} y_{v_1v_3,\ell-1} \, P_{\ell-1, v_3} +
                 x_{v_1} y_{v_1v_4,\ell-1} P_{\ell-1, v_4}.
\end{eqnarray*}
The intuition is that we can construct a walk of length $\ell$ for the vertex
$v_1$ using the walks of length $\ell-1$ for its neighbors in $N(v_1) = \{ v_2,v_3,v_4 \}$.
Further, the intuition is that such a setup is appealing algorithmically for an
application of dynamic programming as we will see later on.

\begin{figure}[t]
\centering
\begin{tikzpicture}[scale=\tikzscale,every node/.style={scale=\tikzscale}]]

\input{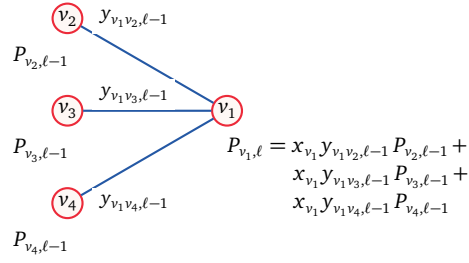}

\node[exnode] (v2) at (  -1,  1.75) {$v_2$};
\node[exnode] (v3) at (  -1,  0) {$v_3$};
\node[exnode] (v4) at (  -1, -1.75) {$v_4$};
\node[exnode] (v1) at ( 2,    0) {$v_1$};

\node[fill=white] at ( -1.5,    1) {\large $P_{v_2, \ell-1}$};
\node[fill=white] at ( -1.5,    -0.75) {\large $P_{v_3, \ell-1}$};
\node[fill=white] at ( -1.5,   -2.5) {\large $P_{v_4, \ell-1}$};

\node[fill=white, text width=5cm] at (  4.5,    -1.25) {\large
				$P_{v_1,\ell}  =  x_{v_1} y_{v_1v_2,\ell-1} \, P_{v_2,\ell-1} \,+ $\\
        $~~~~~~~~~~ x_{v_1} y_{v_1v_3,\ell-1} \, P_{v_3,\ell-1} \,+ \notag$\\
        $~~~~~~~~~~ x_{v_1} y_{v_1v_4,\ell-1} \, P_{v_4,\ell-1} $};

\draw (v1) edge[-, exedge] (v2);
\draw (v1) edge[-, exedge] (v3);
\draw (v1) edge[-, exedge] (v4);

\node [fill=white] at ( 0.25,   1.75) {$y_{v_1v_2,\ell-1}$};
\node [fill=white] at ( 0.25,   0.3) {$y_{v_1v_3,\ell-1}$};
\node [fill=white] at ( 0.25,   -1.75) {$y_{v_1v_4,\ell-1}$};

\end{tikzpicture}%
\caption{\label{fig:walk-encoding}
An illustration of the polynomial encoding of walks.}
\end{figure}

\smallskip
In general, we write $P_{u,\ell}$ to denote the polynomial encoding of 
all walks of length $\ell-1$ ending at vertex $u$.
As such, a generating function $P_{u,\ell}$ for each $u \in V$ and
$\ell \in Z_{+}$ can be written as
\[
P_{u,1} = x_u,
\]
for each $u \in V$, and then
\begin{equation}
\label{eq:gen-polynomial:1}
P_{u, \ell} = x_u  \sum_{v \in N(u)} y_{uv,\ell-1}  P_{v, \ell-1},
\end{equation}
for each $u \in V$ and $\ell \in \{2,\dots,k\}$.

\smallskip
Finally, we denote the polynomial $\mathcal{P}_\ell = \sum_{u \in V} P_{u,\ell}$ for each
$\ell \in [k]$. In other words, $\mathcal{P}_\ell$ is the polynomial
encoding of all walks of length $\ell-1$.

\smallskip
We demonstrate the polynomial encoding of walks in a non-temporal graph using a
toy example.
Let $G=(V,E)$ be a graph with vertex set $V=\{v_1,v_2,v_3\}$ and edge
set $E=\{(v_1,v_2), (v_2,v_3)\}$. Figure~\ref{fig:walk-encoding-demo:1}
(a), (b) and (c) illustrate the encoding of
all walks with length zero, one and two, respectively.
Observe that monomials that correspond to paths are multilinear
and they are highlighted in bold.

\begin{figure*}[t]
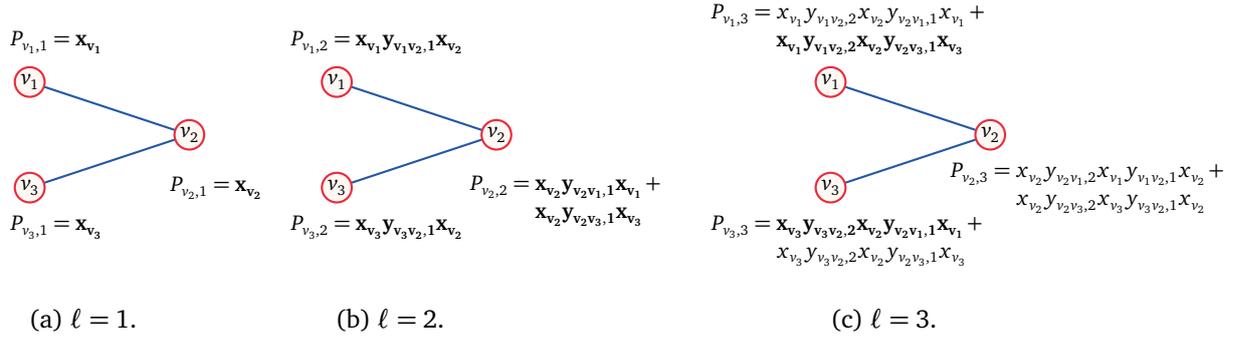

\centering
\setlength{\tabcolsep}{0.1cm}
\begin{tabular}{c c c}
  \begin{tikzpicture}[scale=\tikzscale,every node/.style={scale=\tikzscale}]]

\input{tikz-defs}

\node[exnode] (v1) at (  -1,  1) {$v_1$};
\node[exnode] (v2) at ( 2,    0) {$v_2$};
\node[exnode] (v3) at (  -1, -1) {$v_3$};

\node[fill=white] at ( -0.5,    1.75) {\large 
        $P_{v_1, 1} = \mathbf{x_{v_1}}$};
\node[fill=white] at ( -0.5,   -1.75) {\large 
        $P_{v_3, 1} = \mathbf{x_{v_3}}$};

\node[fill=white] at (  2.5,    -1) {\large
				$P_{v_2, 1} = \mathbf{x_{v_2}}$};

\draw (v1) edge[-, exedge] (v2);
\draw (v3) edge[-, exedge] (v2);


\node[fill=white] at (0, -3.5) {\Large (a) $\ell = 1$.};
\end{tikzpicture}
  \begin{tikzpicture}[scale=\tikzscale,every node/.style={scale=\tikzscale}]]

\input{tikz-defs}

\node[exnode] (v1) at (  -1,  1) {$v_1$};
\node[exnode] (v2) at ( 2,    0) {$v_2$};
\node[exnode] (v3) at (  -1, -1) {$v_3$};

\node[fill=white] at ( -0.25,    1.75) {\large 
        $P_{v_1, 2} = \mathbf{x_{v_1} y_{v_1v_2,1} x_{v_2}}$};
\node[fill=white] at ( -0.25,   -1.75) {\large 
        $P_{v_3, 2} = \mathbf{x_{v_3} y_{v_3v_2,1} x_{v_2}}$};

\node[fill=white, text width=4cm] at (  3.5,    -1.25) {\large
				$P_{v_2,2}  =  \mathbf{x_{v_2} y_{v_2v_1,1} x_{v_1}}\,+ $\\
        $~~~~~~~~~~ \mathbf{x_{v_2} y_{v_2v_3,1} x_{v_3}}$};

\draw (v1) edge[-, exedge] (v2);
\draw (v3) edge[-, exedge] (v2);


\node[fill=white] at (0, -3.5) {\Large (b) $\ell = 2$.};
\end{tikzpicture}
  \begin{tikzpicture}[scale=\tikzscale,every node/.style={scale=\tikzscale}]]

\input{tikz-defs}

\node[exnode] (v1) at (  -1,  1) {$v_1$};
\node[exnode] (v2) at ( 2,    0) {$v_2$};
\node[exnode] (v3) at (  -1, -1) {$v_3$};

\node[fill=white, text width=5.5cm] at (  -0.5,    2) {\large
				$P_{v_1,3}  =  x_{v_1} y_{v_1v_2,2} x_{v_2} y_{v_2v_1,1} x_{v_1}\,+ $\\
        $~~~~~~~~~~ \mathbf{x_{v_1} y_{v_1v_2,2} x_{v_2} y_{v_2v_3,1} x_{v_3}}$};

\node[fill=white, text width=5.5cm] at (  4,    -1) {\large
        $P_{v_2, 3} = x_{v_2} y_{v_2v_1,2} x_{v_1} y_{v_1v_2,1} x_{v_2}\,+ $\\
        $~~~~~~~~~~ x_{v_2} y_{v_2v_3,2} x_{v_3} y_{v_3v_2,1} x_{v_2}$};

\node[fill=white, text width=5.5cm] at ( -0.5,  -2) {\large 
        $P_{v_3, 3} = \mathbf{x_{v_3} y_{v_3v_2,2} x_{v_2} y_{v_2v_1,1} x_{v_1}}\,+ $\\
        $~~~~~~~~~~ x_{v_3} y_{v_3v_2,2} x_{v_2} y_{v_2v_3,1} x_{v_3}$};

\draw (v1) edge[-, exedge] (v2);
\draw (v3) edge[-, exedge] (v2);


\node[fill=white] at (0, -3.5) {\Large (c) $\ell = 3$.};
\end{tikzpicture}%
\end{tabular}
\caption{\label{fig:walk-encoding-demo:1}%
An example demonstrating the polynomial encoding of
walks of length zero (a), one (b) and two (c) in a non-temporal
graph. Observe that monomials correspond to paths are multilinear
(highlighted in bold).%
}
\end{figure*}

\subsection{Monomial encoding of a temporal walk}

Let $W^\tau = v_1 e_1 v_2 \dots e_{k-1} v_k$ be a temporal walk in a temporal
graph $G^\tau = (V, E^\tau)$. Let $\{x_{v_1},\dots,x_{v_n}\}$ be a set of
variables corresponding to the vertices in $V = {v_1,\ldots,v_n}$ and 
let $\{y_{uv,\ell,i}: (u,v,i) \in E^\tau, \ell \in [k]\}$ be a set of variables 
where $y_{uv,\ell,i}$ corresponds to an
edge $(u,v,i) \in E^\tau$
that appears at position $\ell$ in~$W^\tau$.
A monomial encoding of $W^\tau$ is represented as
\[
x_{v_1} y_{v_1v_2,1,i_1} x_{v_2} y_{v_1v_2,2,i_2} \dots y_{v_{k-1}v_k,k-1,i_{k-1}}, x_{v_k},
\]
where $i_{1},\dots,i_{k-1}$ denote the
timestamps on the edges $e_1,\dots,e_{k-1}$, respectively.

\smallskip
We make the following observation regarding the monomial encoding of a temporal walk.

\begin{lemma}
\label{lemma:ktemppath:1}
The monomial encoding of a temporal walk $W^\tau$ is multilinear 
if and only if $W^\tau$ is a temporal path.
\end{lemma}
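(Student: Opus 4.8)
The plan is to exploit the structure of the variable set by analysing the $x$-variables (one per vertex) separately from the $y$-variables (one per edge, position, and timestamp), and to show that multilinearity is controlled entirely by the $x$-part. Writing $W^\tau = v_1 e_1 v_2 \dots e_{k-1} v_k$, the $\ell$-th step of the walk contributes the pair of factors $x_{v_\ell}$ and $y_{v_\ell v_{\ell+1},\ell,i_\ell}$, and the key observation is twofold: the degree of a variable $x_v$ in the monomial equals the number of indices $j$ with $v_j = v$, i.e. the number of visits $W^\tau$ makes to $v$; whereas each $y$-variable that occurs carries a distinct position index $\ell \in \{1,\dots,k-1\}$.

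First I would dispose of the $y$-variables. Because the position component of the factor $y_{v_\ell v_{\ell+1},\ell,i_\ell}$ is exactly $\ell$, and the positions $1,2,\dots,k-1$ are pairwise distinct, no two edge-factors can be the same $y$-variable. Hence every $y$-variable appearing in the monomial has degree exactly one, so the $y$-part is multilinear for every temporal walk. Consequently the monomial is multilinear if and only if its $x$-part is, that is, if and only if every $x_v$ has degree at most one.

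It then remains to connect the $x$-degrees to the definition of a temporal path, which yields both directions at once. If $W^\tau$ is a temporal path then no vertex repeats, so every $x_v$ has degree at most one and the monomial is multilinear. Conversely, if the monomial is multilinear then every $x_v$ has degree at most one, so each vertex is visited at most once; thus $W^\tau$ is a temporal walk without repeated vertices, which is precisely a temporal path.

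I do not anticipate a substantial obstacle; the only delicate point is the bookkeeping showing that the position index $\ell$ is what forces the $y$-variables to be pairwise distinct. This guarantees that $y$-multilinearity is automatic and that the whole equivalence rests solely on the absence of repeated vertices, matching the definition of a temporal path exactly.
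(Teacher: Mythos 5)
Your proof is correct and follows essentially the same route as the paper's: multilinearity is reduced to the $x$-variables via vertex repetition, with both directions handled by the observation that the degree of $x_v$ counts the visits to $v$. Your justification that the $y$-variables are \emph{always} pairwise distinct because of the position index $\ell$ (independently of whether $W^\tau$ is a path) is in fact slightly more careful than the paper's ``it is also evident'' remark, and matches the paper's own later observation that the $y$-variables in the generating function are always distinct.
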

\begin{proof}
Suppose that the temporal walk $W^\tau$ is a temporal path.
By definition, each vertex of $W^\tau$ appears exactly once, so all $x$ terms in its monomial encoding are unique. 
From this, it is also evident that the $y$ terms are unique. 
Thus, the monomial encoding of $W^\tau$ is multilinear.

For the other direction, suppose that $W^\tau$ is not a temporal path.
Because at least one vertex in $W^\tau$ repeats, there is at least one $x$ term in the monomial encoding of $W^\tau$ of degree at least two.
It follows that the encoding is not multilinear, concluding the proof.
\end{proof}

\subsection{Generating polynomial for temporal walks}

In this section, we present a dynamic programming
recursion to generate temporal walks.

\smallskip
Let $P_{u,\ell,i}$ denote the encoding of all walks of length $\ell-1$ ending at
vertex $u$ at latest time $i \in [t]$.

\smallskip
Again, consider the example illustrated in Figure~\ref{fig:poly-encoding:1}, 
where $v_1$ is a vertex such that $N_i(v_1) = \{v_2,v_3,v_4\}$. 
Following our notation, $P_{v_2,\ell-1,i-1}$, $P_{v_3,\ell-1,i-1}$ and
$P_{v_4,\ell-1,i-1}$ represent the polynomial encoding of walks ending at
vertices $v_2$, $v_3$ and $v_4$, respectively, such that all walks have length
$\ell-2$ and end at latest time $i-1$. Further, $P_{v_1,\ell,i-1}$ denotes the polynomial
encoding of all walks of length $\ell-1$, ending at $v_1$ at latest time $i-1$.

The polynomial encoding to represent walks of length $\ell-1$ ending at $v_1$
and at latest time $i$ can be written as
\begin{eqnarray*}
\label{eq:gen-polynomial:2}
P_{v_1,\ell,i} & = & x_{v_1} y_{v_1v_2,\ell-1,i} \, P_{v_2,\ell-1,i-1} + \notag \\
               &   & x_{v_1} y_{v_1v_3,\ell-1,i} \, P_{v_3,\ell-1,i-1} + \notag \\
               &   & x_{v_1} y_{v_1v_4,\ell-1,i} \, P_{v_4,\ell-1,i-1} + P_{v_1,\ell,i-1}.
\end{eqnarray*}

\begin{figure}[t]
\centering
\begin{tikzpicture}[scale=\tikzscale,every node/.style={scale=\tikzscale}]]

\input{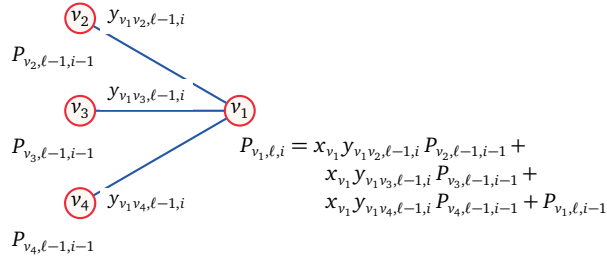}

\node[exnode] (v2) at (  -1,  1.75) {$v_2$};
\node[exnode] (v3) at (  -1,  0) {$v_3$};
\node[exnode] (v4) at (  -1, -1.75) {$v_4$};
\node[exnode] (v1) at ( 2,    0) {$v_1$};

\node[fill=white] at ( -1.5,    1) {\large $P_{v_2, \ell-1, i-1}$};
\node[fill=white] at ( -1.5,    -0.75) {\large $P_{v_3, \ell-1, i-1}$};
\node[fill=white] at ( -1.5,   -2.5) {\large $P_{v_4, \ell-1, i-1}$};
\node[fill=white, text width=7cm] at (  5.5,    -1.25) {\large 
				$P_{v_1,\ell,i}  =  x_{v_1} y_{v_1v_2,\ell-1,i} \, P_{v_2,\ell-1,i-1} \,+ $\\
                $~~~~~~~~~~~~~ x_{v_1} y_{v_1v_3,\ell-1,i} \, P_{v_3,\ell-1,i-1} \,+ \notag$\\
                $~~~~~~~~~~~~~ x_{v_1} y_{v_1v_4,\ell-1,i} \, P_{v_4,\ell-1,i-1} + P_{v_1,\ell,i-1}$};

\draw (v1) edge[-, exedge] (v2);
\draw (v1) edge[-, exedge] (v3);
\draw (v1) edge[-, exedge] (v4);

\node [fill=white] at ( 0.25,   1.75) {$y_{v_1v_2,\ell-1,i}$};
\node [fill=white] at ( 0.25,   0.3) {$y_{v_1v_3,\ell-1,i}$};
\node [fill=white] at ( 0.25,   -1.75) {$y_{v_1v_4,\ell-1,i}$};

\end{tikzpicture}%
\caption{\label{fig:poly-encoding:1}
An illustration of the polynomial encoding of temporal walks.}
\end{figure}

\smallskip
Intuitively, the above equation says that we can reach vertex $v_1$ at time
step $i$ if we have already reached any of its neighbors in $N_i(v_1)$ by latest
time $i-1$. Notice that the term $P_{v_1,\ell,i-1}$ is included so that if we have
reached $v_1$ at latest time $i-1$ we can choose to stay at $v_1$ for time $i$.

\smallskip
By generalizing the above idea, a generating function is obtained by setting
\[ P_{u,1,i} = x_u,\]
for each $u \in V$ and $i \in [t]$, and
\begin{equation}
\label{eq:gen-polynomial:3}
P_{u,\ell,i} = x_u \sum_{v \in N_i(u)} y_{uv,\ell-1,i} P_{v,\ell-1,i-1} + P_{u,\ell,i-1},
\end{equation}
for each $u \in V$, $\ell \in \{2,\dots,k\}$, and $i \in [t]$.

\smallskip
Furthermore, let us form the polynomial
$\mathcal{P}_{\ell, i} = \sum_{u \in V} P_{u, \ell, i}$,
for each $\ell \in [k]$ and $ i \in [t]$. Put differently, $\mathcal{P}_{\ell,i}$
denotes the polynomial encoding of all walks of length $\ell-1$ ending at latest
time~$i$. 

\smallskip
Now, we observe that the problem of detecting a \ktemppath is equivalent to finding an
$x$-multilinear monomial in $\mathcal{P}_{k,t}$. From the construction of the
generating function~(\ref{eq:gen-polynomial:3}) it is clear that the $y$
variables are always distinct, so detecting an $x$-multilinear monomial is
equivalent to detecting a multilinear monomial.

\smallskip
We demonstrate the polynomial encoding of temporal walks using a small example
graph. Consider a temporal graph $G^\tau=(V,E^\tau)$ with vertex set $V =
\{v_1,v_2,v_3\}$ and edges $E^\tau = \{(v_1,v_2,1)\mathrel{,} (v_1,v_2,2),
(v_2,v_3,1), (v_2,v_3,2)\}$. The polynomial encoding of temporal walks of length
zero, one and two is illustrated in Figure~\ref{fig:tempwalk-encoding-demo:1}
(a,b), (c,d) and (e,f), respectively. Again, observe that multilinear monomials
highlighted in
bold correspond to temporal paths in the graph.

\begin{figure*}[]






\centering
\setlength{\tabcolsep}{0.01cm}
\begin{tabular}{c c}
  \begin{tikzpicture}[scale=\tikzscale,every node/.style={scale=\tikzscale}]]

\input{tikz-defs}

\node[exnode] (v1) at (  -1,  1) {$v_1$};
\node[exnode] (v2) at ( 2,    0) {$v_2$};
\node[exnode] (v3) at (  -1, -1) {$v_3$};

\node[fill=white] at ( -1.5,    1.75) {\large 
        $P_{v_1, 1, 1} = \mathbf{x_{v_1}}$};
\node[fill=white, text width=5cm] at (  4,    -1) {\large
				$P_{v_2, 1, 1} = \mathbf{x_{v_2}}$};
\node[fill=white] at ( -1.5,   -1.75) {\large 
        $P_{v_3, 1, 1} = \mathbf{x_{v_3}}$};

\draw (v1) edge[-, exedge] (v2);
\draw (v3) edge[-, exedge] (v2);

\node [fill=white] at ( 0.5,   1) {$\{1,2\}$};
\node [fill=white] at ( 0.5,  -1) {$\{1,2\}$};

\node[fill=white] at (0, -3) {\Large (a) $\ell = 1, i = 1$.};
\end{tikzpicture}
  \begin{tikzpicture}[scale=\tikzscale,every node/.style={scale=\tikzscale}]]

\input{tikz-defs}

\node[exnode] (v1) at (  -1,  1) {$v_1$};
\node[exnode] (v2) at ( 2,    0) {$v_2$};
\node[exnode] (v3) at (  -1, -1) {$v_3$};

\node[fill=white] at ( -1.5,    1.75) {\large 
        $P_{v_1, 1, 2} = \mathbf{x_{v_1}}$};
\node[fill=white, text width=4cm] at (  4,    -1) {\large
				$P_{v_2, 1, 2} = \mathbf{x_{v_2}}$};
\node[fill=white] at ( -1.5,   -1.75) {\large 
        $P_{v_3, 1, 2} = \mathbf{x_{v_3}}$};

\draw (v1) edge[-, exedge] (v2);
\draw (v3) edge[-, exedge] (v2);

\node [fill=white] at ( 0.5,   1) {$\{1,2\}$};
\node [fill=white] at ( 0.5,  -1) {$\{1,2\}$};

\node[fill=white] at (0, -3) {\Large (b) $\ell = 1, i = 2$.};
\end{tikzpicture}
  \begin{tikzpicture}[scale=\tikzscale,every node/.style={scale=\tikzscale}]]

\input{tikz-defs}

\node[exnode] (v1) at (  -1,  1) {$v_1$};
\node[exnode] (v2) at ( 2,    0) {$v_2$};
\node[exnode] (v3) at (  -1, -1) {$v_3$};

\node[fill=white] at ( -1.5,    2) {\large 
        $P_{v_1, 2, 1} = \mathbf{x_{v_1} y_{v_1v_1,1,1} x_{v_2}}$};
\node[fill=white, text width=5.5cm] at ( 4.5,   -1.25) {\large 
        $P_{v_2, 2, 1} = \mathbf{x_{v_2} y_{v_2v_1,1,1} x_{v_1}}\,+ $\\
        $~~~~~~~~~~~~ \mathbf{x_{v_2} y_{v_2v_3,1,1} x_{v_3}}$};
\node[fill=white] at ( -1.5,   -2) {\large 
				$P_{v_2, 2, 1} = \mathbf{x_{v_3} y_{v_3v_2,1,1} x_{v_2}}$};

\draw (v1) edge[-, exedge] (v2);
\draw (v3) edge[-, exedge] (v2);

\node [fill=white] at ( 0.5,   1) {$\{1,2\}$};
\node [fill=white] at ( 0.5,  -1) {$\{1,2\}$};

\node[fill=white] at (0, -3.5) {\Large (c) $\ell = 2, i = 1$.};
\end{tikzpicture}
  \begin{tikzpicture}[scale=\tikzscale,every node/.style={scale=\tikzscale}]]

\input{tikz-defs}

\node[exnode] (v1) at (  -1,  1) {$v_1$};
\node[exnode] (v2) at ( 2,    0) {$v_2$};
\node[exnode] (v3) at (  -1, -1) {$v_3$};

\node[fill=white, text width=5.5cm] at ( -1.5,    2) {\large 
        $P_{v_1, 2, 2} = \mathbf{x_{v_1} y_{v_1v_2,1,1} x_{v_2}}\,+ $\\
        $~~~~~~~~~~~~ \mathbf{x_{v_1} y_{v_1v_2,1,2} x_{v_2}}$};
\node[fill=white, text width=5.5cm] at ( 4.5,   -1.75) {\large 
        $P_{v_2, 2, 2} = \mathbf{x_{v_2} y_{v_2v_1,1,1} x_{v_1}}\, + $\\
        $~~~~~~~~~~~~ \mathbf{x_{v_2} y_{v_2v_3,1,1} x_{v_3}}\, +$\\
        $~~~~~~~~~~~~ \mathbf{x_{v_2} y_{v_2v_1,1,2} x_{v_1}}\, +$\\
        $~~~~~~~~~~~~ \mathbf{x_{v_2} y_{v_2v_3,1,2} x_{v_3}}$};
\node[fill=white, text width=5.5cm] at ( -1.5,   -2) {\large 
				$P_{v_3, 2, 2} = \mathbf{x_{v_3} y_{v_3v_2,1,1} x_{v_2}}\, +$\\
        $~~~~~~~~~~~~ \mathbf{x_{v_3} y_{v_3v_2,1,2} x_{v_2}}$};

\draw (v1) edge[-, exedge] (v2);
\draw (v3) edge[-, exedge] (v2);

\node [fill=white] at ( 0.5,   1) {$\{1,2\}$};
\node [fill=white] at ( 0.5,  -1) {$\{1,2\}$};

\node[fill=white] at (0, -3.5) {\Large (d) $\ell = 2, i = 2$.};
\end{tikzpicture}
  \begin{tikzpicture}[scale=\tikzscale,every node/.style={scale=\tikzscale}]]

\input{tikz-defs}

\node[exnode] (v1) at (  -1,  1) {$v_1$};
\node[exnode] (v2) at ( 2,    0) {$v_2$};
\node[exnode] (v3) at (  -1, -1) {$v_3$};

\node[fill=white] at ( -1.5,    1.75) {\large 
        $P_{v_1, 3, 1} = \emptyset$};
\node[fill=white, text width=5cm] at (  4,    -1) {\large
				$P_{v_2, 3, 1} = \emptyset$};
\node[fill=white] at ( -1.5,   -1.75) {\large 
        $P_{v_3, 3, 1} = \emptyset$};

\draw (v1) edge[-, exedge] (v2);
\draw (v3) edge[-, exedge] (v2);

\node [fill=white] at ( 0.5,   1) {$\{1,2\}$};
\node [fill=white] at ( 0.5,  -1) {$\{1,2\}$};

\node[fill=white] at (0, -3) {\Large (e) $\ell = 3, i = 1$.};
\end{tikzpicture}
  \begin{tikzpicture}[scale=\tikzscale,every node/.style={scale=\tikzscale}]]

\input{tikz-defs}

\node[exnode] (v1) at (  -1,  1) {$v_1$};
\node[exnode] (v2) at ( 2,    0) {$v_2$};
\node[exnode] (v3) at (  -1, -1) {$v_3$};

\node[fill=white, text width=6cm] at ( -1.5,    2) {\large 
        $P_{v_1, 3, 2} = x_{v_1} y_{v_1v_2,2,2} x_{v_2} y_{v_2v_1,1,1} x_{v_1}\,+ $\\
        $~~~~~~~~~~~~ \mathbf{x_{v_1} y_{v_1v_2,2,2} x_{v_2} y_{v_2v_3,1,1} x_{v_3}}$};
\node[fill=white, text width=6cm] at ( 4.5,   -1) {\large 
        $P_{v_2, 3, 2} = x_{v_2} y_{v_2v_1,2,2} x_{v_1} y_{v_1v_2,1,1} x_{v_2}\, + $\\
        $~~~~~~~~~~~~ x_{v_2} y_{v_2v_3,2,2} x_{v_3} y_{v_3,v_2,1,1} x_{v_2}$};
\node[fill=white, text width=6cm] at ( -1.5,   -2) {\large 
				$P_{v_3, 3, 2} = x_{v_3} y_{v_3v_2,2,2} x_{v_2} y_{v_2v_1,1,1} x_{v_1}\, +$\\
        $~~~~~~~~~~~~ \mathbf{x_{v_3} y_{v_3v_2,2,2} x_{v_2} y_{v_2v_3,1,1} x_{v_3}}$};

\draw (v1) edge[-, exedge] (v2);
\draw (v3) edge[-, exedge] (v2);

\node [fill=white] at ( 0.5,   1) {$\{1,2\}$};
\node [fill=white] at ( 0.5,  -1) {$\{1,2\}$};

\node[fill=white] at (0, -3.5) {\Large (f) $\ell = 3, i = 2$.};
\end{tikzpicture}%
\end{tabular}
\caption{\label{fig:tempwalk-encoding-demo:1}%
An example to demonstrate the polynomial encoding denoting
temporal walks of length zero (a, b), one (c, d) and two (e, f) in a temporal
graph. Observe that monomials corresponding to a temporal path are multilinear
(highlighted in bold).%
}
\end{figure*}

\begin{lemma}
\label{lemma:ktemppath:2}
The polynomial encoding $P_{u,\ell,i}$ in Equation~(\ref{eq:gen-polynomial:3}) contains an
$x$-multilinear monomial of $x$-degree $\ell$ if and only if there exists a temporal
path $W^\tau$ of length $\ell-1$ ending at vertex $u$ at latest time~$i$.
\end{lemma}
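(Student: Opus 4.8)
The plan is to prove a stronger, purely structural statement by induction and then invoke Lemma~\ref{lemma:ktemppath:1}. Concretely, I would first show that the monomials occurring in $P_{u,\ell,i}$ are exactly the monomial encodings of the temporal walks of length $\ell-1$ that end at $u$ and whose edge timestamps are at most $i$ (in strictly increasing order), and moreover that this correspondence is a bijection, so that no two walks collapse to the same monomial and there is no cancellation. The bijection is immediate from the form of the encoding: the variable $y_{vw,\ell',i'}$ records the edge $\{v,w\}$, its position $\ell'$ in the walk, and its timestamp $i'$, so the ordered sequence of $y$-variables in a monomial reconstructs the entire walk uniquely. Granting this, every monomial of $P_{u,\ell,i}$ has $x$-degree exactly $\ell$ (a walk of length $\ell-1$ has $\ell$ vertices counted with multiplicity), so the only remaining issue is $x$-multilinearity, which I handle at the end.

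For the structural claim I would induct on $\ell$, and within each $\ell$ on $i$, following the recursion~(\ref{eq:gen-polynomial:3}). The base case $\ell=1$ is $P_{u,1,i}=x_u$, the encoding of the unique length-$0$ walk consisting of $u$ alone, which vacuously ends at latest time $i$. For the inductive step I read off the two summands of~(\ref{eq:gen-polynomial:3}) separately. By the induction hypothesis $P_{v,\ell-1,i-1}$ enumerates the length-$(\ell-2)$ walks ending at $v$ with all timestamps at most $i-1$; multiplying by $x_u\,y_{uv,\ell-1,i}$ appends the edge $(v,u,i)$, and since the appended timestamp $i$ strictly exceeds every timestamp $\le i-1$ used so far, each product is the encoding of a valid temporal walk of length $\ell-1$ ending at $u$ whose last edge has timestamp exactly $i$. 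Summing over $v\in N_i(u)$ yields precisely all such walks whose final timestamp equals $i$, while the trailing term $P_{u,\ell,i-1}$ contributes, again by induction, precisely the length-$(\ell-1)$ walks ending at $u$ with final timestamp at most $i-1$. The two families are disjoint (they differ in the final timestamp) and their union is exactly the set of length-$(\ell-1)$ walks ending at $u$ at latest time $i$, which closes the induction.

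It then remains to translate ``$x$-multilinear monomial of $x$-degree $\ell$'' into ``temporal path.'' Since the $y$-variables carry the position index $\ell'$, the $y$-part of every monomial is automatically squarefree, so by the remark following~(\ref{eq:gen-polynomial:3}) a monomial is $x$-multilinear if and only if it is multilinear. By Lemma~\ref{lemma:ktemppath:1} the monomial encoding of a temporal walk is multilinear if and only if that walk is a temporal path; combined with the structural claim, $P_{u,\ell,i}$ contains an $x$-multilinear monomial of $x$-degree $\ell$ exactly when some length-$(\ell-1)$ walk ending at $u$ at latest time $i$ is in fact a path, which is the assertion of the lemma in both directions.

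The main obstacle I anticipate is the bookkeeping in the inductive step rather than any conceptual difficulty: one must be careful that the recursion's ``at latest time $i$'' semantics is respected---i.e.\ that the $+\,P_{u,\ell,i-1}$ term is exactly what prevents double counting and correctly encodes staying put---and that the strict-increase condition is preserved when appending the timestamp-$i$ edge. Establishing that distinct walks give distinct monomials (so that ``contains a monomial'' genuinely witnesses ``there exists a walk,'' with no cancellation) is the other point that deserves an explicit, if short, justification.
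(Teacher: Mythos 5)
Your proposal is correct and takes essentially the same route as the paper: both reduce the statement to Lemma~\ref{lemma:ktemppath:1} via the correspondence between the monomials of $P_{u,\ell,i}$ and the temporal walks of length $\ell-1$ ending at $u$ at latest time $i$, together with the observation that the position-indexed $y$-variables make $x$-multilinearity equivalent to multilinearity. The only difference is rigor, not approach: the paper asserts that correspondence as immediate ``by the construction of the generating function,'' whereas you prove it explicitly by induction on $\ell$ and $i$ (splitting the two summands of the recursion, checking the strict-increase condition, and verifying bijectivity so no cancellation occurs), which fills in precisely what the paper leaves implicit.
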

\begin{proof}
Suppose that there exists an $x$-multilinear monomial $M$ of $x$-degree $\ell$ in $P_{u,\ell,i}$. By Lemma~\ref{lemma:ktemppath:1}, we know that $M$ corresponds to a temporal path $W^\tau$.
Because $M$ has $x$-degree $\ell$, we have that $W^\tau$ has length $\ell-1$.
Moreover, by the construction of the generating function, all the walks end at vertex $u$ meaning $W^\tau$ also ends at vertex $u$.
Finally, all the monomial encodings of walks of length $\ell-1$ reaching $u$ before time $i$ are preserved using the term $P_{u,\ell,i-1}$, so the time of reaching $u$ can be at most $i$ for $W^\tau$, as required.

For the other direction, we have again by Lemma~\ref{lemma:ktemppath:1} that 
there exists a multilinear monomial $M$ in $P_{u,\ell,i}$. 
Since the length of $W^\tau$ is $\ell-1$, we have that $M$ has $\ell$ unique $x$ terms. In other words, the $x$-degree of $M$ is $\ell$ and our claim follows.
\hfill\end{proof}

\subsection{Multilinear sieving}

From Lemma~\ref{lemma:ktemppath:2} the problem of deciding the existence of a
\ktemppath in $G^\tau$ reduces to detecting the existence of a multilinear
monomial term in $\mathcal{P}_{k,t}$.

\smallskip
Let $L$ be the set of $k$ labels and $[n]$ the set of vertices in $V$.
We express $L$ and $[n]$ in terms of a set of new variables. 
More precisely, for each vertex $i \in [n]$ and label $j \in L$ we introduce a new variable $z_{i,j}$.
The vector of all variables of $z_{i,j}$ is denoted as
$\mathbf{z}$ and the vector of all $y$-variables as $\mathbf{y}$.
At this point, we turn to the multilinear sieving technique of Bj\"{o}rklund, Kaski and Kowalik~\cite{Bjorklund2014}.

\begin{lemma}[Multilinear sieving~\cite{Bjorklund2014}]
\label{lemma:ktemppath:3}
The polynomial $\mathcal{P}_{k,t}$ has at least one multilinear monomial if and
only if the polynomial
\begin{equation}
\label{eq:ktemppath:3}
Q(z, \mathbf{y}) = \sum_{A \subseteq L}
                            \mathcal{P}_{k,t}(z_1^A,\dots,z_n^A, \mathbf{y})
\end{equation}
is not identically zero, where $z_i^A = \sum_{j \in A} z_{i,j}$
for all $i \in [n]$ and $A \subseteq L$.
\end{lemma}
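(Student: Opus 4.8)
The plan is to carry out the standard inclusion--exclusion sieve over the label set $L$, working throughout over a field of characteristic two so that the subset-counting multiplicities produced by the sieve collapse modulo two. First I would record two structural facts about $\mathcal{P}_{k,t}$ that follow from the recursion~(\ref{eq:gen-polynomial:3}): every monomial has $x$-degree exactly $k = |L|$ (a routine induction on $(\ell,i)$), and distinct temporal walks yield distinct $\mathbf{y}$-parts, since $y_{uv,\ell,i}$ records an edge together with its position $\ell$ and timestamp $i$, so the $\mathbf{y}$-part of a monomial reconstructs its walk uniquely. Consequently each monomial of $\mathcal{P}_{k,t}$ occurs with coefficient one and is uniquely tagged by its $\mathbf{y}$-part. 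By linearity of the substitution $x_i \mapsto z_i^A$ and of the outer sum over $A \subseteq L$, it then suffices to compute the contribution of a single monomial $M$ to $Q$ and sum the results.

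Next I would analyze one monomial $M$ with $x$-part $x_{i_1}\cdots x_{i_k}$, the $i_\ell$ being the visited vertices (possibly with repetition). Substituting $x_{i_\ell} \mapsto \sum_{j \in A} z_{i_\ell,j}$ and expanding, the contribution of $M$ becomes $\sum_{(j_1,\dots,j_k) \in L^k} N(j_1,\dots,j_k)\,\prod_{\ell} z_{i_\ell,j_\ell}$, where $N(j_1,\dots,j_k) = \#\{A \subseteq L : \{j_1,\dots,j_k\} \subseteq A\} = 2^{\,k-s}$ and $s$ is the number of distinct labels among the $j_\ell$. Over characteristic two this multiplicity vanishes unless $s = k$, that is, unless $(j_1,\dots,j_k)$ is a bijection onto $L$. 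This is the first key step, and it reduces the contribution of $M$ to $\sum_{\sigma}\prod_{\ell} z_{i_\ell,\sigma(\ell)}$, the sum ranging over the $k!$ bijections $\sigma : [k] \to L$.

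The heart of the argument is to decide when this reduced sum vanishes. If $M$ is \emph{not} multilinear, some vertex repeats, say $i_a = i_b$ with $a \neq b$; I would pair each bijective tuple with the one obtained by swapping its $a$-th and $b$-th entries. Because the entries are distinct this is a fixed-point-free involution, and because $i_a = i_b$ the two paired tuples yield the \emph{same} $z$-monomial, so their terms cancel in characteristic two and the entire contribution is zero. If instead $M$ \emph{is} multilinear, the indices $i_1,\dots,i_k$ are distinct, so the $k!$ monomials $\prod_\ell z_{i_\ell,\sigma(\ell)}$ are pairwise distinct (a tuple is recovered from the set of pairs $(i_\ell,\sigma(\ell))$), and the contribution is a nonzero polynomial in $\mathbf{z}$. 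Assembling these gives both directions: if $\mathcal{P}_{k,t}$ has no multilinear monomial then every contribution vanishes and $Q \equiv 0$; conversely a multilinear monomial $M$ contributes a nonzero term carrying the $\mathbf{y}$-part $Y_M$, and since $Y_M$ identifies $M$ uniquely among all monomials, no other contribution can cancel it, so $Q \not\equiv 0$.

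I expect the main obstacle to be the non-multilinear cancellation step, since it is where both hypotheses of the method are used essentially: the characteristic-two arithmetic (to annihilate the factors $2^{\,k-s}$ and to collapse the paired terms), and the matching of $|L| = k$ to the common $x$-degree of every monomial, without which lower-degree or higher-degree terms could survive the sieve. The forward direction needs the supplementary bookkeeping that distinct walks carry distinct $\mathbf{y}$-monomials, which is what keeps the surviving multilinear contributions from cancelling one another.
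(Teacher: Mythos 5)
Your proof is correct, but note that the paper does not actually prove this lemma: it is imported verbatim from Bj\"orklund, Kaski and Kowalik~\cite{Bjorklund2014}, so there is no in-paper argument to compare against. What you have reconstructed is precisely the canonical sieve proof from that literature: the multiplicity count $2^{k-s}$ killing all label tuples that are not bijections in characteristic two, the transposition involution $\sigma \mapsto \sigma \circ (a\,b)$ cancelling the contribution of any monomial with a repeated vertex, the injectivity of $\sigma \mapsto \prod_\ell z_{i_\ell,\sigma(\ell)}$ for multilinear monomials, and the distinct $\mathbf{y}$-tags preventing cross-cancellation between surviving walks. You also correctly identify the two load-bearing hypotheses --- characteristic-two arithmetic and the matching $|L| = k$ to the uniform $x$-degree of $\mathcal{P}_{k,t}$ --- both of which the paper's setting supplies (arithmetic over $\mathit{GF}(2^b)$, and the degree invariant follows by your routine induction on the recursion~(\ref{eq:gen-polynomial:3}), including the carry term $P_{u,\ell,i-1}$, which also guarantees each walk enters the expansion exactly once, namely via the neighbor sum at the timestamp of its last edge). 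Two trivial caveats, neither of which damages the argument: the $\mathbf{y}$-part-reconstructs-the-walk claim degenerates at $k=1$, where the $\mathbf{y}$-part is empty, but there every monomial is multilinear and the surviving terms $z_{u,1}$ are distinct in $\mathbf{z}$, so the conclusion holds directly; and your multilinearity bookkeeping is really about $x$-multilinearity, which coincides with full multilinearity here exactly because the position index $\ell$ makes the $y$-variables within any single walk monomial distinct --- a point the paper itself remarks on immediately after Equation~(\ref{eq:gen-polynomial:3}).
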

From the lemma, we can determine the existence of a multilinear monomial in
$\mathcal{P}_{k,t}$ by making $2^k$ substitutions of the new variables in $\mathbf{z}$ in Equation~(\ref{eq:ktemppath:3}). 
In fact, as detailed in~\cite{Bjorklund2014}, these substitutions can be \emph{random} for a low-degree polynomial which is not identically zero has only few roots.
Indeed, if one evaluates the said polynomial at a random point, one is likely to
witness that it is not identically zero (see e.g., Schwartz~\cite{schwartz}, Zippel~\cite{Zippel1979}).
Therefore, the algorithmic framework this approach gives rise to is randomized
with a false negative probability $(2k-1)/2^{b}$ where the arithmetic is over
the finite field GF($2^b$) (for full technical details,
see~\cite{BjorklundKK2016}).\footnote{In practice, to make the probability of a
false negative very low one can choose $b$ to be large enough, say $b = 64$.}
This result can be used to design an algorithm for the \ktemppath problem.

\begin{lemma}
\label{lemma:ktemppath:4}
There exists an algorithm for solving the \ktemppath problem
in $\bigO(2^k k(nt + m))$ time and $\bigO(nt)$ space.
\end{lemma}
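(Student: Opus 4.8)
The plan is to convert the structural characterisations already in hand into an evaluation procedure for the sieve polynomial $Q(z,\mathbf{y})$ of Equation~(\ref{eq:ktemppath:3}) and then to bound its cost. First I would fix the field GF($2^b$) and sample a random assignment to the variables $z_{i,j}$ and to the $y$-variables. The core loop ranges over the $2^k$ subsets $A \subseteq L$; for each $A$ it substitutes $x_u \mapsto z_u^A = \sum_{j \in A} z_{u,j}$ together with the sampled $y$-values and evaluates $\mathcal{P}_{k,t}$ in GF($2^b$) by running the dynamic-programming recursion of Equation~(\ref{eq:gen-polynomial:3}) bottom-up in $\ell$ and $i$. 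Accumulating these $2^k$ field values produces $Q$ evaluated at the random point, and the algorithm reports that a \ktemppath exists exactly when this accumulated value is nonzero.

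For correctness I would first apply Lemma~\ref{lemma:ktemppath:2} to $\mathcal{P}_{k,t} = \sum_{u \in V} P_{u,k,t}$: since distinct walks yield distinct monomials (the $y$-variables are pairwise distinct, so no cancellation occurs and $x$-multilinearity coincides with multilinearity), $\mathcal{P}_{k,t}$ has a multilinear monomial precisely when some temporal path of length $k-1$ exists, i.e.\ a \ktemppath. Lemma~\ref{lemma:ktemppath:3} then equates this with $Q \not\equiv 0$. Each monomial of $\mathcal{P}_{k,t}$ has $x$-degree $k$ and carries $k-1$ $y$-variables, so after the linear substitution $x_u \mapsto z_u^A$ the polynomial $Q$ has total degree $2k-1$; the random-evaluation argument already recorded after Lemma~\ref{lemma:ktemppath:3} then bounds the false-negative probability of one evaluation by $(2k-1)/2^b$, which is pushed below any target by increasing $b$.

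For the running time I would note that, for a fixed subset $A$ and a fixed substitution, one pass of the recursion~(\ref{eq:gen-polynomial:3}) costs $\bigO(k(nt+m))$: a single length layer $\ell$ touches each temporal edge once inside the neighbour sum (whose total size over all $u$ and $i$ is $\sum_i 2m_i = \bigO(m)$) and each vertex--time pair once through the carry term $P_{u,\ell,i-1}$ (contributing $\bigO(nt)$), and there are $k$ layers. Multiplying by the $2^k$ subsets yields the stated $\bigO(2^k k(nt+m))$ bound; updating the subset sums $z_u^A$ between consecutive subsets in Gray-code order costs only $\bigO(n)$ per subset and is absorbed.

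The step I expect to require the most care is the $\bigO(nt)$ space bound, since a literal implementation would store the entire table $P_{u,\ell,i}$ of size $\bigO(nkt)$ and all $\bigO(mk)$ sampled $y$-values. To reach $\bigO(nt)$ I would retain only the two length layers $\ell$ and $\ell-1$ referenced by~(\ref{eq:gen-polynomial:3}), each an $n \times t$ array reused across all $2^k$ subsets, and generate the random coefficients (the $y$-values, and the $z_{i,j}$ needed for the Gray-code updates) on demand from a deterministic function of their indices rather than storing them. The point worth verifying is that regenerating a coefficient always returns the same field element, so that each fixed random assignment is evaluated consistently; given this, correctness is unaffected and the working memory is dominated by the two DP layers, giving $\bigO(nt)$.
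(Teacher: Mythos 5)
Your proposal matches the paper's proof essentially step for step: the same bottom-up evaluation of the recursion in Equation~(\ref{eq:gen-polynomial:3}) over the $2^k$ substitutions of Lemma~\ref{lemma:ktemppath:3}, correctness via Lemma~\ref{lemma:ktemppath:2}, the same per-layer cost accounting of $\bigO(n+m_i)$ field operations summing to $\bigO(k(nt+m))$, and the same space reuse keeping only the length layers $\ell$ and $\ell-1$. Your added care about the $\bigO(mk)$ random $y$-values --- regenerating them on demand from a deterministic function of their indices rather than storing them --- is a valid refinement that the paper itself defers to its implementation section, where exactly this on-the-fly pseudorandom generation is used.
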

\begin{proof}
The algorithm will first construct the polynomial $\mathcal{P}_{k,t}$ representing all temporal walks of length $k-1$ using the recursion in Equation~(\ref{eq:gen-polynomial:3}). Afterwards, it checks if there exists an $x$-multilinear monomial term in $\mathcal{P}_{k,t}$ via Lemma~\ref{lemma:ktemppath:3}. Then, by Lemma~\ref{lemma:ktemppath:2}, the existence of an $x$-multilinear monomial term of size $k$ implies the existence of a \ktemppath of length $k-1$ and vice versa.

In more detail, for $\ell \in [k]$, computing $\mathcal{P}_{\ell,i}$
requires $\bigO(n + m_i)$ additions and multiplications in the field $GF(2^b)$, where $m_i$ is the number of edges at time instance $i \in [t]$. Furthermore, computing the polynomial for all $i \in [t]$ requires $\bigO(nt+m)$ additions and multiplications. 
Finally, we need to iterate the computations over $\ell \in
[k]$, so, we have $\bigO(k(nt+m))$ additions and multiplications. 
The overall
complexity of the algorithm is $\bigO(2^k k (nt+m) (A(2^b)+M(2^b))$, where
$A(2^b)$ and $M(2^b)$ are the complexities of addition and multiplication in the field $GF(2^b)$, respectively.  
Asymptotically, it is known that $M(2^b) = \bigO(b \log b)$ and
$A(2^b) = \bigO(\log b)$ (see~\cite{LinAHC2016}). Therefore, the overall complexity of the algorithm is $\bigO(2^k k (nt+m) b \log{b})$.

In order to compute $\mathcal{P}_{\ell,i}$ we need to remember the computations of $P_{u,\ell,i-1}$ and $\mathcal{P}_{\ell-1,i-1}$ for all $v \in V \setminus u$ and for all $i \in [t]$.
However, we can reuse space since computing the $\ell$-th polynomial encoding only depends on the $(\ell-1)$-th polynomial encoding.
For each vertex $v \in V$ we need a field variable and we
need to compute this for all $i \in [t]$ (this is also required for even staying at the vertex without moving). Thus, the memory requirement is $\bigO(nt)$.
\hfill\end{proof}
As a corollary, one can also observe that the given algorithm establishes that \ktemppath is fixed-parameter tractable in polynomial space.

\subsection{Constrained multilinear sieving}
We now move on to extending the previous approach to detect \pathmotif using the \emph{constrained} multilinear sieving technique due to~\cite{BjorklundKK2016}.

\smallskip
If we observe carefully, to obtain a \pathmotif we need to find a multilinear
monomial term in the polynomial $\mathcal{P}_{k,t}$ such that the vertex colors
corresponding to the $x$-variables with degree one agree with that of the
multiset $M$. This can be done by imposing additional constraints while
evaluating the sieve.

\smallskip
Let $C$ be a set of $n$ colors and $c: [n] \rightarrow C$ a function that
associates each $i \in [n]$ to a color in $C$. For each color $s \in C$, let us
denote the number of occurrences of color $s$ by $\mu(s)$.  A monomial
$x_1^{d_1} \dots x_q^{d_q} y_1^{f_1} \dots y_r^{f_r}$ is properly colored if for
all $s \in C$ it holds that $\mu(s) = \sum_{i \in c^{-1}(s)} d_i$, i.e., the
number of occurrences of color $s$ is equal to the total degree of $x$-variables
representing the vertices with color $s$.

\smallskip
For each $s \in C$, let $S_s$ be the set of $\{\mu(s)\}$ with color $s$
such that $S_s \cap S_{s'} = \emptyset$ for all $s \neq s'$. For $i \in [n]$ and $d \in
S_{c(i)}$ we introduce a new variable $v_{i,d}$. Let $L$ be a set of $k$
labels. For each $d \in \cup_{s \in C} S_s$ and each label $i \in L$ we
introduce a new variable $w_{i,d}$.

\begin{lemma}[Constrained multilinear sieving~\cite{BjorklundKK2016}]
\label{lemma:motif:1}
The polynomial $\mathcal{P}_{k,t}$ has at least one monomial that is both
$x$-multilinear and properly colored if and only if the polynomial
\begin{equation}
Q(z,\mathbf{w},\mathbf{y}) = \sum_{A \subseteq L}
                             \mathcal{P}_{k,t}(z_1^A,\dots,z_n^A, \mathbf{y})
\end{equation}
is not identically zero, where
\begin{equation}
\label{eq:motif:1}
z_i^A = \sum_{j \in A} z_{i,j},
\text{\: and \:}
z_{i,j} = \sum_{d \in S_{c(i)}} v_{i,d} w_{d,j}.
\end{equation}
\end{lemma}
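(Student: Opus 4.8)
The plan is to build directly on the unconstrained multilinear sieve (Lemma~\ref{lemma:ktemppath:3}) by treating the $z_{i,j}$ as formal variables for a first pass, and only afterwards expanding the composite substitution $z_{i,j}=\sum_{d\in S_{c(i)}}v_{i,d}w_{d,j}$ and tracking which $(v,w)$-monomials survive over $\mathrm{GF}(2^b)$. Concretely, I would first recall from the proof mechanism behind Lemma~\ref{lemma:ktemppath:3} that, in characteristic two, the outer summation over $A\subseteq L$ annihilates every monomial of $\mathcal{P}_{k,t}$ that is not $x$-multilinear, and collapses each $x$-multilinear monomial $m$ (with vertex set $S_m$, $\lvert S_m\rvert=k=\lvert L\rvert$) into $\sum_{\sigma}\prod_{i\in S_m}z_{i,\sigma(i)}$, where $\sigma$ ranges over the bijections $S_m\to L$. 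Because the $y$-variables are pairwise distinct, the contributions of distinct walks carry distinct $\mathbf{y}$-parts and can never cancel one another; hence it suffices to analyze a single multilinear $m$ in isolation.

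Next I would substitute the composite expression for each $z_{i,\sigma(i)}$, turning the contribution of $m$ into a double sum over bijections $\sigma\colon S_m\to L$ and color-respecting slot-assignments $\phi\colon S_m\to D$, where $D=\bigcup_{s\in C}S_s$ and $\phi(i)\in S_{c(i)}$, with associated monomial $\prod_{i\in S_m}v_{i,\phi(i)}\,w_{\phi(i),\sigma(i)}$. The $v$-factors record exactly the pair $(S_m,\phi)$, so two terms can coincide only if they share the same $\phi$ and induce the same multiset of $w$-pairs $\{(\phi(i),\sigma(i))\}$. This is the step I expect to be the crux. If $\phi$ is \emph{non-injective}, two vertices $i_1\neq i_2$ map to a common slot $d$; since the sets $S_s$ are disjoint, $d\in S_{c(i_1)}\cap S_{c(i_2)}$ forces $c(i_1)=c(i_2)$, and transposing $\sigma(i_1)\leftrightarrow\sigma(i_2)$ produces a \emph{different} bijection giving an \emph{identical} $(v,w)$-monomial. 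This pairing is a fixed-point-free involution on the non-injective-$\phi$ terms, so they all vanish over $\mathrm{GF}(2^b)$. If $\phi$ is \emph{injective}, the first coordinates of the $w$-pairs are distinct, so the monomial determines $\sigma\circ\phi^{-1}$ and hence $(\sigma,\phi)$ uniquely; these terms neither cancel internally nor collide, and they sum to a nonzero polynomial.

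To finish, I would identify an injective color-respecting $\phi$ with a color-preserving bijection $S_m\to D$, which exists precisely when $S_m$ contains exactly $\mu(s)=\lvert S_s\rvert$ vertices of each color $s$, i.e.\ exactly when $m$ is properly colored. Thus a multilinear monomial $m$ contributes a nonzero sum of distinct monomials to $Q$ if and only if it is properly colored, and by the no-cross-cancellation observation above this yields both directions of the stated equivalence. The principal obstacle is the cancellation bookkeeping: one must verify that the colliding-pair involution is canonically well defined, has no fixed points, and genuinely pairs equal monomials, while simultaneously confirming that injective-$\phi$ terms survive uniquely; the two structural facts that make all of this go through are the disjointness $S_s\cap S_{s'}=\emptyset$ for $s\neq s'$ and the distinctness of the $y$-variables across walks.
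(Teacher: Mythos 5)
Your proof is correct, but it should be said up front that the paper contains no proof of this lemma to compare against: Lemma~\ref{lemma:motif:1} is imported as a black box, with citation, from Bj{\"o}rklund, Kaski and Kowalik~\cite{BjorklundKK2016}, and the surrounding text only invokes it inside the running-time analysis (proceeding as in Lemma~\ref{lemma:ktemppath:4} but with the constrained substitution). What you have written is therefore not an alternative route relative to this paper but a reconstruction of the argument in the cited source --- and it is a faithful one. Your two-stage analysis is exactly the standard mechanism: first, over characteristic two, the sum over $A \subseteq L$ leaves only label assignments with image all of $L$ (proper-subset images occur an even number of times), forces bijective labelings on the $k$ $x$-occurrences, and kills non-$x$-multilinear monomials by the label-swap involution at repeated variables, collapsing each multilinear $m$ to $\sum_{\sigma}\prod_{i\in S_m} z_{i,\sigma(i)}$; second, after expanding $z_{i,j}=\sum_{d\in S_{c(i)}} v_{i,d}w_{d,j}$, your crux step checks out: disjointness of the sets $S_s$ forces $c(i_1)=c(i_2)$ at any collision $\phi(i_1)=\phi(i_2)=d$, the swap $\sigma(i_1)\leftrightarrow\sigma(i_2)$ is fixed-point-free because $\sigma$ is a bijection, and it is made canonical by, say, choosing the smallest colliding shade $d$ and the two smallest vertices of $\phi^{-1}(d)$; for injective $\phi$ the $v$-part determines $(S_m,\phi)$ and the distinct first coordinates of the $w$-factors determine $\sigma\circ\phi^{-1}$, so those monomials survive with coefficient $1$. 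Your counting step --- an injective color-respecting $\phi$ exists iff $m$ is properly colored --- is right and silently uses $|S_m|=k=\sum_s\mu(s)=|D|$, which holds here because $\mathcal{P}_{k,t}$ is homogeneous of $x$-degree $k$. Two points worth making explicit in a polished version: the nonvanishing in the lemma is an identity in the variables $(\mathbf{v},\mathbf{w},\mathbf{y})$ over $\mathrm{GF}(2^b)$, since after the substitution the $z$'s are no longer free; and the fact licensing your monomial-by-monomial analysis is that the positional, timestamped $y$-variables make the $y$-monomial determine the walk, so no cross-cancellation between distinct walks can occur. What the paper's citation-only treatment buys is brevity; what your reconstruction buys is a self-contained verification of exactly where disjointness $S_s\cap S_{s'}=\emptyset$ and characteristic two enter.
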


To obtain an algorithm running time $\bigO(2^k k(nt + m))$ using $\bigO(nt)$
space for \pathmotif we proceed similarly to the proof of
Lemma~\ref{lemma:ktemppath:4}, but instead apply Lemma~\ref{lemma:motif:1} to
the constructed polynomial. Further and similarly, it follows that \pathmotif
problem is fixed-parameter tractable.

%

\subsection{Vertex-localization}
\label{sec:vertex-localization}
The idea of vertex-localization was introduced by Kaski~{et al.}~\cite{kaskiLT2018} 
while engineering the multilinear sieving approach to scale for
large \emph{query} sizes (for non-temporal graphs). The approach is effective
to sieve out the vertices that are not incident to at least one match, thus
reducing the graph size. We extend their idea of vertex-localization to
temporal graphs. More precisely, we extend the sieve construction to identify
the set of vertices that have at least one temporal path agreeing with specific
constraints and end at these vertices.

\smallskip
In this approach instead of working with the polynomial encoding of temporal
walks $\mathcal{P}_{k,t}$ for all vertices,
we operate on a family of multivariate polynomials
$P_{u,k,t}$ for each individual $u \in V$ simultaneously. We localize our
evaluation to each individual vertex $u$ by performing $2^k$ random evaluations
of the variable $z$ on the localized polynomial $P_{u,k,t}$, independently.
Furthermore, the polynomial ${P}_{u,k,t}$ has at least one monomial that is both
$x$-multilinear and properly colored if and only if the polynomial
\begin{equation}
Q_u(z,\mathbf{w},\mathbf{y}) = \sum_{A \subseteq L}
                             {P}_{u,k,t}(z_1^A,\dots,z_n^A, \mathbf{y})
\end{equation}
is not identically zero, where
\begin{equation}
\label{eq:motif:1}
z_i^A = \sum_{j \in A} z_{i,j},
\text{\: and \:}
z_{i,j} = \sum_{d \in S_{c(i)}} v_{i,d} w_{d,j}.
\end{equation}

\smallskip
Thus, if the polynomial $Q_u(z,\mathbf{w},\mathbf{y})$ evaluates to a non-zero term, it
follows from Lemma~\ref{lemma:ktemppath:2} that there exists at least one
temporal path of length $k-1$ ending at vertex $u$ satisfying the
constraints. By independently evaluating the set of polynomials 
$Q_u(z,\mathbf{w},\mathbf{y})$ for each $u \in V$,
one can obtain a set of vertices for which there exists at least
one temporal path satisfying the constraints and ending at these vertices. Most
importantly, this redesign of the sieve comes with no change in the asymptotic
time and space complexity. It is to be noted that there is a per-vertex false negative
probability of $(2k-1)/2^b$ where the arithmetic is over the finite field
GF($2^b$). Again, in practice we can choose a large enough field size (say $b=64$)
to keep the error probability very low.

For the full technical details of multilinear sieving with vertex-localization, we
refer the reader to the work of Kaski et al.~\cite[\S2]{kaskiLT2018}

\subsection{Finding optimum timestamp}
In this section we describe a procedure to obtain an optimum timestamp. By
optimum we mean that the maximum timestamp of the edges in the temporal path is
minimized. We refer to our algorithm for the decision variant
as \emph{decision oracle}. 

\smallskip
To find the minimum (optimum) timestamp $t' \in [t]$,
we make $\bigO(\log t)$ queries to the decision oracle using binary
search on range $\{1,\dots,t\}$. More precisely, we construct a polynomial encoding of all
walks of length $k-1$ which end at latest time $t'$ and query the oracle for
the existence of a path. 
Thus, we need at most $\log{t}$ queries to the decision
oracle to find the minimum timestamp $t'$ for which there exists a match.
As such, the overall complexity of finding the optimum timestamp is $\bigO(2^k k (nt+m) \log{t})$.

\subsection{Extracting a solution}
\label{sec:extract-solution}
In the previous sections we described an algebraic solution for the decision
version of the \pathmotif problem. 
That is, the described algorithm strictly answers YES/NO to the question of whether a solution exists.
However, in many cases, we need to extract an explicit solution if such exists. We propose two approaches to extract an optimal
solution. Our first approach uses the self-reducibility of the problem while the second approach uses vertex-localization.

\medskip
\spara{Using self-reducibility.}
The extraction process works in two steps: 
\squishlist
\item[($i$)] extract a $k$-vertex temporal subgraph that contains a
temporal path matching the multiset query;
\item[($ii$)] extract a temporal path in the subgraph from ($i$) using temporal DFS.
\squishend

\smallskip
{\it Extracting a subgraph.} We use the decision oracle as a subroutine to find a
solution in $\bigO(n)$ queries as follows: for each vertex $v \in V$, we
remove the vertex $v$ and the edges incident to it and query the oracle. If
there is a solution (i.e., the oracle returns YES), then we continue to next vertex; otherwise we put back $v$
and the edges incident to it, and continue to next vertex. In this way, we can
obtain a subgraph with $k$ vertices in at most $n-k$ queries to the oracle.
However, the number of queries to the decision oracle can be reduced to $\bigO(k \log n)$ queries in expectation by recursively dividing the graph in to two halves~\cite{Bjorklund2014}.

\smallskip
{\it Extracting a temporal path.} 
We pick an arbitrary start vertex in the subgraph
obtained from step~($i$) and find a temporal path connecting all the $k$
vertices using temporal DFS. If such a path does not exist, then we continue to the next vertex. 
Even though the worst case complexity of this approach is $\bigO(k!)$, we will demonstrate later that it is highly practical. 
In contrast, extracting a solution can be done using
$\bigO(k)$ queries to the decision oracle using vertex-localized sieving. 
However, we leave a vertex-localization variant of sieving for future work.

\smallskip
In summary, the overall complexity of extracting an optimum solution using
self-reducibility is
\[\bigO((2^k (nt + m) (k \log{n} + \log{t})) + k!).\]

\medskip
\spara{Using vertex-localization.} 
Again, the process works in two steps: ($i$) we identify the set of vertices for
which there exists at least one temporal path satisfying the constraints and end
at the vertex, ($ii$) we choose a vertex from the set of vertices obtained
from step ($i$) and perform a reverse temporal-DFS (with decreasing order of the
timestamps) such that the path ends at the chosen vertex.

\smallskip
As described earlier (cf.~\S\ref{sec:vertex-localization}), vertex-localization
comes with no additional cost with respect
either space or time. The temporal DFS takes $\bigO(\Delta^k)$ time, where
$\Delta$ is the maximum degree of the temporal graph.
The details of the algorithm and its runtime analysis is described in
Section~\ref{sec:experimental-setup:baseline}. So, the overall complexity of
extracting an optimum solution using vertex-localization is 
\[\bigO((2^k k (nt + m) \log(t)) + \Delta^k).\]

To summarize, the runtimes of the decision and extraction variants of the
algorithm are reported in Table~\ref{table:algorithm:1}.

\begin{table*}[t]
\caption{A summary of algorithm runtimes for the \pathmotif and \ktemppath
problems.}
\label{table:algorithm:1}
\centering
\footnotesize
\begin{tabular}{l l}
\toprule
Algorithm & Time complexity \\
\midrule
{\it \underline{Deciding the existence of a solution}} & \\[1ex]
Deciding existence & $\bigO(2^k k (nt+m))$ \\
Finding optimum timestamp & $\bigO(2^k k (nt+m) \log t)$ \\
\midrule
{\it \underline{Extracting an optimum solution}} & \\[1ex]
Self-reducibility + Temporal-DFS & $\bigO(2^k k (nt+m) (k \log n + \log t) + k!)$ \\
Vertex-localization + Temporal-DFS & $\bigO((2^k k (nt + m) \log(t)) + \Delta^k)$\\
\bottomrule
\end{tabular}
\end{table*}


\section{Extending the framework}
\label{sec:algorithm-ext}

Our multilinear sieving method is not limited to solving the \ktemppath and
\pathmotif problems. In this section, to demonstrate this, 
we describe extensions of our method to solve many similar variants of these problems.

\subsection{Solving \colorfulpath and \sdcolorfulpath problems}
The \colorfulpath problem is a special case of the \pathmotif problem with a multiset
query of different colors. As such, one can use the
algorithm described to solve the \pathmotif problem to
solve the \colorfulpath problem. Furthermore, to adapt the algorithm for the \sdcolorfulpath
problem, it suffices to assign unique colors to the source and the destination vertices, 
and include these colors in the multiset. 
More precisely, given an instance of \sdcolorfulpath, 
we extend the given vertex-coloring function $c$ by also setting $c(s) = k+1$ and $c(d) = k+2$, 
where $k+1$ and $k+2$ are new unique colors.
Further, we set $M' = M \cup \{k+1,k+2\}$ and apply the algorithm described for \pathmotif
on the produced instance $(G^\tau, c, M', k+2)$.

\subsection{Solving the \rainbowpath problem}
Given a \rainbowpath problem with coloring function $c:V \rightarrow [q]$, we
query the decision oracle of the \pathmotif problem with $k$ combinations of $q$
colors. In total we need at most ${q}\choose{k}$ queries to decide the existence
of a \rainbowpath. So the overall complexity to decide the existence of a
\rainbowpath is $\bigO(q^k 2^k k (nt+m))$.

\smallskip
Note that the algorithm is practical for small values of $q$.

\subsection{Solving the \ecpathmotif problems}

Recall that in the \ecpathmotif problem we are given a temporal-graph
$G^\tau=(V,E^\tau)$, a tuple $T=(j_1,\dots,j_{k-1})$ of timestamps such that
$j_i < j_{i+1}$ for all $i \in [k-2]$ and a
multiset $M$ of colors, and the goal is to decide whether there exists a
temporal path
matching $T$ in the specified order with vertex colors agreeing with $M$. We
construct a generating polynomial for \ecpathmotif such that

\[ P_{u,1,j_0} = x_u, \]

for each $u \in V$ and $j_0 = 1$; and

\[
P_{u,\ell,j_{\ell-1}} = x_u \sum_{(u,v,j_{\ell-1}) \in E^\tau} 
                        y_{uv,\ell-1,j_{\ell-1}} P_{v,\ell-1,j_{\ell-2}}
\]

for each $u \in V$, $\ell \in \{2,\dots,k\}$.

\smallskip
From Lemmas~\ref{lemma:ktemppath:2} and \ref{lemma:motif:1}, it follows that
existence of a multilinear monomial in the polynomial
$\mathcal{P}_{k,j_{k-1}} = \sum_{u \in V} P_{u,k,j_{k-1}}$ would
imply the existence of a \ecpathmotif of length $k-1$ and satisfying the
constraints.
The algorithm requires $\bigO(2^k(nk+m))$ time and $\bigO(n)$ space. In more
detail, computing $\mathcal{P}_{\ell,j_{\ell-1}}$ requires $n+m_{\ell-1}$ 
additions and multiplications. Furthermore, for all $\ell \in [k-1]$ we require
$n(k-1) + \sum_{\ell=1}^{k-1} m_{j_{\ell}} = \bigO(nk+m)$ additions and
multiplications. So the overall time complexity of the algorithm is 
$\bigO(2^k (nk+m))$. For computing $\mathcal{P}_{\ell, j_{\ell-1}}$, we only 
need $\mathcal{P}_{\ell-1, j_{\ell-2}}$. So the
space complexity is $\bigO(n)$.

\smallskip
Note that \ectemppath problem is a special case of \ecpathmotif problem such
that the motif query has a single color i.e, $M=\{1^k\}$. Likewise, we use
the algorithm described above to solve the \ectemppath problem and the
complexity results of the algorithm follow from \ecpathmotif problem.


\subsection{Solving the \vcpathmotif problem}
Recall that in the \vcpathmotif problem we are given a temporal
graph $G^\tau=(V,E^\tau)$, a coloring function $c:V \rightarrow [q]$, a tuple
$M=(c_1,\dots,c_k)$ of colors, and the goal is to decide whether there exists a
temporal path $P^\tau$ such that the vertex colors of $P^\tau$ match with $M$ 
in the specified order. We construct a generating polynomial for the
\vcpathmotif problem such that

\[
P_{u,1,i} = x_u,
\]

for all $u \in V$, $i \in [t]$ and $c(u) = c_1$; and

\[
P_{u,\ell,i}  = x_u \sum_{v \in N_i(u)} y_{uv,\ell-1,i} P_{v,\ell-1,i-1} +
                P_{u,\ell,i-1}
\]

for all $u \in V$, $\ell \in \{2,\dots,k\}$, $i\in [t]$, and $c(u) = c_\ell$,
$c(v) = c_{\ell-1}$.

\smallskip
From Lemmas~\ref{lemma:ktemppath:2} and \ref{lemma:motif:1}, it follows that
the existence of a multilinear monomial in the polynomial generated above would
imply the existence of a \ecpathmotif and vice versa.
The runtime and space complexity results follow.

\subsection{Solving the \vccolorfulpath problem}
\label{sec:algorithm-ext:vccolorfulpath}
The \vccolorfulpath problem in temporal graphs is solvable in polynomial time.
Indeed, we present a dynamic programming algorithm for solving the problem in
$\bigO(mt)$ time and $\bigO(nt)$ space.

\smallskip
Again, recall that in the \vccolorfulpath problem we are given a temporal graph
$G^\tau=(V,E^\tau)$, an integer $k \leq n$, a coloring function 
$c: V \rightarrow [k]$, and a tuple $M$ of $k$ different colors. The problem asks
us to decide whether there exists a temporal path $P^\tau$ of length $k-1$ such
that the vertex colors of $P^\tau$ match with $M$ in the specified order.

\smallskip
Without loss of generality, let $M=(1,\dots,k)$. Let $V_i \subseteq V$ denote the set of
vertices with color $i \in [k]$.
Define an indicator variable $I_{u,i}$ for each $u \in V$ and 
$i \in [t]$ to indicate the existence of an \vccolorfulpath. Our
dynamic programming recursion works as follows:
for all $u \in V_1$ and $i \in [t]$, initialize $I_{u,i}=1$.
Iterate over $\ell \in \{2,\dots,k\}$.
For each $u \in V_{\ell}$ set $I_{u,j} = 1$ for all $j \in \{i+1,\dots,t\}$ if there exists an edge $(u,v,i)
\in E^\tau$ such that $v \in V_{\ell-1}$ and $I_{v,i}=1$. Finally, if $I_{u,i} = 1$
for any vertex $u \in V_k$ and timestamp $i \in [t]$, it implies that there
exists a \vccolorfulpath ending at vertex $u$ at time $i$.

\smallskip
As we have a total of $nt$ indicator variables, the space complexity is $\bigO(nt)$. 
For each edge $e \in E^\tau$ we update at most $t$ indicator variables, so the
algorithm takes $\bigO(mt)$ time.

\subsection{Path motif problem with delays}
In a transport network a transition between any two locations may
involve a \emph{transition time} and a minimum \emph{delay time} at a location
before continuing the journey, for example a minimum time
to visit a museum. In this
section, we introduce a problem setting with transition and delay
times,
and present generating polynomials to solve the problems.

\smallskip
For a temporal graph $G^\tau=(V,E^\tau)$, an edge $e \in E^\tau$ is a
tuple $(u,v,i,\epsilon)$ where $u,v \in V$, $i \in [t]$ is the edge
timestamp and $\epsilon \in \mathbb{Z}_{+}$ is the transition time from $u$ to $v$.
Additionally, each vertex has a delay time $\delta:V \rightarrow
\mathbb{Z}_{\geq 0}$.

\smallskip
We consider the following cases.

\smallskip
{Encoding only with delay}:
\[
P_{u,\ell,i} = x_u \, \sum_{(u,v,i,\epsilon) \in E} y_{uv,\ell-1,i}
              P_{v,\ell-1,i-\delta(v) }
              + P_{u,\ell,i-1}.
\]

{Encoding only with transition time}:
\[
P_{u,\ell,i+\epsilon} = x_u \, \sum_{(u,v,i,\epsilon) \in E} y_{uv,\ell-1,i}
                  P_{v,\ell-1,i-1}
                    + P_{u,\ell,i-1}.
\]

{Encoding with transition time and delay}:
%
\[
P_{u,\ell,i+\epsilon} = x_u \, \sum_{(u,v,i,\epsilon) \in E} y_{uv,\ell-1,i}
              P_{v,\ell-1,i-\delta(v)} + P_{u,\ell,i-1}.
\]

\smallskip

From Lemmas~\ref{lemma:ktemppath:2} and \ref{lemma:motif:1}, it follows that
existence of a multilinear monomial in the polynomial generated above would
imply the existence of a \pathmotif (or \ktemppath).
The time and space complexity of the algorithm follow.

\subsection{Including wildcard entries}

We now describe how to extend our framework to include wildcard entries.
We can extend the \pathmotif problem so that each vertex is colored
with a set of colors instead of a single color. This extension enables us to
have wildcard entry matches, which can be desirable in scenarios where vertices
are not immediately adjacent but rather some uncertain distance away.


\smallskip
Let $c:V \rightarrow [q]$ be the color mapping defined in the original problem
instance. We introduce an additional color $q+1$ and associate it with each
vertex in the graph.
More precisely, $c':V \rightarrow \{c(v), q+1\}$. In other words, each vertex $v
\in V$ is mapped to two colors; one of which is defined by $c$ and the other is
$q+1$.
In the modified problem instance each vertex can either match to the original
color defined by $c$ or the color $q+1$.


\smallskip
We discuss the approach for solving the \pathmotif problem with wildcard
entries, however, for other variants of the problem the method is similar. 
Given a \pathmotif instance
$(G^\tau,c,M,k)$ with $c: V \rightarrow [q]$, we construct an instance
$(G^\tau,c',M',k')$ such that $c':V \rightarrow \{c(v), q'\}$, $M'=M$, $k'=k$
$q'=q+1$ and query the decision oracle. 
Now if the oracle returns a YES, we report the
existence of a match. Otherwise, we increment $k'=k'+1$, add a color $q'$ to
the multiset i.e, $M'=M'\cup\{q'\}$ and query the decision oracle with instance
$(G^\tau,c',M',k')$. 
Again, if the decision oracle returns a YES, we report the
existence of a match with $k'- k$ wildcard entries, otherwise we repeat the
procedure of adding color $q'$ to $M'$, incrementing $k'$ and querying the
decision oracle up to some maximum value of $k'$.

%

\section{Implementation}
\label{sec:implementation}
We use the design of Bj{\"o}rklund et al.~\cite{BjorklundKKL2015} as a starting
point for our implementation, in particular we make use of their fast
finite-field arithmetic implementation.

\mpara{Intuition.} A high-level intuition of the approach is as follows: we
assign each variable in a monomial a value in the field $\mathit{GF}(2^b)$. The
multiplication between any two field variables is defined as a XOR operation.
Likewise, if we multiply two variables with the same value they cancel out each
other and the resulting monomial has a zero value. So, even though the generating
polynomial has monomials that are not $x$-multilinear, the contributions from
such monomials will cancel out during the evaluation. It is to be noted that
the actual implementation is not identical to this description, but the
high-level idea is similar. For the implementation details of the finite-field
arithmetic, we refer the reader to the work of Bj{\"o}rklund et al.~\cite[\S\,3.3]{BjorklundKKL2015}.

\smallskip
A natural challenge in implementation engineering is to saturate the arithmetic
bandwidth of the hardware, while simultaneously keeping the memory pipeline
busy. Modern computing architectures have high memory bandwidth, however, the
increase in bandwidth comes at the cost of latency. More precisely, after the
processor issues a memory fetch instruction it takes many clock cycles to fetch
the data from the main memory and make it available on the registers. Often the
memory latency is orders of magnitude greater than the latency of arithmetic
operations. Now, the challenge is to keep the processer busy with enough
arithmetic instructions for computation meanwhile the memory pipeline is busy
fetching the data for subsequent computations. 
The memory interface can be effectively utilized using coalesced memory accesses,
by arranging the memory layout such that the data used for consecutive
computation is available in consecutive memory addresses. 
The arithmetic bandwidth can be saturated by enabling parallel executions of
the same arithmetic operations, which are enabled using vector extensions. 
More precisely, if
we are executing the same arithmetic operation on different operands, 
then we can group the operands using vector extensions to execute arithmetic operations
in parallel, thereby increasing the arithmetic throughput.
The combination of memory coalescence and vector extensions 
are often used to speedup the computation in the algorithm-engineering
community~\cite{BellG2008}.

\smallskip
Our engineering effort boils down to implementing the generating
function~(\ref{eq:gen-polynomial:3}) and evaluating the recurrence at $2^k$
random points. Specifically, we introduce a domain variable $x_v$ for each $v
\in V$ and a support variable $y_{uv,\ell,i}$ for each $\ell \in [k]$ and
$(u,v,i) \in E^\tau$. In total, there are $\bigO(n)$ domain variables and
$\bigO(mk)$ support variables. 
The values of variables $x_v$ are computed using
Equation~(\ref{eq:motif:1}) and
the values of variables $y_{uv,\ell,i}$ are assigned
uniformly at random using a pseudorandom number generator, on the fly without
storing in memory. 
Observe that the variables $x_v$ and
$y_{uv,\ell,i}$ are used exactly once during the computation of
$\mathcal{P}_{k,t}$. Recall that in theory our algorithm has a false
negative probability of $\frac{2k-1}{2^b}$, however, in practice the
false negative probability depend on the quality of the random number
generator.
Our implementation of the recurrence in Equation~(\ref{eq:gen-polynomial:3})
loops over three variables: the outermost loop is over $[k]$, second loop over
$[t]$ and final loop over $V$. So we compute the value of $P_{u,\ell,i}$ for all
$u \in V$ with $\ell$ and $i$ fixed. Precisely, for each iteration of inner most
loop we compute the value of ${P}_{u, \ell, i}$.

\smallskip
The implementation borrowed from our earlier work \cite{conf-paper} uses
$\bigO(ntk)$ memory, nevertheless, in our current work we improve the memory
footprint of the implementation to $\bigO(nt)$ memory, as claimed in theory.
For simplicity, we refer our implementations of using $\bigO(ntk)$
and $\bigO(nt)$ memory as \emph{genf-1} and \emph{genf-2}, respectively. 
Recall that computing the $\ell$-th polynomial encoding $\mathcal{P}_{\ell,i}$ would
require only the $(\ell-1)$-th polynomial encodings $\mathcal{P}_{\ell-1,i-1}$ and
$P_{u,\ell,i-1}$. 
Storing $\mathcal{P}_{\ell,i}$ and $\mathcal{P}_{\ell-1,i}$
for all $i \in [t]$ requires $2 nt$ memory. However, computing
$\mathcal{P}_{\ell+1,i}$ for all $i \in [t]$ does not require
$\mathcal{P}_{\ell-1, i}$, so we can reuse the memory used
to store $\mathcal{P}_{\ell-1, i}$. This is achieved by swapping the array
pointers and initializing the array at each iteration of the outermost loop over
$[k]$.

\smallskip

\smallskip
If we observe carefully, the dynamic programming recursion~(\ref{eq:gen-polynomial:3}),
computing $P_{u,\ell,i}$ is independent for each vertex.
More precisely, computing $P_{u,\ell,i}$ is independent
for each $u \in V$, provided that we fix the values of $\ell$ and $i$. 
So the algorithm is thread-parallelizable up to $n$ threads. 
Likewise, we employ OpenMP API using the ``\texttt{\small omp parallel for}''
construct with \texttt{\small default} scheduling over the vertices in $V$ to
achieve thread-level parallelism. 
Furthermore,
we need to evaluate at $2^k$ random points, that is, we need $2^k$~ran\-dom
substitutions of the variables in $\mathbf{z}$ to evaluate $\mathcal{P}_{k,t}$
and these substitutions are independent of each other. 
More precisely, vector parallelization is achieved on $2^k$
random points of evaluation, since we perform the same arithmetic operation
on different set of data. 
So the algorithm is
vector-parallelizable up to $2^k n$. 
In order to reduce the memory access latency we organize our memory layout as $k
\times t \times n$ and $t \times n$ for \emph{genf-1} and \emph{genf-2}, respectively;
furthermore, we employ hardware pre-fetching \cite[\S\,3.6]{BjorklundKKL2015} to
saturate the memory bandwidth.

\smallskip
Our implementation is written in the C programming language with OpenMP constructs
to achieve thread-level parallelism.
The source code is compiled using
\texttt{\small -march=native} and \texttt{\small -O5} optimization flags to enable
architecture-specific optimization and instruction set extensions. The running
time is measured using OpenMP time interface \texttt{\small omp\_get\_wtime} and memory
usage is tracked using wrapper functions around standard C memory allocation
subroutines \texttt{\small malloc} and \texttt{\small free}.
We support extracting an optimum solution using two
approaches: the first approach uses
self-reducibility of the interval oracles, based on the work of Bj{\"o}rklund et
al.~\cite{BjorklundKK-esa2014}; and our second approach makes use of
vertex-localization and temporal-DFS.
We support both directed and undirected graphs.

\subsection{Preprocessing}
\label{sec:implementation:preproc}
We take advantage of two preprocessing steps to reduce the size of the input
graph: ($i$) we remove all vertices whose vertex color do not match with the
multiset colors; ($ii$) we merge the temporal graph instance to a non-temporal
graph, build a vertex-localized sieve on the non-temporal instance and reconstruct a
temporal graph using the list of vertices, which are incident to at least one match in
the vertex-localized sieve.

\smallskip
Recall that in the vertex-localized sieving instead of working with polynomial encoding of
all walks for all vertices, we localize the generating polynomial to each
vertex. More precisely, we construct polynomial encoding of all walks ending at
each vertex, independently and perform $2^k$ random evaluations for the
localized polynomial. If the sieve evaluates to a non-zero value, it indicates
that there exists at least one path ending at vertex satisfying the constraints.
However, in case of non-temporal graphs, the sieve evaluates to a non-zero if the
vertex is incident to (part of) at least one match, this helps us to obtain a
list of vertices that are incident at least one match, furthermore helps us to
remove all vertices that are not incident to a match. We suggest the reader to refer the work of
Kaski et al.~\cite{kaskiLT2018} for a detailed discussion of
vertex-localization in non-temporal graphs.

\smallskip
For preprocessing using vertex-localization, we engineer an implementation of the
generating function~(\ref{eq:gen-polynomial:1}) and evaluate the recurrence at
$2^k$ random points. Specifically, we compute a set of polynomials
\{$P_{u,k}$: for each $u \in V$\} and evaluate them independently.
Finally, we obtain a set of vertices for which the polynomial $P_{u,k}$
evaluates to a non-zero term.
We introduce a domain variable $x_v$ for each
$v \in V$ and a support variable $y_{uv,\ell}$ for each $\ell \in [k]$ and
$(u,v) \in E$. In total, there are $\bigO(n)$ domain variables and $\bigO(m)$
support variables. 
The values of variables $x_v$ are computed using
Equation~(\ref{eq:motif:1}) and
the values of variables $y_{uv,\ell}$ are assigned
uniformly at random using a pseudorandom number generator, on the fly without
storing in memory. 
%
%
Our implementation of the recurrence in
Equation~(\ref{eq:gen-polynomial:1}) loops over two variables: the outer loop is
over $[k]$ and the inner loop over $V$. We compute the value of $P_{u,\ell}$
for all $u \in V$ in the inner loop for each iteration of
$\ell \in [k]$. Precisely, for each iteration of the inner
most loop we evaluate a family of polynomials {${P}_{u, \ell}$: for each $u \in V$},
independently.
The generating function ~(\ref{eq:gen-polynomial:1}) has an asymptotic memory
complexity of $\bigO(n)$. More precisely, we
need $2 \cdot n$ memory to store $\mathcal{P}_\ell$ and $\mathcal{P}_{\ell-1}$.

\smallskip
The first preprocessing step is trivial, however, the approach is ineffective
for \colorfulpath, \sdcolorfulpath problem and also if a vertex matches to more
than one color (for example, while using wildcard entries). 
The second approach is non-trivial and unique to our generating
function construction, which can be employed
to further reduce the graph size even after applying the trivial first step. 
We make use of the fact that if there exists a temporal
path in a temporal graph it implies that there exist a path in the corresponding
non-temporal instance of the graph, however, the vice versa is not always true.
To take advantage of this fact, we build a vertex-localized sieve on the
non-temporal instance, which return a list of vertices that are incident to at
least one match. Furthermore, we reconstruct a temporal graph using the vertices
that are incident to at least one match on the static graph.  In practice, it is
expected that there most likely will not be many matches, as a consequence, the
size of the input graph could reduce significantly depending on the number of
incident vertices. The runtime of our algorithm rely on the number of target
matches in the graph only if the preprocessing step is applied.  Recall that our
algorithm is randomized and has a per-vertex false negative probability of
$\frac{2k-1}{2^{b}}$ \cite{kaskiLT2018}, for most of our experiments we use
field size of $\mathit{GF}(2^{64})$, which makes the false negative probability
very close to zero and a single run of vertex-localized sieve construction is
sufficient. However, for smaller field size multiple repetition of the
experiment is required to avoid false negatives.

\smallskip
Our software is available as open source \cite{conf-code, journal-code}.
\section{Experimental setup}
\label{sec:experimental-setup}
In this section we discuss our experimental setup.

\subsection{Baseline}
\label{sec:experimental-setup:baseline}
For the problems considered in this paper we are not aware
of any known baselines that provide an exact or approximate solution.
Thus, to compare our techniques with non-algebraic methods, 
we implement two naive baselines:
\squishlist
\item[($i$)] an \emph{exhaustive-search} algorithm using temporal DFS, and
\item[($ii$)] a \emph{brute-force} algorithm based on random walks.
\squishend

\mpara{Exhaustive-search.} 
In this technique, we pick a vertex $v \in V$ and perform
temporal DFS starting from $v$ by restricting the depth of the DFS to $k$.
Every time we reach depth $k$, we check if the set of vertices in the path satisfies
the multiset colors. If it does, we update the solution to our optimal
solution only if the maximum timestamp in the temporal path is less than the
current solution, otherwise we continue. We repeat the process for
each vertex. 

\smallskip
The runtime of the exhaustive-search algorithm is bounded by $\bigO(n\Delta^k)$,
where $\Delta$ is the maximum degree of the graph. The runtime analysis is as
follows: as a worst case let us assume that each vertex has degree $\Delta$. While
performing temporal DFS each vertex has one incoming edge and $\Delta-1$ outgoing edges. 
So the temporal DFS tree of depth $k$ has at most $\bigO(\Delta^k)$ vertices to
visit.\footnote{More precisely, the number of vertices in the temporal DFS tree is
$\frac{(\Delta-1)^{k-1}+\Delta-3}{\Delta-2}$.}
Furthermore we perform temporal DFS starting from each vertex in the graph
making the overall complexity of the algorithm $\bigO(n \Delta^k)$.
Observe carefully that a temporal DFS on each vertex is independent
and it can be parallelized up to $n$-threads. So we use thread-level parallelism
to speedup the computation. Note that the runtime bound of exhaustive search is
loose, in practice the algorithm performs much faster than the theoretical
bounds. A more tighter bound can be obtained with an assumption on degree and timestamp
distribution, for example, $d$-regular or power-law distribution for graph
degrees.

\mpara{Random-walk.}
In this approach we pick a vertex $v$ uniformly at random and perform
a random temporal walk by restricting its length to $k-1$. We check if the
random walk is a path and the vertex colors of the walk matches the multiset.
If it does, we update the solution to the optimal solution provided that the maximum timestamp
in the current solution is less than the optimal solution, otherwise we continue to the next random
walk. The runtime of the algorithm is bounded by the number of random walk
iterations. This approach is also thread parallelizable since each random walk
is independent. 

\smallskip
It is to be noted that, our random-walk implementation failed to report an
optimal solution even for small graph instances with $m=10^4$ and $k=5$ even after
hundred million random iterations. For this reason, we experiment only with the
exhaustive-search baseline.

\smallskip
Our baselines are implemented in the C programming language.
Furthermore, they are optimized for architecture-specific instruction set 
and parallelized to achieve thread-level parallelism.

\subsection{Hardware}
To evaluate our algorithm implementations we experiment with two hardware configurations.

\spara{\em Workstation}:~A Fujitsu Esprimo E920 with 1$\times${3.2}\,GHz Intel Core
i5-4570 CPU, 4 cores, 16\,GB memory, Ubuntu, 
and $\texttt{\small gcc}$\,v\,5.4.0.

\spara{\em Computenode}:~A Dell PowerEdge C4130 with 2$\times${2.5}\,GHz
Intel Xeon\,2680\,V3 CPU, 24 cores, 12 cores/CPU, 128\,GB memory, Red Hat,
and $\texttt{\small gcc}$\,v\,6.3.0.

\smallskip
Our executions make use of all the cores, advanced vector extensions (AVX-2) and
PCMULQDQ instruction set for fast finite field arithmetic.  All the experiments are
executed on the workstation configuration with the only exception for the
experiments with {\it scaling to large graphs}, which are executed on the computenode.

\subsection{Input graphs}
\label{sec:inputgraphs}
We evaluate our methods using both synthetic and real-world graphs.

\mpara{Synthetic graphs.}
We use two types of synthetic graphs: (a) random $d$-regular graphs and (b)
power-law graphs. The regular graphs are generated using the \emph{configuration
model}~\cite[\S\,2.4]{Bollobas01}. The configuration model for power-law graphs
is as follows: given non-negative integers $D$, $n$, $w$, and $\alpha < 0$, we
generate an $n$-vertex graph such the following properties roughly hold: ($i$) the sum of
vertex degrees is~$Dn$; ($ii$) the distribution of degrees is supported at $w$
distinct values with geometric spacing; and ($iii$) the frequency of vertices
with degree $d$ is proportional to~$d^\alpha$. The edge timestamps are assigned
uniformly at random in range~$\{1,\dots,t\}$.
Both directed and undirected graphs are generated using the same
configuration model, however, for directed graphs the orientation is preserved.
We ensure that the graph generator produces identical graph
instances in all the hardware configurations.
Our graph generator is available as open source~\cite{conf-code}.

\mpara{Real-world graphs.}
For the real-world graphs, we make use of temporal graphs from Koblenz
Network Collection \cite{koblenz}, SNAP \cite{snapnets}, and transport networks
of Helsinki and Madrid \cite{conf-code}.

\smallskip
{\em Transport.} We use the bus, interurban bus, metro, train, tram
networks of Madrid and bus network of Helsinki.
In these datasets, each row is a temporal edge between two locations, i.e., an
unique identifier describing origin and destination, starting time and duration
of travel; origin and destination locations.

\smallskip
{\em Koblenz.} We use 
Chess \href{http://konect.uni-koblenz.de/networks/chess}{({\tt chess})}, 
DNC emails \href{http://konect.uni-koblenz.de/networks/dnc-temporalGraph}{({\tt dnc-emails})},
Elections \href{http://konect.uni-koblenz.de/networks/elec}{({\tt wikipedia-elections})}, and
Epinions \href{http://konect.uni-koblenz.de/networks/epinions}{({\tt epinions-trust})}
temporal graphs from Koblenz Network Collection.

\smallskip
{\em SNAP.} We make use of
Bitcoin alpha \href{http://snap.stanford.edu/data/soc-sign-bitcoin-otc.html}{({\tt bitcoin-alpha})},
Bitcoin otc \href{http://snap.stanford.edu/data/soc-sign-bitcoin-alpha.html}{({\tt bitcoin-otc})},
College messages \href{http://snap.stanford.edu/data/CollegeMsg.html}{({\tt college-msg})}, and
Email EU core \href{http://snap.stanford.edu/data/email-Eu-core-temporal.html}{({\tt email-eu})}
temporal graphs from SNAP.

\smallskip
For a detailed description of the datasets used from Koblenz~\cite{koblenz}and
SNAP~\cite{snapnets}, we
refer the reader check the webpages of the corresponding dataset collections.
We preprocess the datasets to generate a graph by renaming the location
identifiers (or vertices) in the range from $1$ to the maximum number of
locations (or vertices) available in the dataset. We assign an unique identifier for each
discrete timestamp beginning with $1$ and incrementing the identifier by one
for each next available timestamp. By doing so we are avoiding the
timestamps for which there are no temporal edges, thereby reducing the maximum
timestamp value. If the time values in dataset are unix timestamps, then we
approximate the value to the closest (floor) second before assigning an unique
timestamp identifier. The vertex colors are assigned uniformly at random in
range $\{1,\dots,30\}$ and the multiset colors are chosen uniformly at random in
the range $\{1,\dots,30\}$. 

\section{Experimental evaluation}
\label{sec:experimental-eval}

We will now describe our results and key findings.
We define \emph{decision time} to be the time required to decide 
the existence of one solution, and \emph{extraction time} the time
required to extract such a solution.  As discussed previously, extracting a
solution requires multiple calls to the decision~oracle. Our baseline and
scalability experiments are performed on the \colorfulpath problem. 
Recall that in the \colorfulpath problem the set of vertex colors is equal
to the colors in the query multiset. 
As a result, the trivial preprocessing step of removing vertices
with colors not matching the motif query cannot be employed.

\subsection{Baseline} 
Our first set of experiments compares the \emph{extraction} time to obtain an
optimum solution using our algebraic algorithm and the exhaustive search
baseline. In Table~\ref{table:baseline:1}, we report extraction times
of the algebraic algorithm and the baseline for extracting an optimum solution
for:
($i$) five independent
\emph{$d$-regular} random graphs
with $n=10^2,\dots,10^5$ and fixed values of $d=20$, $t=100$, $k=5$;
($ii$) five independent
\emph{power-law} graphs with $n=10^2\dots,10^5$, $D=20$, $w=100$, $k=5$, $\alpha=-0.5$; and
($iii$) 
same as the previous setting but $\alpha=-1.0$.
Vertex colors are assigned randomly in the range $\{1,\dots,k\}$ and the query
multiset is $\{1,\dots,k\}$.
Each graph has at least ten target occurrences 
agreeing with the query multiset colors with different timestamps chosen uniformly at random.
For the baseline we report the \emph{minimum} time of five independent runs, however,
for the algebraic algorithm we report the \emph{maximum}.
\emph{Speedup} is the ratio of the runtime of the baseline 
by the runtime of the algebraic algorithm.
The experiments are executed on the {\em workstation} configuration using all cores.
All runtimes are shown in seconds.

\begin{table*}[t]
\caption{Comparison of extraction time for baseline and algebraic algorithms.}
\label{table:baseline:1}
\centering
\footnotesize
\setlength{\tabcolsep}{0.1cm}
\begin{tabular}{r r r r r r r r r r}
\toprule
\multirow{2}{*}{\shortstack{No. of edges \\($m$)}} &
\multicolumn{3}{c}{Regular} & \multicolumn{3}{c}{Powlaw $d^{-0.5}$} &
\multicolumn{3}{c}{Powlaw $d^{-1.0}$}\\ \cmidrule{2-10}
& Baseline& Algebraic & Speedup
& Baseline & Algebraic & Speedup
& Baseline & Algebraic & Speedup \\
\midrule
     1\,040 &  0.05\,s &  0.04\,s & 1.2 &   0.05\,s &  0.04\,s &    1.2 &       0.05\,s &  0.04\,s &    1.2\\
    10\,040 &  0.48\,s &  0.12\,s & 4.1 &   1.03\,s &  0.11\,s &    9.4 &      10.82\,s &  0.10\,s & {\bf 103.6}\\
   100\,040 &  5.62\,s &  1.06\,s & 5.3 &  30.38\,s &  1.07\,s &   28.5 & 20\,430.16\,s &  0.92\,s & {\bf 22\,306.1}\\
1\,000\,040 & 74.01\,s & 12.02\,s & 6.2 & 808.24\,s & 11.18\,s & {\bf 72.3} &       -- & 10.03\,s &     --\\
\bottomrule
\end{tabular}
\end{table*}

\begin{figure*}[t]
\centering
\setlength{\tabcolsep}{0.01cm}
\begin{tabular}{c c c}
  \includegraphics[width=0.33\linewidth]{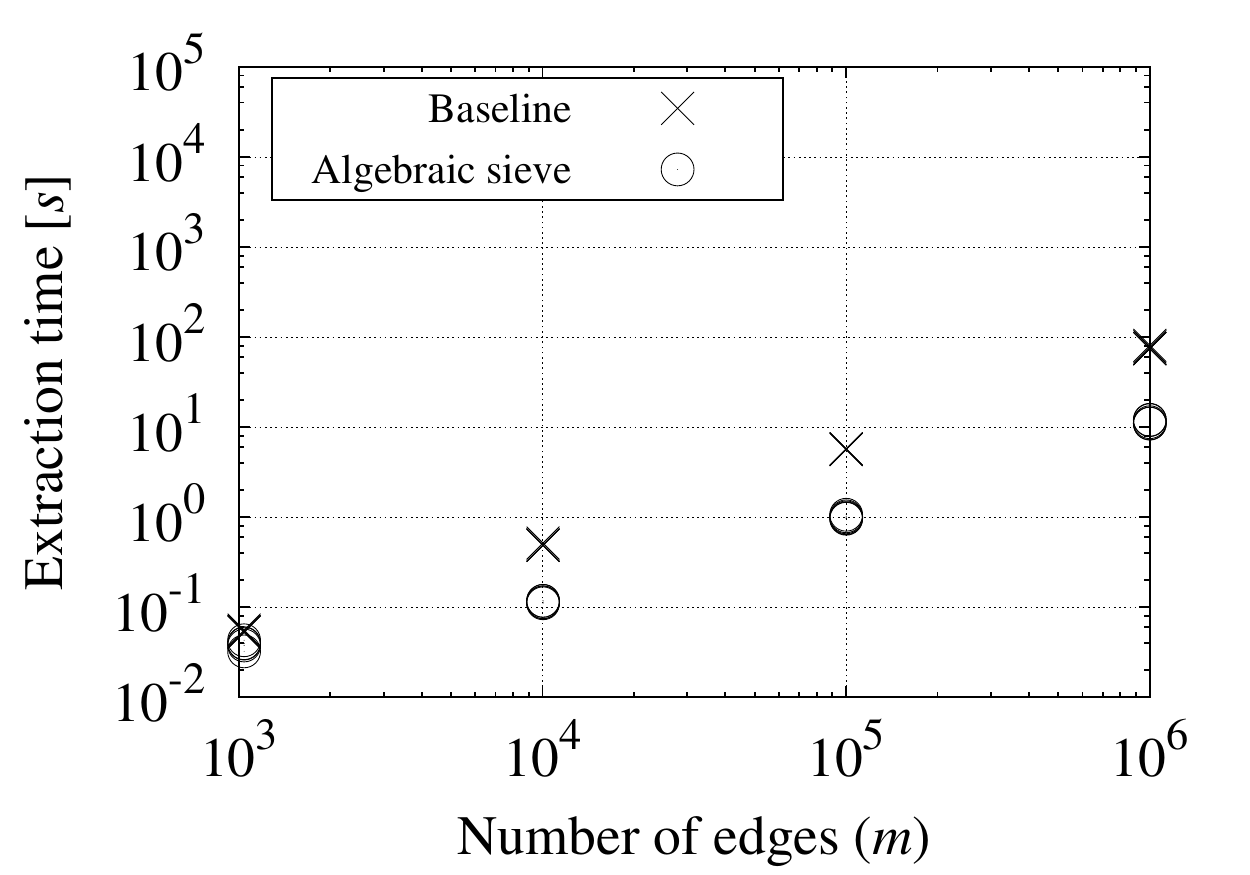} &
  \includegraphics[width=0.33\linewidth]{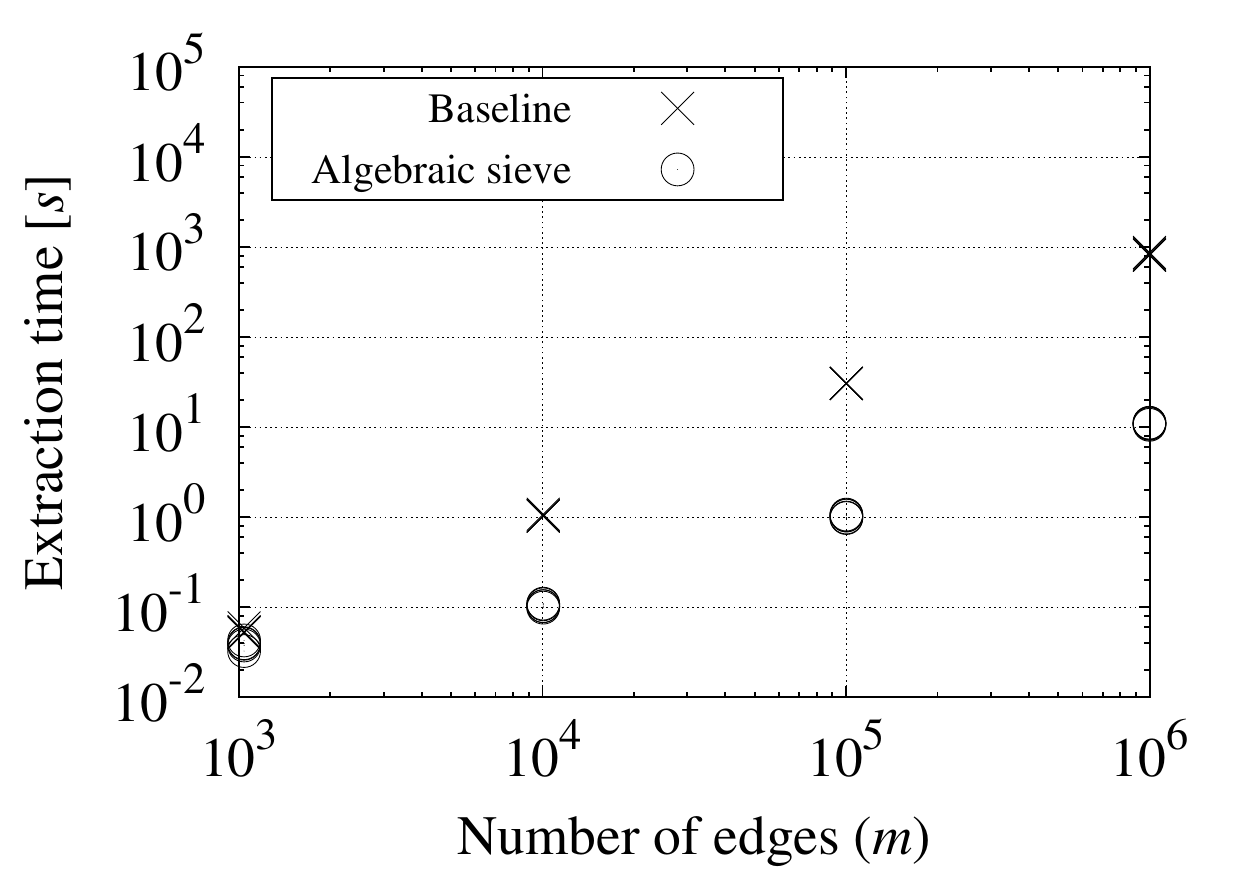} &
  \includegraphics[width=0.33\linewidth]{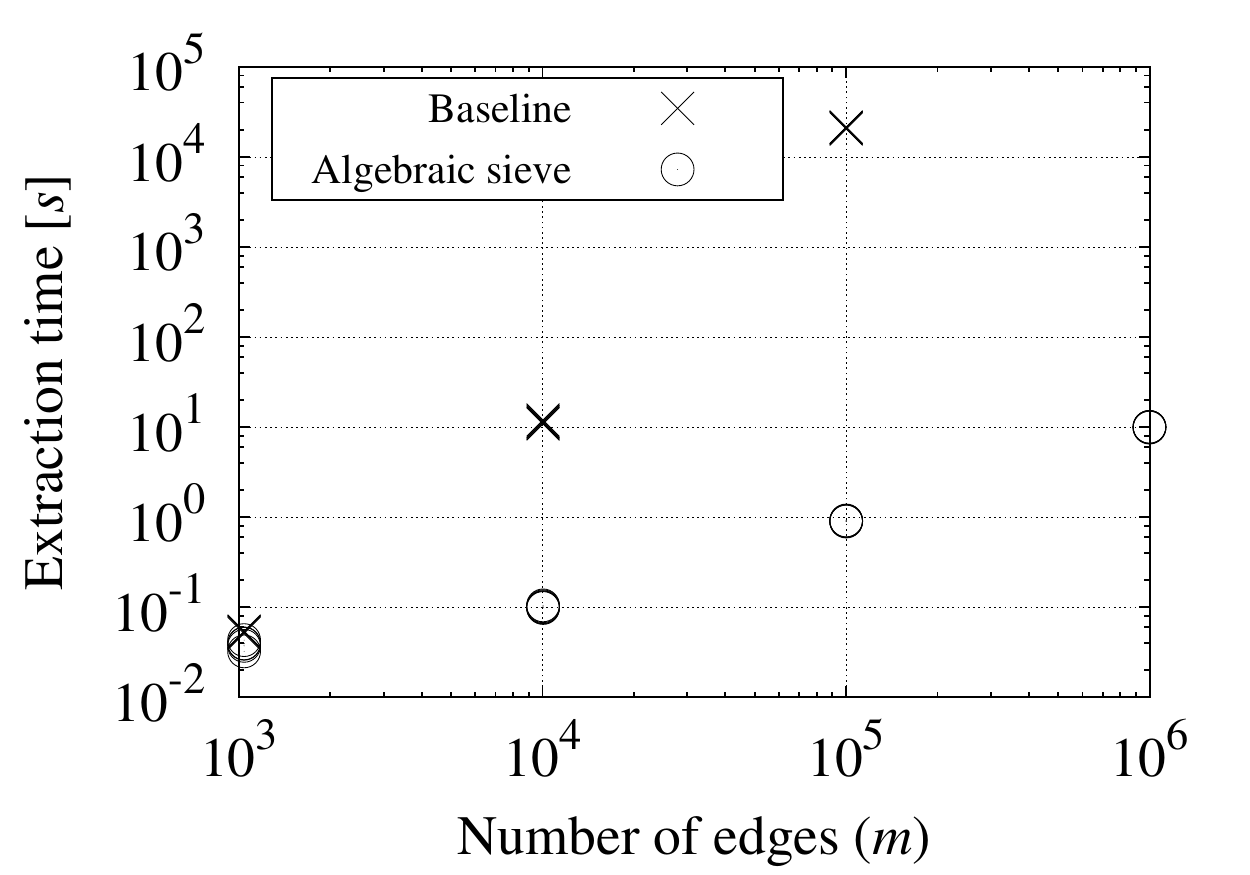}
\end{tabular}
\caption{\label{fig:baseline}%
Comparison of extraction time of baseline and algebraic algorithms. We report
the decision time as a function of the number of edges for $d$-regular random
graph (left), power-law graph with $d=-0.5$ (center), and power-law graph with
$d=-1.0$ (right).%
}
\end{figure*}

\smallskip
Surprisingly, the baseline can compete with the algebraic
algorithm in the case of $d$-regular random graphs, however, the runtimes have high
variance across different graph topologies.
On the other hand, the algebraic algorithm is very stable.
For the power-law graphs
with $m=10^5$ edges and query multiset size $k=5$, the algebraic algorithm is at least
\emph{twenty thousand} times faster than the baseline.
Our exhaustive-search implementation fails to report a solution in small graphs
$m=10^3$ with query multiset size $k=10$.

\smallskip
In Figure~\ref{fig:baseline}, we report the extraction time for baseline and algebraic
algorithms to obtain an
optimum solution for five independent \emph{$d$-regular} random graphs with
$n=10^2,\dots,10^5$ and fixed values of $d=20$, $t=100$, $k=5$ (left); five
independent \emph{power-law} graphs with $n=10^2\dots,10^5$, $D=20$, $w=100$,
$t=100$, $k=5$, $\alpha=-0.5$ (center); and $\alpha=-1.0$ (right). The vertex
colors are assigned uniformly at random in the range $\{1,\dots,k\}$ and the query multiset is
$\{1,\dots,k\}$. Each graph instance has at least ten target instances satisfying the
query multiset colors with different timestamps chosen uniformly at random. The
experiments are executed on the \emph{workstation} configuration.

\smallskip
The runtimes of both the baseline and the algebraic algorithms are consistent with
very little variance across independent graph inputs of the same graph topology.
However, the exhaustive-search baseline has high variance in runtime depending on 
the graph topology. For instance, for a power-law graph of size $m=10^5$ with $\alpha=-1.0$, 
the runtime of the exhaustive-search approach is at least three thousand times greater 
than the runtime on a $d$-regular graph of same size.

\subsection{Scalability}
Our second set of experiments studies scalability with
respect to: ($i$) number of edges, ($ii$) query multiset size, ($iii$) number of
timestamps, and ($iv$) vertex degree.

\begin{figure*}[t]
\centering
\setlength{\tabcolsep}{0.5cm}
\begin{tabular}{c c}
  \includegraphics[width=0.35\linewidth]{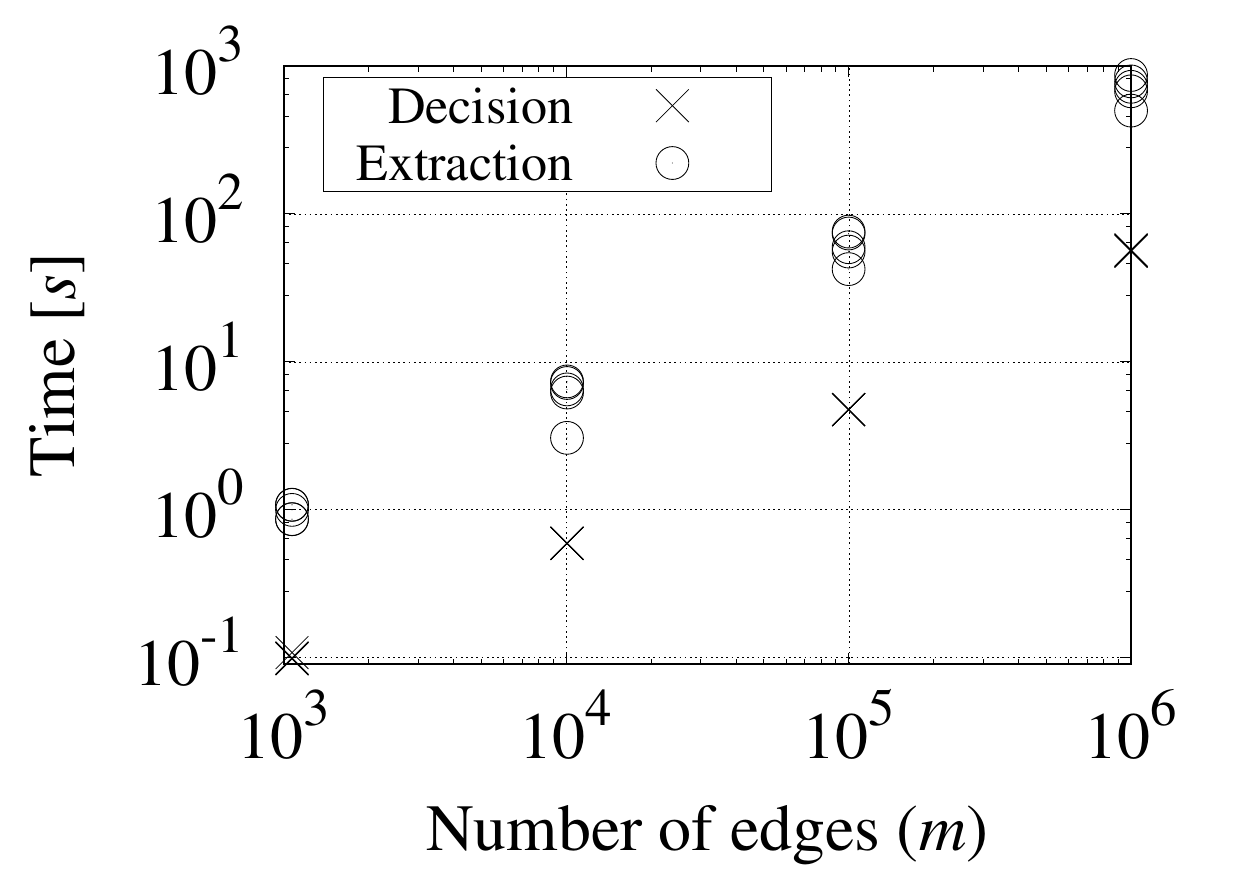}&
  \includegraphics[width=0.35\linewidth]{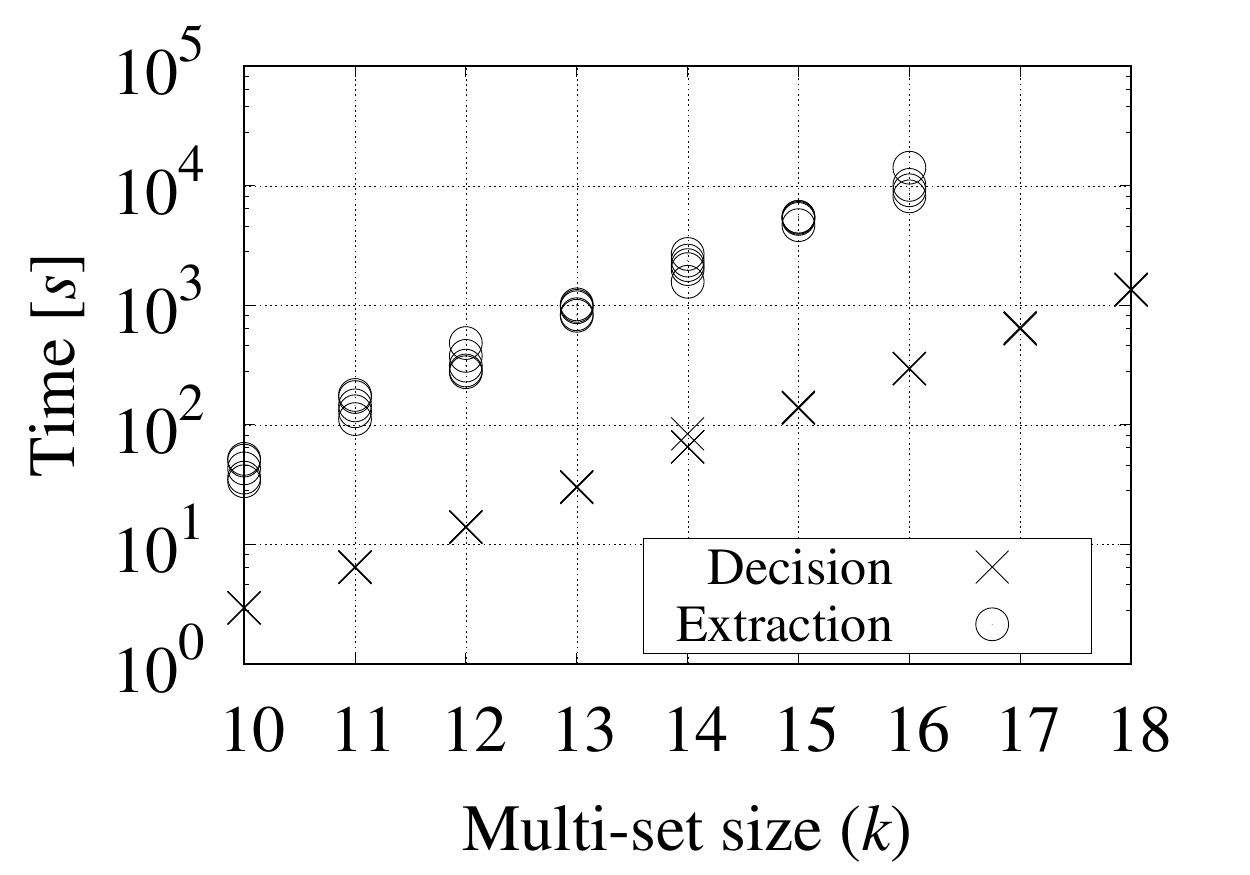}\\
  \includegraphics[width=0.35\linewidth]{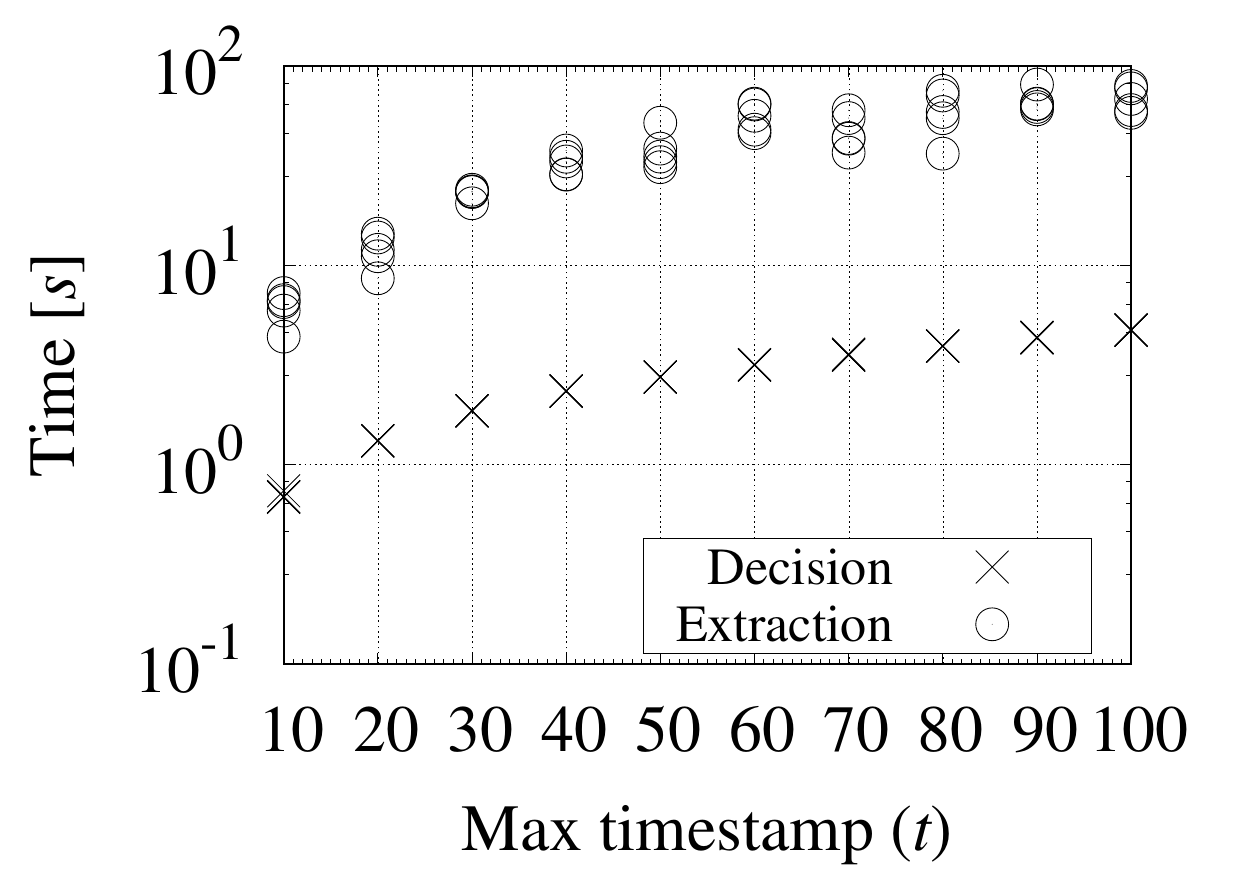}&
  \includegraphics[width=0.35\linewidth]{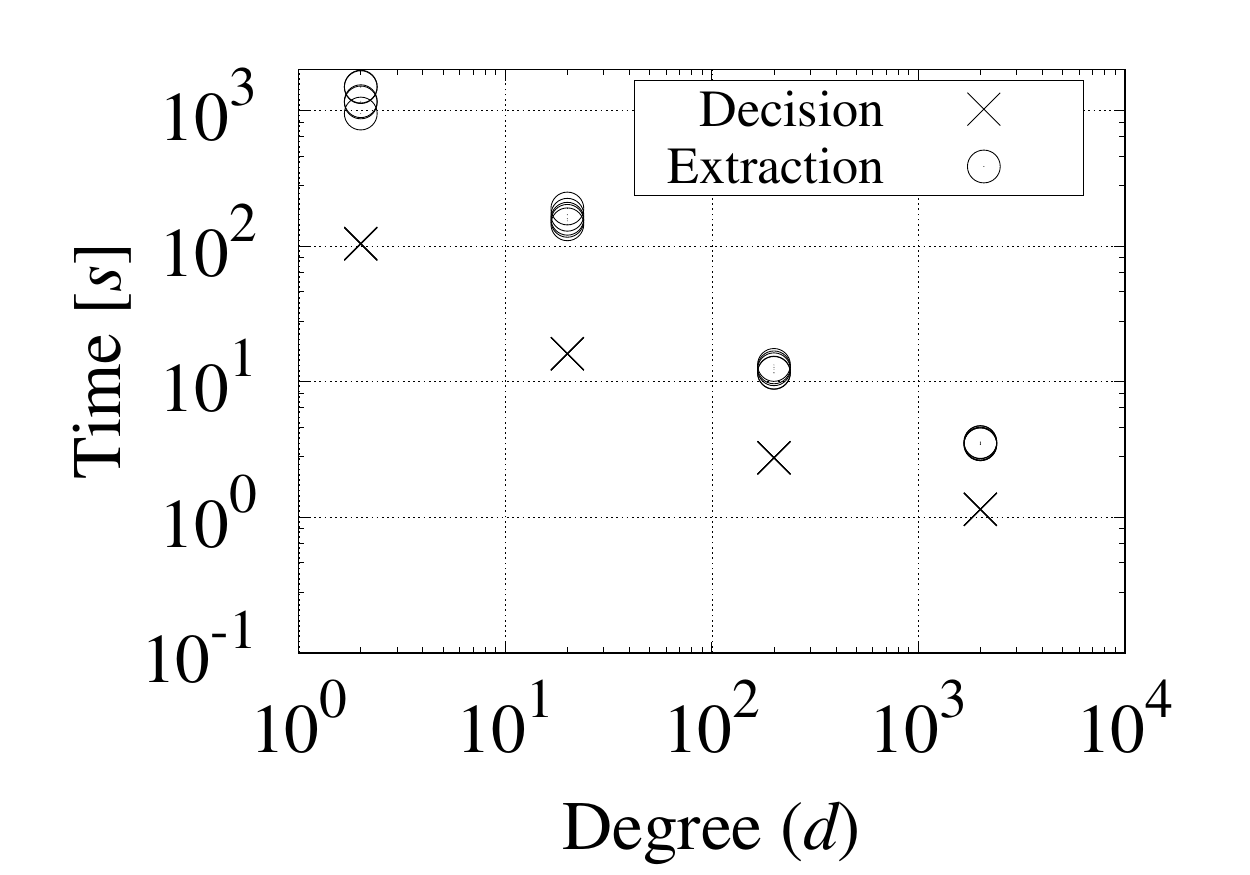}
\end{tabular}
\caption{\label{fig:scalability}%
Scalability results. Runtime as a function of the number of
edges (top-left); query multiset size (top-right); number of timestamps (bottom-left);
and degree (bottom-right).%
}
\end{figure*}

\smallskip
Figure~\ref{fig:scalability}\,(top-left) reports decision and extraction times for
$d$-regular random graphs with $n=10^2,\dots,10^5$ and fixed values of $d=20$, $k=8$, $t=100$.
Figure~\ref{fig:scalability}\,(top-right) shows decision time for
$d$-regular random graphs with $k=10,\dots,18$ and fixed values of $n=10^3$, $d=20$, $t=100$.
Vertex colors are assigned randomly in the range $\{1,\dots,k\}$ and the query
multiset is $\{1,\dots,k\}$.
We observe linear scaling with increasing the number of edges and
exponential scaling with increasing the query multiset size,
as expected by the theory.
The variance in decision time is very small for different inputs, however,
it is higher for extraction time.
The algorithm is able to decide the existence of a solution in less than two minutes for graphs up to
one million edges with query multiset size $k=8$ and extract a solution in less than
sixteen minutes.

\smallskip
%
Next we study the effect of graph density on scalability.
Figure~\ref{fig:scalability}\,(bottom-left) shows decision and extraction
times for 
$d$-regular random graphs with $t=10,\dots,100$ and
fixed values of $n=10^4$, $d=20$, $k=8$.
Figure~\ref{fig:scalability}\,(bottom-right) shows decision and extraction times for
$d$-regular random graphs with $d=2,20,200,2000$ and
corresponding values of $n=10^6,\dots,10^3$, with fixed $m=10^6$ and $t=100$.
We observe that
the algebraic algorithm performs better for dense graphs.
A possible explanation is that for sparse graphs there is not
enough work to keep both the arithmetic and the memory pipeline busy,
simultaneously.

\smallskip
All experiments are executed on the \emph{workstation} configuration using all
cores with undirected graphs. Additionally, we make sure that each input
instance has at least ten solutions agreeing with the multiset of colors, with different
timestamps chosen uniformly at random. We also verified the correctness of our
implementation with graph instances having an unique solution and no solution.

\subsection{Using thread-level parallelism}

\begin{figure*}[t]
\centering
\setlength{\tabcolsep}{0.5cm}
\begin{tabular}{c c}
  \includegraphics[width=0.35\linewidth]{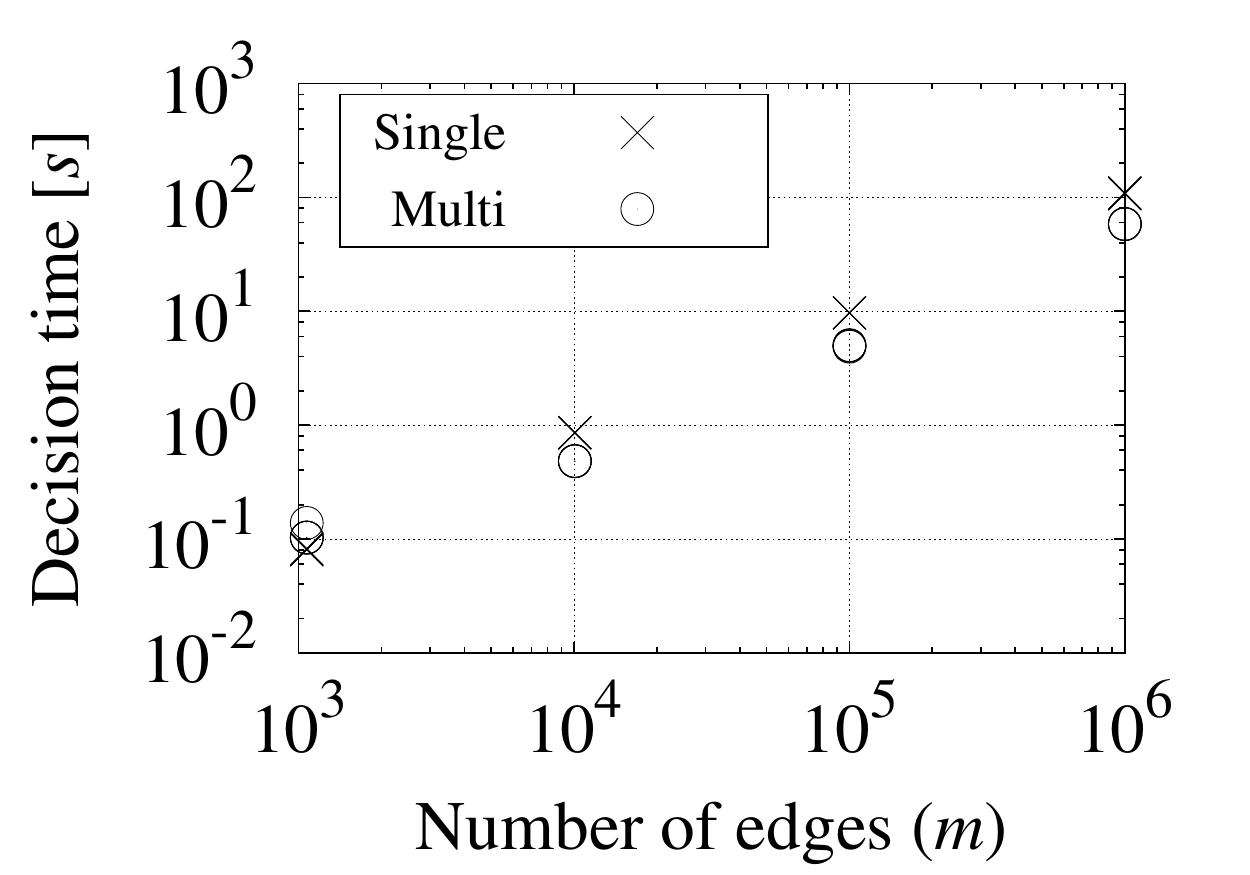}&
  \includegraphics[width=0.35\linewidth]{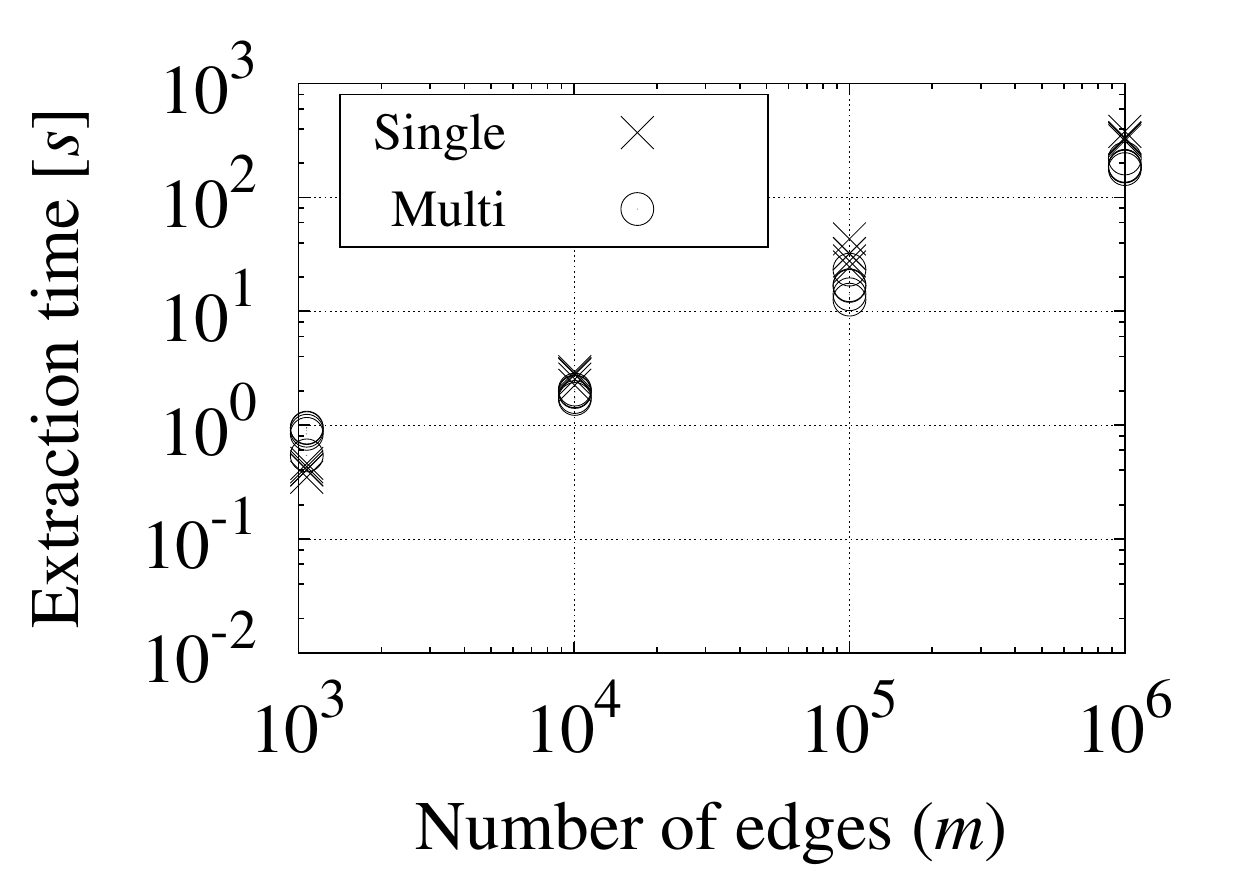}
\end{tabular}
\caption{\label{fig:multithreading:1}%
Using thread-level parallelism. Decision time (left) and extraction time (right) as a function of
number of edges for sparse graphs with degree $d=20$.
}
\end{figure*}

\begin{figure*}[t]
\centering
\setlength{\tabcolsep}{0.5cm}
\begin{tabular}{c c}
  \includegraphics[width=0.35\linewidth]{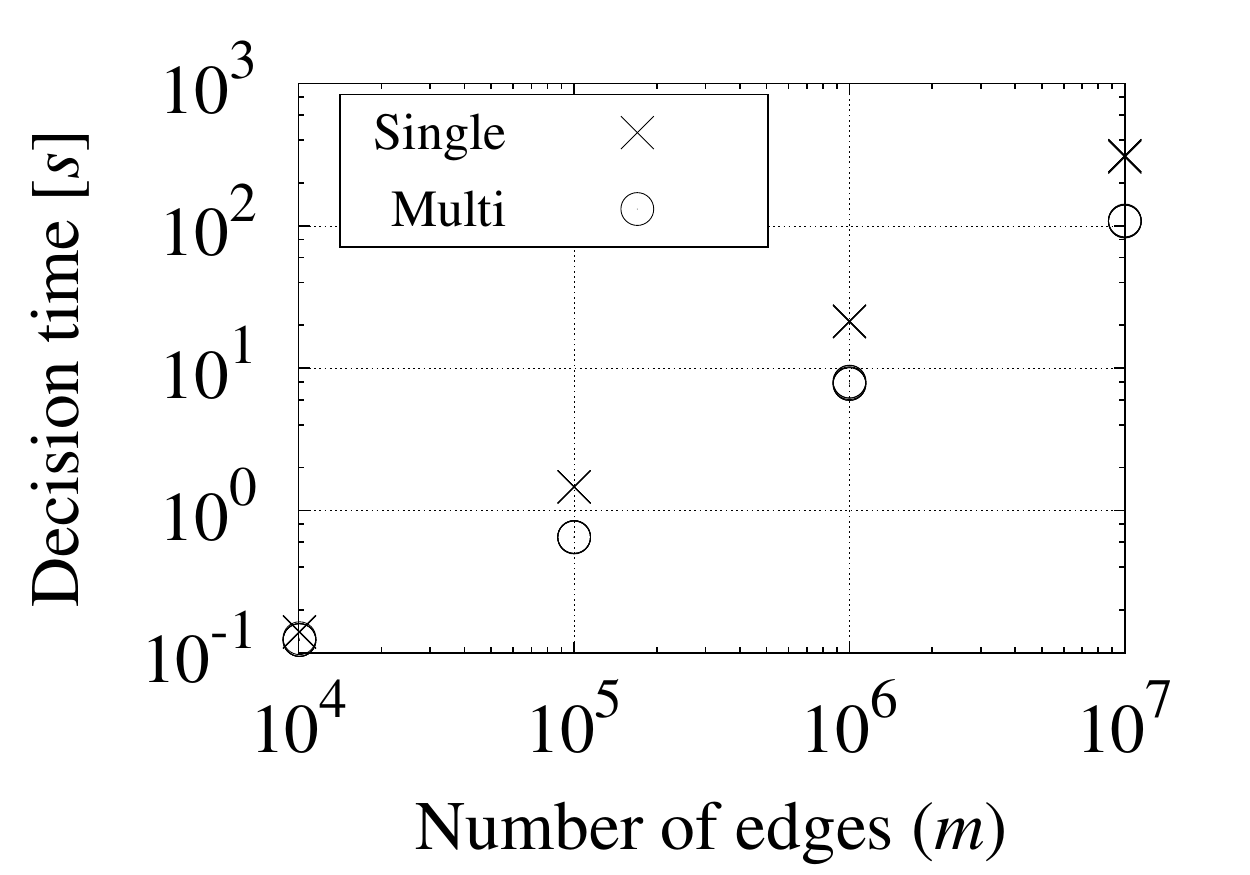}&
  \includegraphics[width=0.35\linewidth]{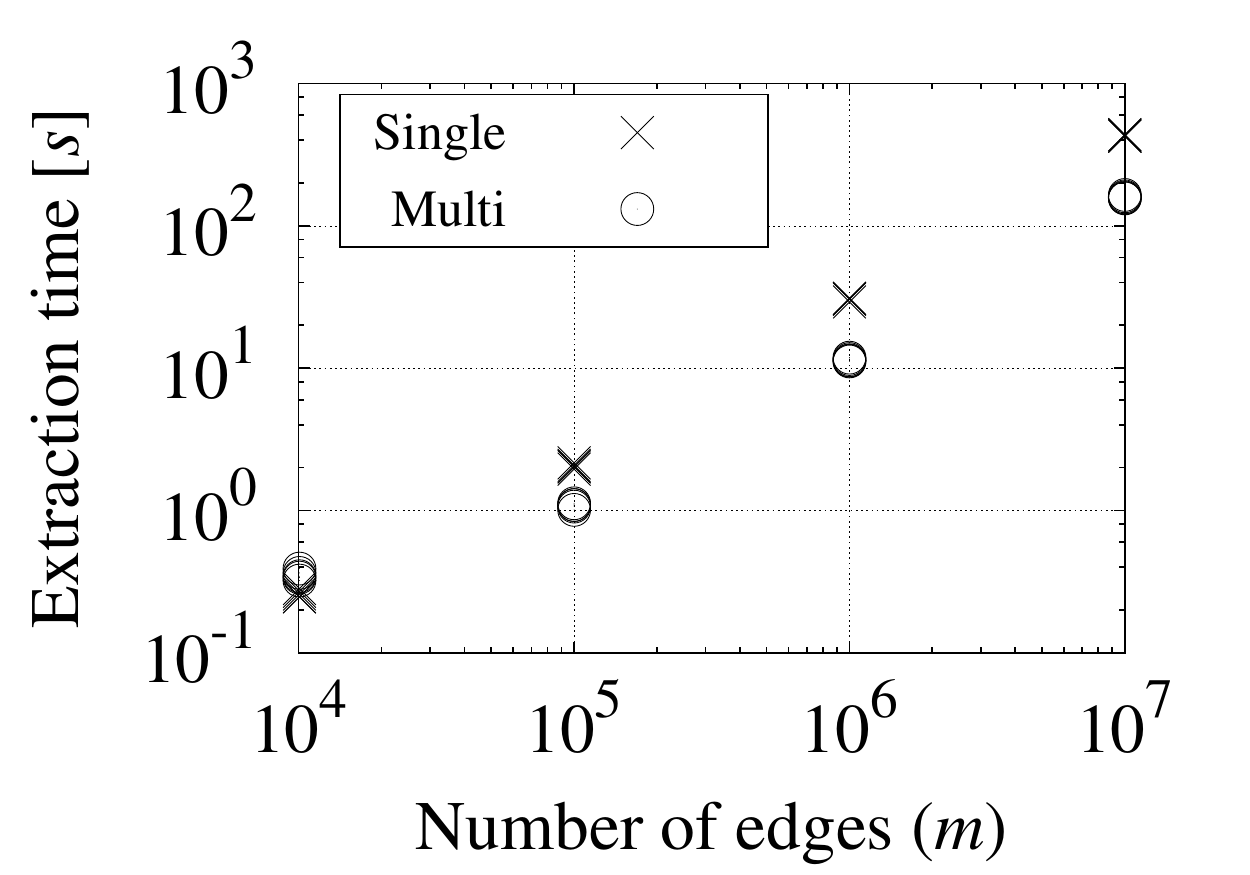}
\end{tabular}
\caption{\label{fig:multithreading:2}%
Using thread-level parallelism. Decision time (left) and extraction time (right) as a function of
number of edges ($m$) for dense graphs with degree $d=200$.
}
\end{figure*}

In our third set of experiments we compare the decision and extraction times 
for single and multi-threaded variants of our implementation. 
We display the decision and extraction times for
five independent $d$-regular random graph instances for each configuration of
$n=10^3,\dots,10^5$, $t=100$, $k=8$ with $d=20$ (sparse graphs) in
Figure~\ref{fig:multithreading:1} and $d=200$ (dense graphs) in
Figure~\ref{fig:multithreading:2}.
Table~\ref{table:multithreading} reports the decision and extraction time along
with the \emph{speedup}, which is the ratio of single and multi-threaded
runtimes.The reported runtime is the \emph{maximum} of five independent
repetitions and all runtimes are shown in seconds.
When scaled to four threads in the \emph{workstation} configuration, we achieved up to
factor 2.7 speedup for the dense graphs. However, for the sparse graphs the
speedup is modest with up to factor 1.9 improvement in the runtime.

\smallskip
Note that
for the dense graphs in our experiments,  
for each vertex there exists at least one edge at
every timestamp with high probability. Conversely, there exists a set of
vertices with no edges at each timestamp. However, in our
current implementation it is not guaranteed that the vertices with no edges are
distributed evenly across the threads while scheduled using the \texttt{OpenMP}
default scheduling. A possible explanation for not achieving a perfect speedup
of factor four is due to the inefficiency in the load-balancing mechanism. 
This presents us with a challenge to achieve perfect load balancing by dynamically
redistributing the set of vertices with no edges uniformly across the threads
at each timestamp. 
Additionally, since the graph instances are sparse, there is not enough
load on the threads to keep the arithmetic pipeline busy.


%

\begin{table}[t]
\caption{\label{table:multithreading} %
Using thread-level parallelism.}
\centering
\footnotesize
\setlength{\tabcolsep}{0.1cm}
\begin{tabular}{r r r r r r r r}
\toprule
\multirow{2}{*}{\shortstack{No. of edges \\($m$)}} &
\multicolumn{3}{c}{Decision} & \multicolumn{3}{c}{Extraction}\\ \cmidrule{2-7}
& Single & Multi & Speedup
& Single & Multi & Speedup \\
\midrule
{\it Sparse graphs}\\
      1\,070 &     0.08\,s &     0.14\,s &    0.6 &     0.46\,s &     0.95\,s &    0.5\\
     10\,070 &     0.86\,s &     0.48\,s &    1.8 &     2.94\,s &     2.05\,s &    1.4\\
    100\,070 &     9.72\,s &     5.03\,s &    1.9 &    43.22\,s &    23.15\,s &    {\bf 1.9}\\
 1\,000\,070 &   109.95\,s &    58.38\,s &    1.9 &   379.31\,s &   219.33\,s &    {\bf 1.7}\\
\midrule
{\it Dense graphs}\\
     10\,070 &     0.14\,s &     0.13\,s &    1.1 &     0.28\,s &     0.39\,s &    0.7\\
    100\,070 &     1.47\,s &     0.65\,s &    2.3 &     2.16\,s &     1.12\,s &    1.9\\
 1\,000\,070 &    21.42\,s &     8.04\,s &    2.7 &    31.02\,s &    11.85\,s & {\bf 2.6}\\
10\,000\,070 &   311.51\,s &   109.04\,s &    2.9 &   437.53\,s &   164.75\,s & {\bf 2.7}\\
\bottomrule
\end{tabular}
\end{table}

\subsection{Memory footprint}
Recall from \S\ref{sec:implementation} that we have presented two variants 
of the generating-function implementation:
($i$) an implementation borrowed from our earlier work \cite{conf-paper}, which uses
$\bigO(ntk)$ memory (\emph{genf-1}); and ($ii$) a memory-efficient
implementation, which uses $\bigO(nt)$ working memory (\emph{genf-2}).

\smallskip
Our next set of experiments demonstrates 
that the generating-function implementation \emph{genf-2} 
is more memory efficient than \emph{genf-1}, 
without significant change in the runtime.
%
%
Figure~\ref{fig:memory} displays the extraction
time (left) and peak-memory usage (right) for $d$-regular random graphs 
with $n=10^2,\dots,10^5$ with fixed values of $t=100$, $k=8$,  and $d=20$. 
We observe no significant change in the extraction time between \emph{genf-1}
and \emph{genf-2}. However, the reduction in working memory is
significant, for example, in large graphs with $m=10^6$ and $k=8$, \emph{genf-1} uses
at least three times as much memory as \emph{genf-2}. Note that the reported
peak-memory also includes the memory used to store the input graph, which occupy
a significant portion of the working memory. The experiments are executed on the
\emph{workstation} configuration using all cores.

\begin{figure*}[t]
\centering
\setlength{\tabcolsep}{0.5cm}
\begin{tabular}{c c}
  \includegraphics[width=0.35\linewidth]{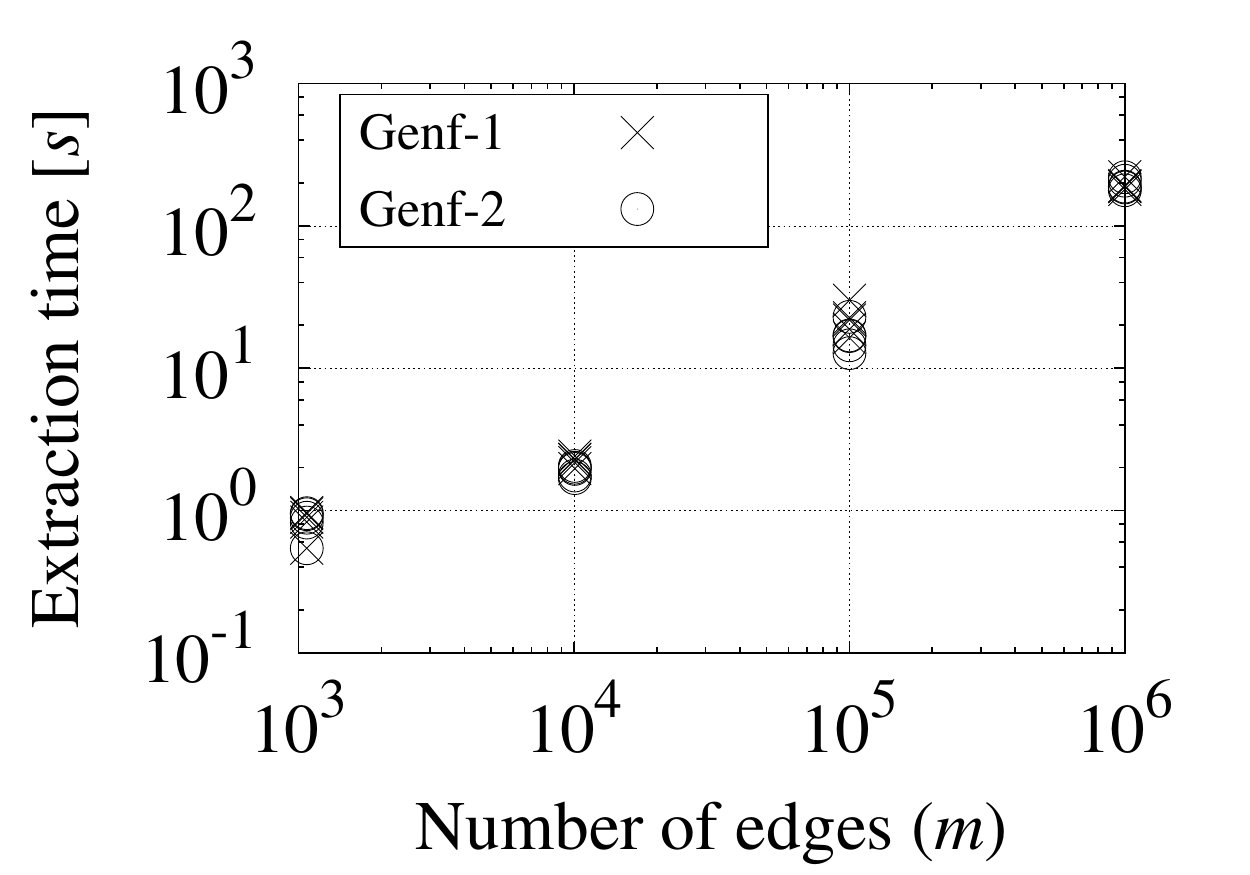}&
  \includegraphics[width=0.35\linewidth]{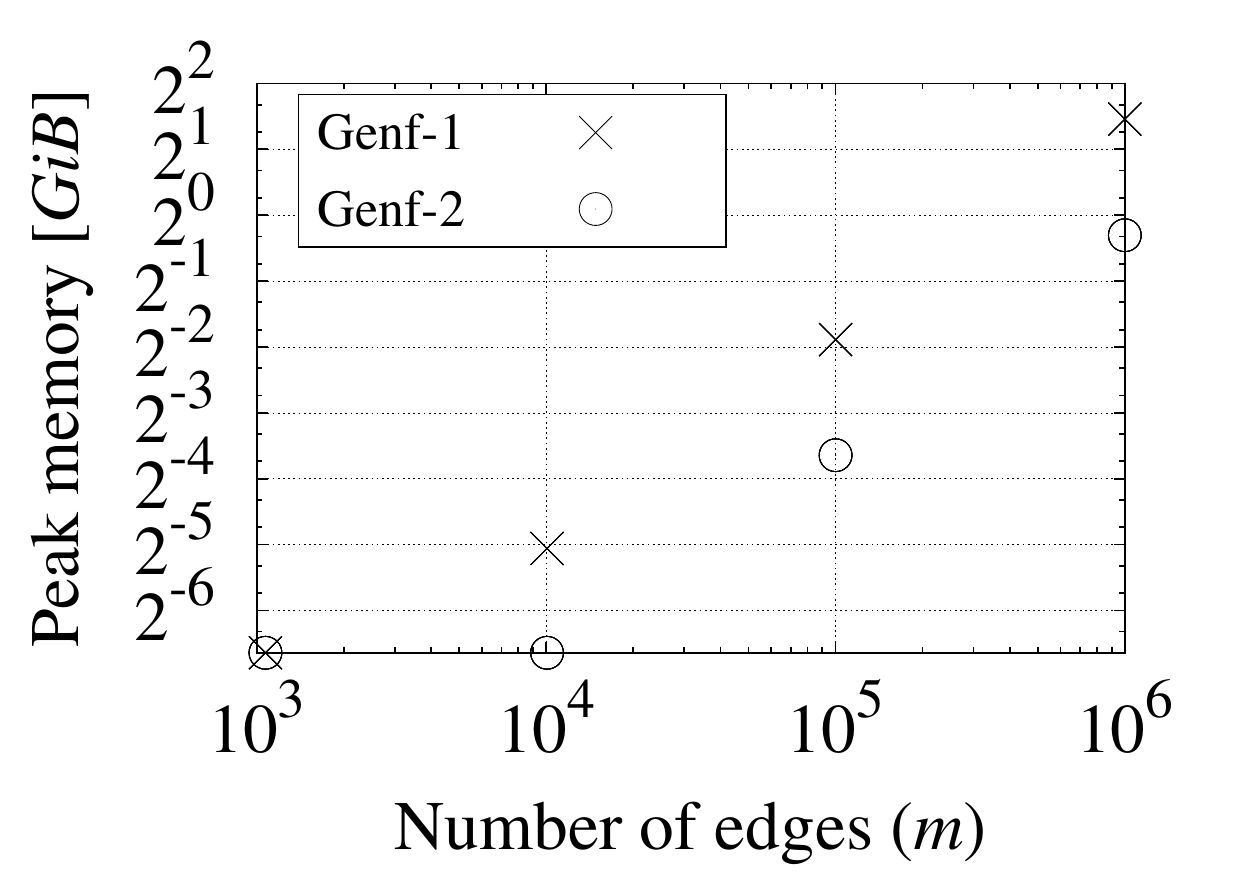}
\end{tabular}
\caption{\label{fig:memory}%
Comparing the memory footprint. Extraction time (left) and peak-memory usage (right)
as a function of the number of edges for \emph{genf-1} and \emph{genf-2}.
}
\end{figure*}

\subsection{Preprocessing and vertex-localization}
Next we demonstrate the effectiveness of
preprocessing and vertex-localization for improving the performance of the
algorithm.

\begin{table}[t]
\caption{\label{table:preproc}%
Preprocessing and vertex-localization.}
\centering
\footnotesize
\setlength{\tabcolsep}{0.1cm}
\begin{tabular}{r r r r r r r r}
\toprule
\shortstack{No. of edges \\($m$)} &
\shortstack{Algebraic \\~} &
\shortstack{Algebraic \\(pre)} &
\shortstack{Speedup-1\\~} &
\shortstack{Algebraic \\(vloc)} &
\shortstack{Speedup-2\\~} &
\shortstack{Algebraic \\(pre + vloc)} &
\shortstack{Speedup-3\\~}\\
\midrule
{\pathmotif}\\
       1\,070 &     1.11\,s &     1.06\,s &    1.0 &     0.27\,s &    4.1 &     0.25\,s &    4.4\\
      10\,070 &     5.75\,s &     3.11\,s &    1.8 &     1.25\,s &    4.6 &     0.56\,s & {\bf 10.2}\\
     100\,070 &    47.86\,s &    11.73\,s &    4.1 &    13.95\,s &    3.4 &     3.36\,s & {\bf 14.2}\\
  1\,000\,070 &   493.65\,s &   111.96\,s &    4.4 &   163.64\,s &    3.0 &    41.57\,s & {\bf 11.9}\\
\midrule
{\colorfulpath}\\
       1\,070 &     0.87\,s &     0.90\,s &    1.0 &     0.26\,s &    3.3 &     0.26\,s &    3.4\\
      10\,070 &     1.96\,s &     2.03\,s &    1.0 &     0.58\,s &    3.4 &     0.55\,s &    3.6\\
     100\,070 &    16.60\,s &    16.90\,s &    1.0 &     7.56\,s &    2.2 &     7.98\,s &    2.1\\
  1\,000\,070 &   195.99\,s &   196.79\,s &    1.0 &    86.66\,s &    2.3 &    90.38\,s &    2.2\\
\bottomrule
\end{tabular}
\end{table}

\begin{figure}[t]
\centering
\setlength{\tabcolsep}{0.5cm}
\begin{tabular}{c c}
  \includegraphics[width=0.35\linewidth]{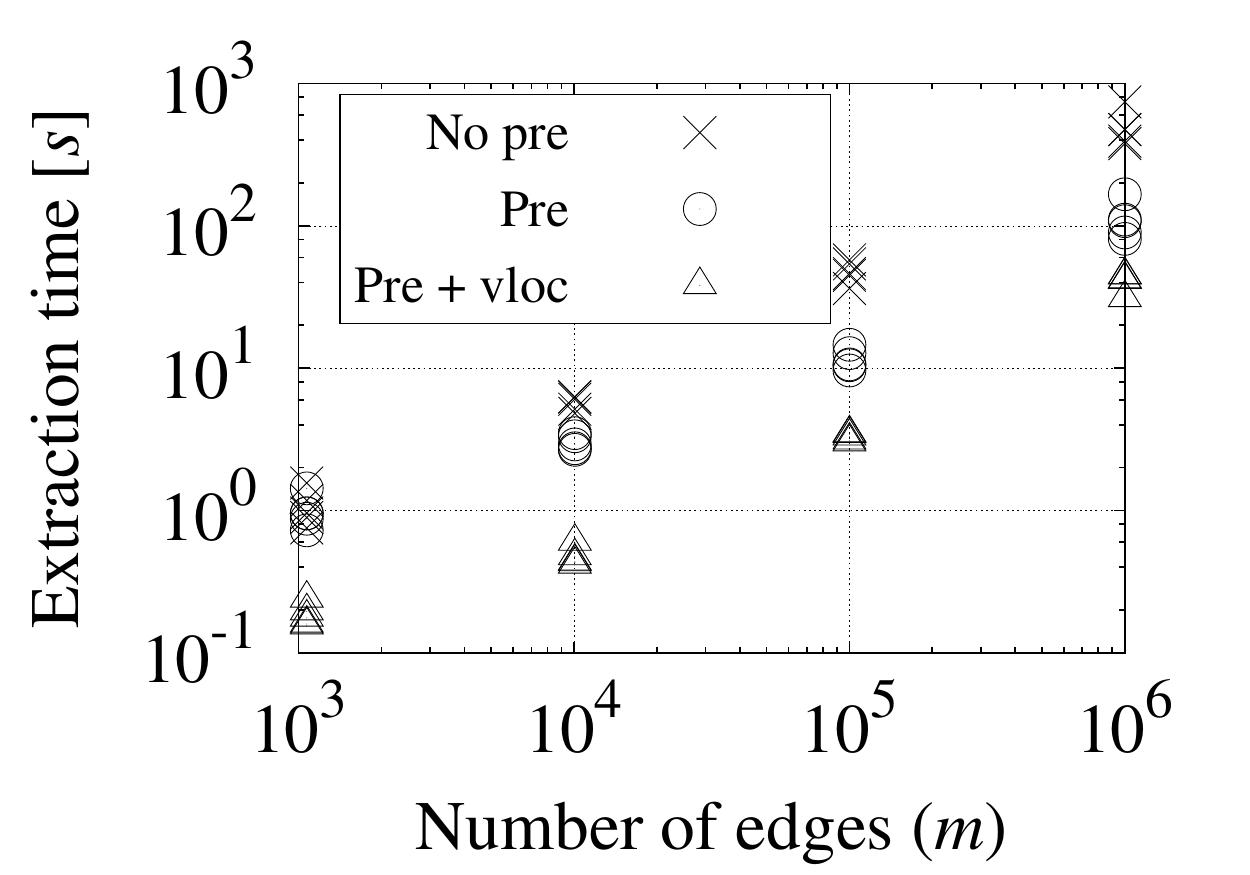}
  &
  \includegraphics[width=0.35\linewidth]{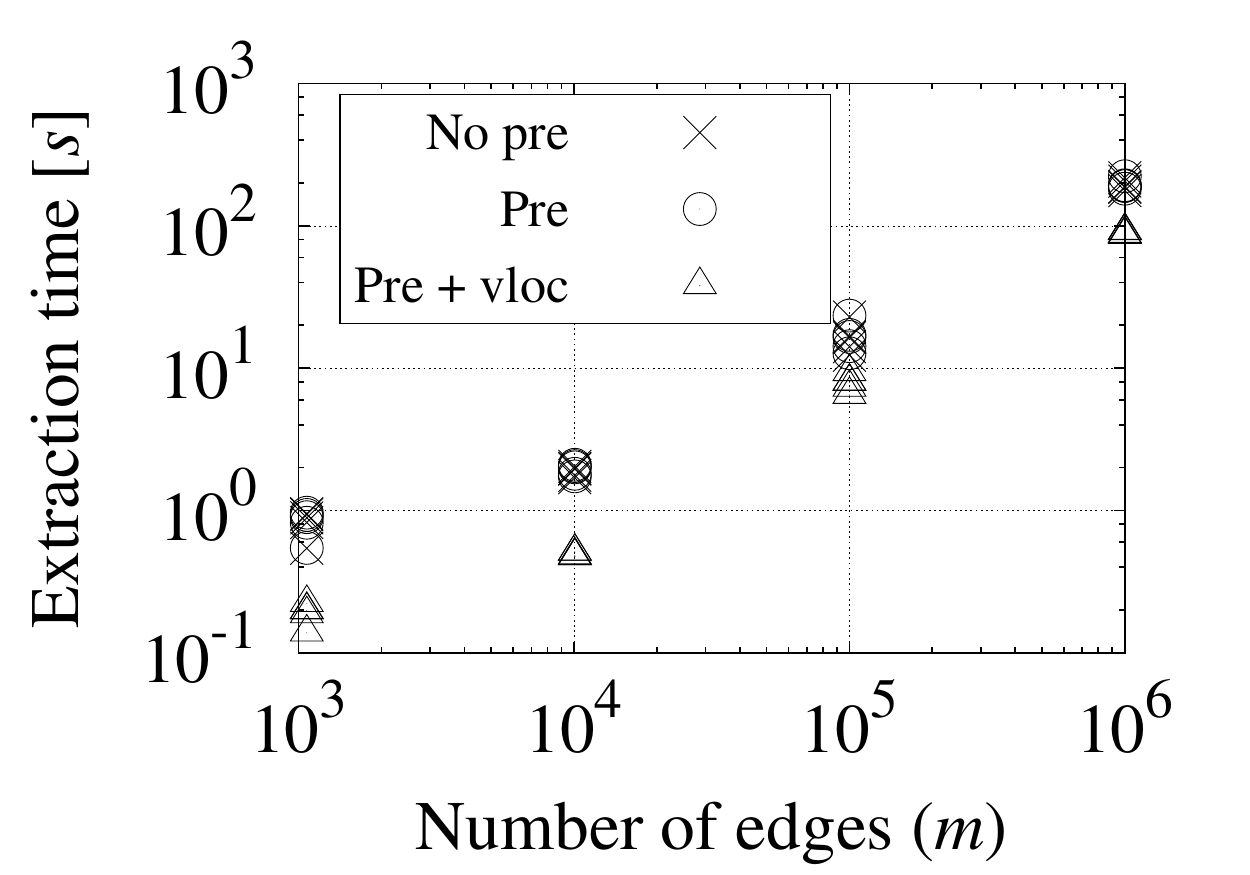}
\end{tabular}
\caption{\label{fig:preproc}%
Preprocessing and vertex-localization. Extraction time as a function of the number
of edges $m$ for the \pathmotif problem (left) and the \colorfulpath problem (right) problem.
}
\end{figure}

Recall that we have implemented two preprocessing techniques
[cf.~\S\ref{sec:implementation:preproc}]: ($i$) remove vertices
with colors not matching the query multiset colors and edges incident to them; and 
($ii$) remove vertices (and edges incident to them), which are not incident to a
match in the corresponding non-temporal instance. In the
following experiments we only make use of the second preprocessing~step.

\smallskip
In Table~\ref{table:preproc} we report the extraction time: ($i$) without
preprocessing (algebraic), ($ii$) with preprocessing (\emph{pre}), ($iii$)
with vertex-localization (\emph{vloc}), and ($iv$) with preprocessing and vertex
localization (\emph{pre + vloc}).
The experiments are performed on five independent instances of $d$-regular
random graphs for each configuration of
$n=10^2,\dots,10^5$ and fixed values of $d=20$, $t=100$, $k=8$. 
In the \pathmotif problem the vertex colors are chosen uniformly in range
$\{1,\ldots,30\}$ and the
query multiset is chosen randomly. In the \colorfulpath problem the vertices are
colored uniformly in range $\{1,\dots,k\}$ and the query multiset is
$\{1,\dots,,k\}$. The reported
runtimes are the average of five independent executions and all runtimes are shown in
seconds. 
Additionally, we display the extraction time as a function of the
number of edges $m$ for the \pathmotif problem (left) and the \colorfulpath problem
(right) in Figure~\ref{fig:preproc}.

\smallskip
We observe that preprocessing and vertex-localization are very effective in the
\pathmotif instances compared to the \colorfulpath instances. In
\pathmotif instances we obtain up to factor-fourteen speedup in extraction time for
large graphs. However, for the \colorfulpath instances the speedup is rather
modest with up to factor $3.6$ improvement in runtime. We also observe high variance 
in the extraction time for the \pathmotif
problem as compared to the \colorfulpath problem.
The experiments are performed on the \emph{workstation} configuration using all cores.

\smallskip
Note that, the execution time of our algorithm vary depending on the reduction
in the graph size obtained after preprocessing. The high variance in the runtime is
a consequence of the variation in the graph size after preprocessing, as
observed in Figure~\ref{fig:preproc}.
Additionally, we observed a significant reduction in the graph size after preprocessing in
\pathmotif instances compared to \colorfulpath instances. As a consequence, we obtain
better speedup in computation for \pathmotif instances.

\subsection{Scaling to large graphs}
Next we demonstrate the scalability of the algebraic algorithm to graphs with up to
one billion edges. 

\smallskip
Figure~\ref{fig:pushing-limits:1} shows extraction time
(left) and peak-memory usage (right) for
$d$-regular random graphs with $n=10^3,\dots,10^7$, $d=200$, $t=100$ with $k=5$.
In graphs with one billion edges, our algebraic algorithm using
preprocessing and vertex-localization can extract an optimum solution in less
than thirteen minutes for small query (multiset) size with $k=5$, while making use of
less than one-hundred gigabytes of memory. It is important to note that
more than half of the working memory is used for processing the input graph.

\begin{figure}[t]
\centering
\setlength{\tabcolsep}{0.5cm}
\begin{tabular}{c c}
  \includegraphics[width=0.35\linewidth]{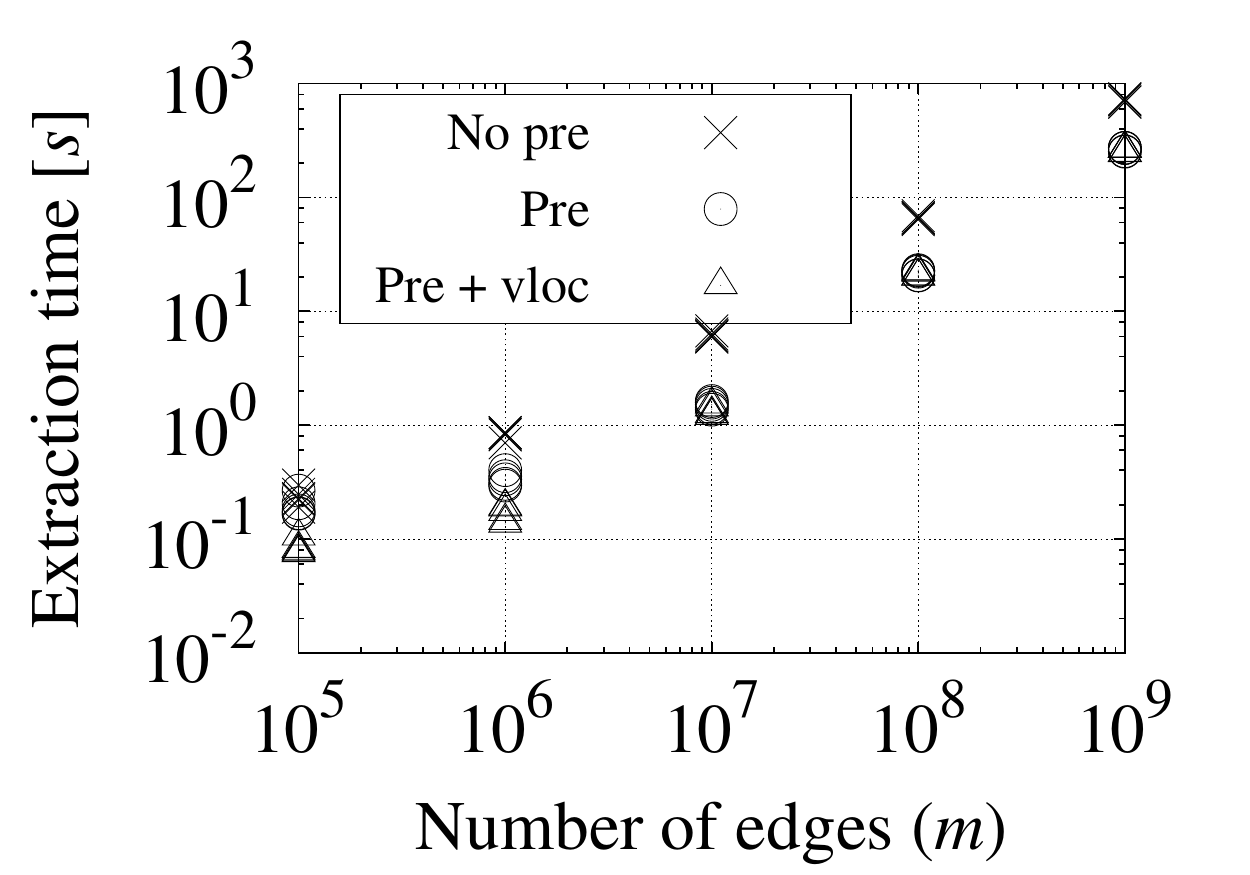} &
  \includegraphics[width=0.35\linewidth]{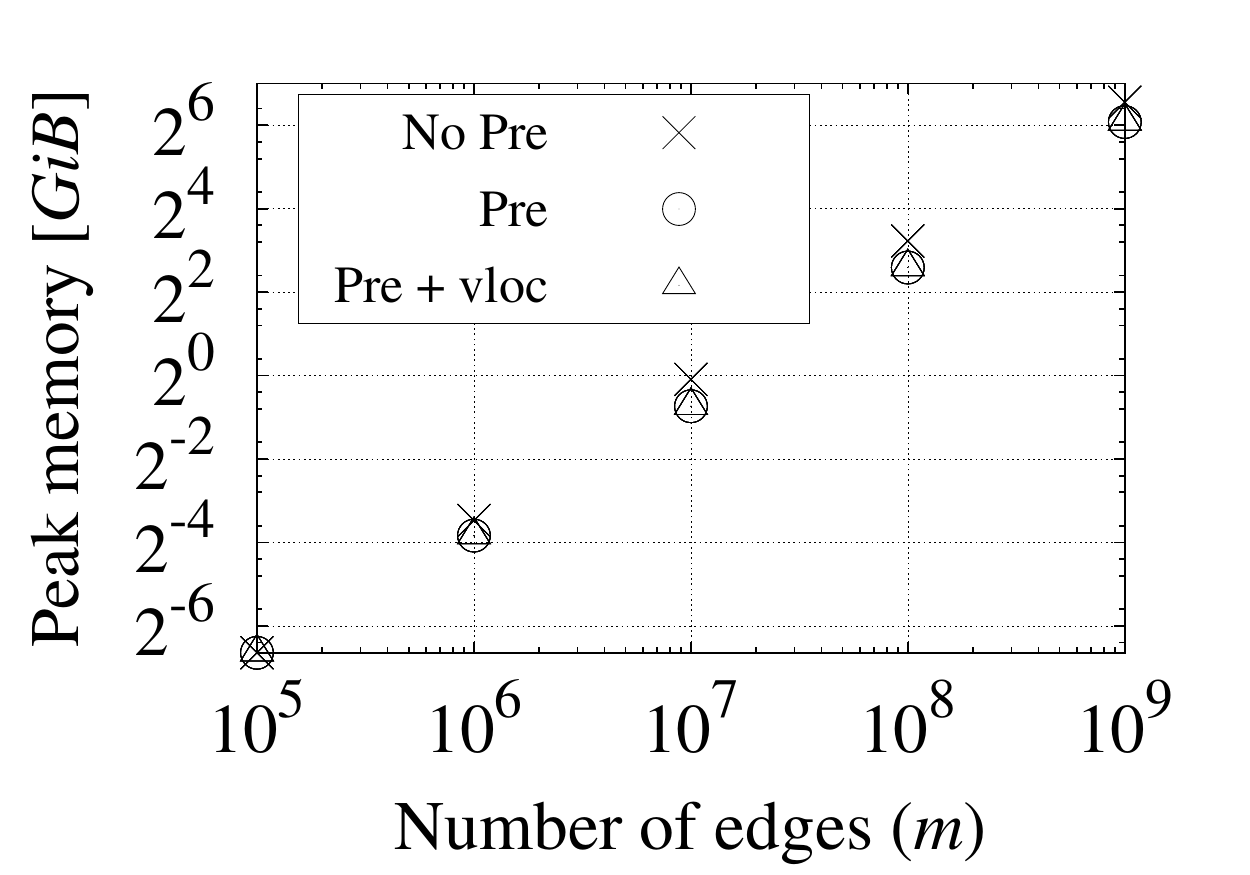}
\end{tabular}
\caption{\label{fig:pushing-limits:1}%
Scaling to a billion edges with small query multiset size. Extraction time (left) and peak-memory usage (right)
as a function of the number of edges with query (multiset) size $k=5$.}
\end{figure}

\smallskip
Our next set of experiments studies the scaling of the algorithm for graphs up to
hundred million edges for large query multiset size with $k=10$.
Figure~\ref{fig:pushing-limits:2} shows the extraction time (left) and
peak-memory usage (right) of the algorithm for five independent $d$-regular
random graphs with $n=10^3,\dots,10^6$ and $d=200$, $k=10$, $t=200$, fixed. In
graphs with one hundred million edges and query (multiset) size $k=10$, our
implementation using preprocessing and vertex-localization can extract a
solution in less than thirty five minutes while
using of at most ten gigabytes of working memory.

\smallskip
The experiments are executed on \pathmotif instances. The vertex colors are
assigned uniformly at random in range $\{1,\dots,30\}$ and
the query multiset is chosen uniformly at random. We ensure that each graph instance has at
least ten target instances agreeing with the query multiset colors.
All experiments are performed on the \emph{computenode} configuration using all cores with
undirected graphs and we employ the second preprocessing step
(removing vertices that are not incident to a match in the corresponding non-temporal instance).

\begin{figure}[t]
\centering
\setlength{\tabcolsep}{0.5cm}
\begin{tabular}{c c}
  \includegraphics[width=0.35\linewidth]{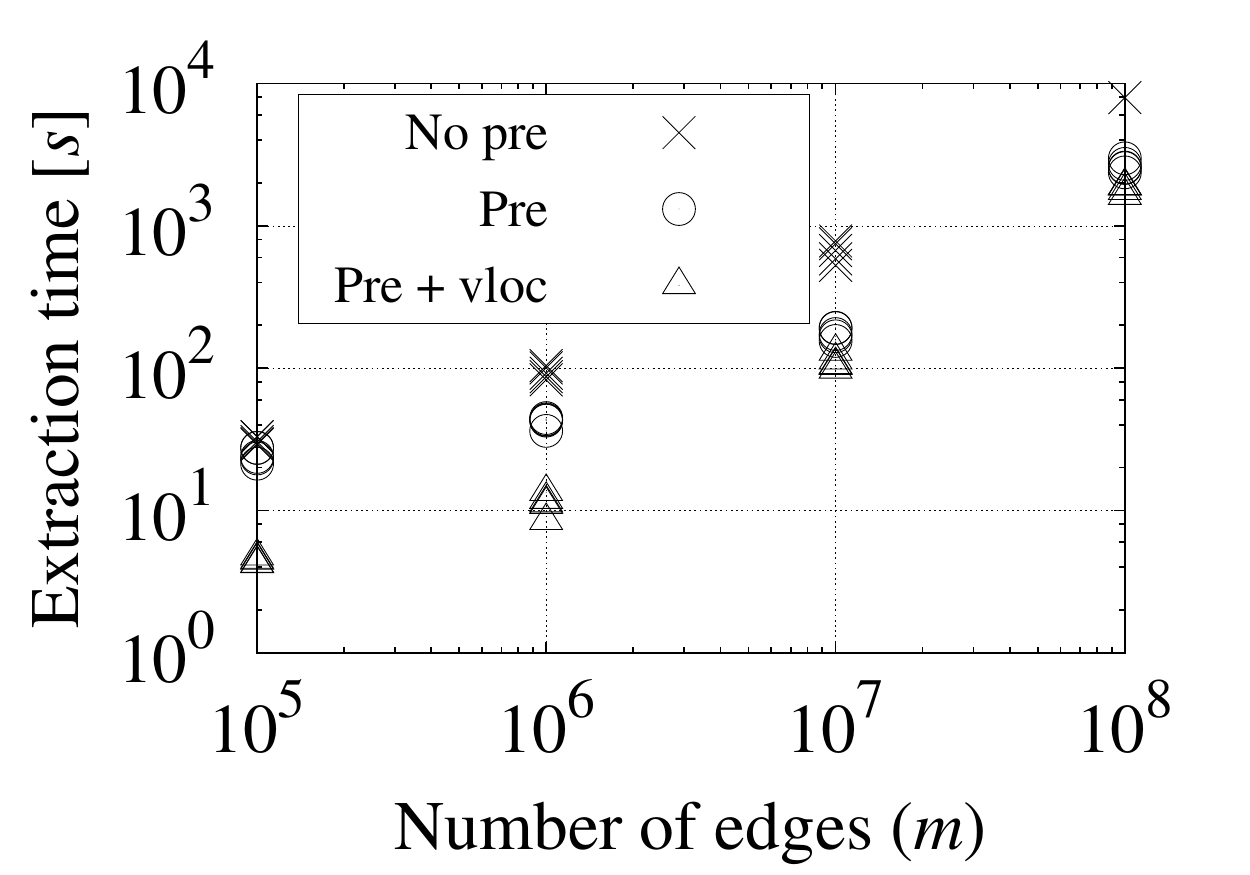} &
  \includegraphics[width=0.35\linewidth]{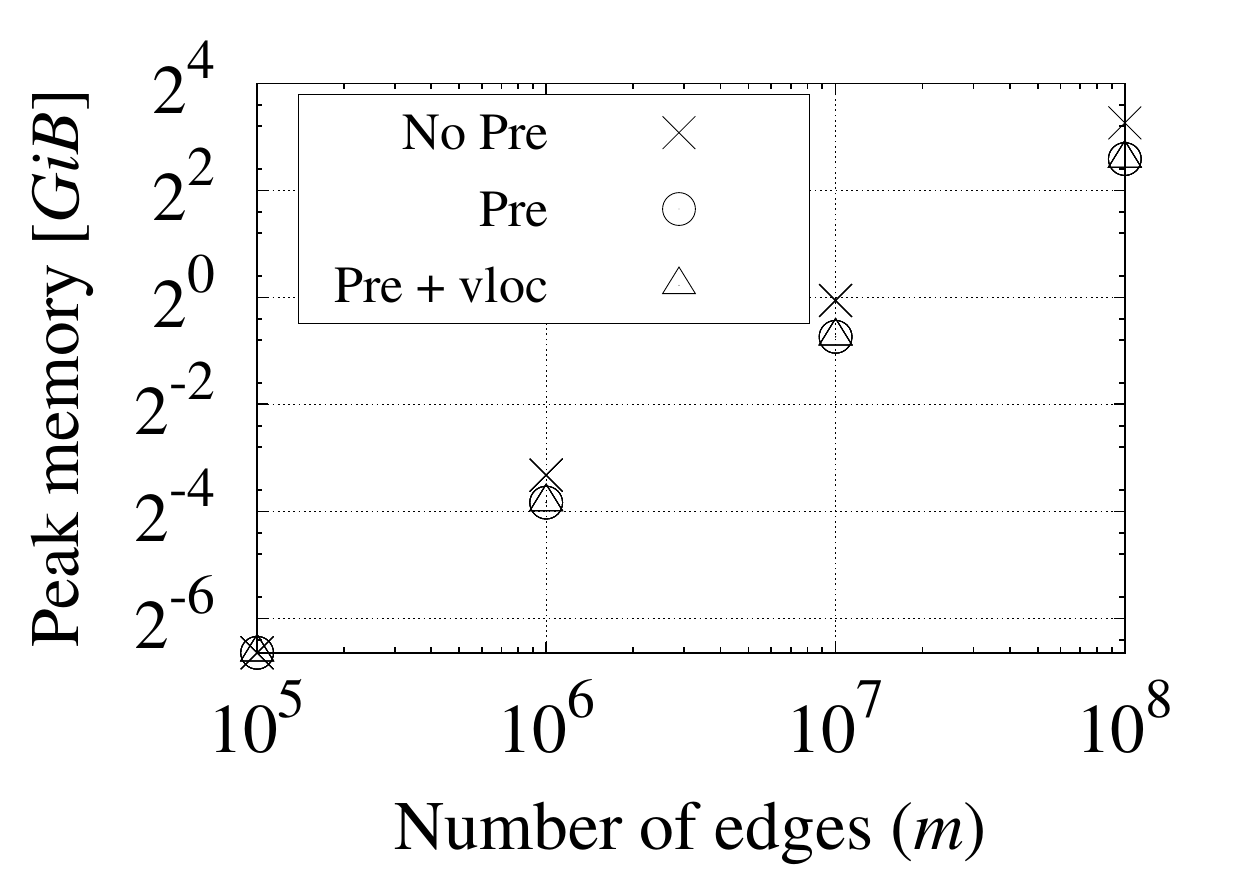}
\end{tabular}
\caption{\label{fig:pushing-limits:2}%
Scaling to large graphs with large query multiset size. Extraction time (left) and peak-memory usage (right)
as a function of the number of edges with query (multiset) size $k=10$.
}
\end{figure}

\subsection{Experiments with real-world graphs}

Finally, we evaluate our algebraic algorithm on real-world data, 
using the datasets described in Section~\ref{sec:inputgraphs}. 
For this set of experiments we focus on extraction time. 
Table~\ref{table:dataset:1} compares the extraction time (shown in seconds) 
for the baseline and algebraic algorithms  on the transport datasets. 
For each dataset we report the \emph{maximum} time for the algebraic algorithm and 
the \emph{minimum} time for the baseline algorithm, over five independent
executions by choosing the query multiset of colors at random. 
For query multiset size $k=5$, the extraction time is at most thirteen seconds.  
For larger query multiset size $k=10$, the extraction time is at most eight minutes 
in all the datasets.  
For the baseline algorithm we preprocess the graphs by
removing vertices whose colors do not match with the query multiset colors. The
reported running time for the algebraic algorithm include the second preprocessing
step with vertex-localization.
Note that for the baseline we timeout the experiments which consume
more than two hours.

\smallskip
Table~\ref{table:dataset:2} reports the extraction time (shown in seconds) of
the algebraic algorithm for the experiments on real-world datasets. For each
dataset we report the \emph{maximum} time among five independent executions by 
choosing the multiset colors at random with multiset query size $k=5$. 
Speedup is the ratio of the runtime of the algebraic algorithm with and without vertex-localization. 
Memory is the peak-memory usage in gigabytes. 
The algorithm can extract a solution in less than ninety seconds in all the datasets 
using less than twelve gigabytes of working memory.

\smallskip
All experiments are performed on the \emph{workstation} configuration using all cores with
undirected graphs and we employ the second preprocessing step
(removing vertices that are not incident to a match in the corresponding
non-temporal instance). 

\begin{table*}[t]
\caption{Experimental results on transport networks.}
\label{table:dataset:1}
\centering
\footnotesize
\begin{tabular}{r r r r r r r r}
\toprule
 & & & & \multicolumn{2}{c}{$k=5$} & \multicolumn{2}{c}{$k=10$}\\
\cmidrule(lr){5-6} 
\cmidrule(lr){7-8} 
Dataset & $n$ & $m$ & $t$ & Baseline & Algebraic & Baseline & Algebraic \\
\midrule
          Madrid tram &     70 &      35\,144 & 1\,265 &      1.37\,s &  0.12\,s & 1\,337.98\,s &  28.05\,s \\
         Madrid train &     91 &      43\,677 & 1\,181 &     40.01\,s &  0.12\,s & 1\,634.56\,s &  24.12\,s \\
Madrid interurban bus & 7\,543 &  1\,495\,055 & 1\,440 &    744.79\,s &  1.11\,s &           -- & 325.51\,s \\
           Madrid bus & 4\,597 &  2\,254\,993 & 1\,440 & 6\,337.89\,s &  1.40\,s &           -- & 278.91\,s \\
         Helsinki bus & 7\,959 &  6\,403\,785 & 1\,440 &           -- &  3.52\,s &           -- & 444.66\,s \\
         Madrid metro &    467 & 37\,565\,706 & 1\,440 &           -- & 12.87\,s &           -- &  98.69\,s \\
\bottomrule
\end{tabular}
\end{table*}

\begin{table*}[t]
\caption{Experimental results on real-world graphs.}
\label{table:dataset:2}
\centering
\footnotesize
\begin{tabular}{r r r r r r r r}
\toprule
Dataset & $n$ & $m$ & $t$ & No vloc & vloc & Speedup & Memory (GB)\\
\midrule
         Bitcoin alpha &    3\,783 &      24\,190 &     1\,647 &     0.69\,s &     0.36\,s &    1.9 &   0.10 \\
           Madrid tram &        70 &      35\,139 &     1\,265 &     0.20\,s &     0.12\,s &    1.7 &   0.00 \\
           Bitcoin otc &    5\,881 &      35\,596 &    31\,467 &    22.19\,s &    13.27\,s &    1.7 &   2.95 \\
            DNC emails &    1\,891 &      39\,268 &    19\,383 &     4.63\,s &     2.83\,s &    1.6 &   0.58 \\
          Madrid train &        91 &      43\,672 &     1\,181 &     0.19\,s &     0.12\,s &    1.7 &   0.00 \\
           College msg &    1\,899 &      58\,975 &    35\,913 &    12.52\,s &     7.14\,s &    1.8 &   1.11 \\
                 Chess &    7\,301 &      64\,962 &        100 &     0.12\,s &     0.10\,s &    1.1 &   0.01 \\
             Elections &    7\,118 &     103\,679 &    98\,026 &    85.85\,s &    53.32\,s &    1.6 &  11.40 \\
        Emails EU core &       986 &     327\,228 &   139\,649 &    44.15\,s &    23.93\,s &    1.8 &   2.26 \\
              Epinions &  131\,828 &     841\,376 &        939 &     5.31\,s &     4.63\,s &    1.1 &   1.97 \\
 Madrid interurban bus &    7\,543 &  1\,495\,050 &     1\,440 &     1.44\,s &     1.11\,s &    1.3 &   0.22 \\
            Madrid bus &    4\,597 &  2\,254\,988 &     1\,440 &     1.77\,s &     1.40\,s &    1.3 &   0.21 \\
          Helsinki bus &    7\,959 &  6\,403\,780 &     1\,440 &     3.50\,s &     3.52\,s &    1.0 &   0.41 \\
          Madrid metro &       467 & 37\,565\,706 &     1\,195 &    12.87\,s &    12.87\,s &    1.0 &   1.76 \\
\bottomrule
\end{tabular}
\end{table*}

\section{Conclusions and future work}
\label{sec:conclusion}

We introduced several pattern detection problems that arise in the
context of mining large temporal graphs. In particular, we presented
both complexity results and designed exact algebraic algorithms based
on constrained multilinear sieving for the problems.
As a highlight, our publicly available implementation can scale to large graphs
with up to one billion edges despite the studied problems being \np-hard. 
We presented extensive experimental results that validate our scalability claims.

Note that the application of our framework is not limited to
temporal paths but rather
can be extended to a wide range of pattern detection problems where we search
for information cascades, temporal arborescences, and temporal subgraphs.
Our algebraic approach makes use of $\bigO(nt)$ memory, which limits the
scalability of the algorithm for large values of $t$. A possible direction to
explore would be to design algorithms with space complexity independent of $t$.
Also we would like to study if it is possible to trade time for space.

\section{Acknowledgements}
\label{sec:acknowledgements}

This research was supported by the Academy of Finland
project ``Adaptive and Intelligent Data (AIDA)'' (317085), 
the EC H2020 RIA project ``SoBigData++'' (871042), 
and the Wallenberg AI, Autonomous Systems and Software Program (WASP)
funded by Knut and Alice Wallenberg Foundation.  
We acknowledge the use of computational
resources funded by the project ``Science-IT'' at Aalto University, Finland.

\bibliographystyle{siam}
\bibliography{paper}

\begin{thebibliography}{10}

\bibitem{AlonPIFC2008}
{\sc N.~Alon, P.~Dao, I.~Hajirasouliha, F.~Hormozdiari, and S.~C. Sahinalp},
  {\em Biomolecular network motif counting and discovery by color coding},
  Bioinformatics, 24 (2008), pp.~241--249.

\bibitem{AslayMFGG2018}
{\sc C.~Aslay, A.~Nasir, G.~De~Francisci~Morales, and A.~Gionis}, {\em Mining
  frequent patterns in evolving graphs}, in CIKM, 2018, pp.~923--932.

\bibitem{BellG2008}
{\sc N.~Bell and M.~Garland}, {\em Efficient sparse matrix-vector
  multiplication on {CUDA}}, {NVIDIA} tech. rep., {NVIDIA} Corp., 2008.

\bibitem{BensonGL2016}
{\sc A.~Benson, D.~Gleich, and J.~Leskovec}, {\em Higher-order organization of
  complex networks}, Science, 353 (2016), pp.~163--166.

\bibitem{BjorklundHKK2017}
{\sc A.~Bj{\"{o}}rklund, T.~Husfeldt, P.~Kaski, and M.~Koivisto}, {\em Narrow
  sieves for parameterized paths and packings}, JCSS, 87 (2017), pp.~119--139.

\bibitem{Bjorklund2014}
{\sc A.~Bj{\"{o}}rklund, P.~Kaski, and {\L}.~Kowalik}, {\em Determinant sums
  for undirected {H}amiltonicity}, {SIAM} J. Comput., 43 (2014), pp.~280--299.

\bibitem{BjorklundKK-esa2014}
\leavevmode\vrule height 2pt depth -1.6pt width 23pt, {\em Fast witness
  extraction using a decision oracle}, in ESA, 2014, pp.~149--160.

\bibitem{BjorklundKK2016}
\leavevmode\vrule height 2pt depth -1.6pt width 23pt, {\em Constrained
  multilinear detection and generalized graph motifs}, Algorithmica, 74 (2016),
  pp.~947--967.

\bibitem{BjorklundKKL2015}
{\sc A.~Bj{\"{o}}rklund, P.~Kaski, {\L}.~Kowalik, and J.~Lauri}, {\em
  Engineering motif search for large graphs}, in ALENEX, 2015, pp.~104--118.

\bibitem{Bollobas01}
{\sc B.~Bollob{\'a}s}, {\em Random Graphs}, Cambridge UP, second~ed., 2001.

\bibitem{BressanLP2019}
{\sc M.~Bressan, S.~Leucci, and A.~Panconesi}, {\em Motivo: Fast motif counting
  via succinct color coding and adaptive sampling}, {PVLDB}, 12 (2019),
  pp.~1651--1663.

\bibitem{CasteigtsHMZ2019}
{\sc A.~Casteigts, A.~Himmel, H.~Molter, and P.~Zschoche}, {\em The
  computational complexity of finding temporal paths under waiting time
  constraints}, CoRR, abs/1909.06437 (2019).

\bibitem{Chen2011}
{\sc L.~Chen, X.~Li, and Y.~Shi}, {\em The complexity of determining the
  rainbow vertex-connection of a graph}, Theoretical Computer Science, 412
  (2011), pp.~4531--4535.

\bibitem{cicaleseGGLLRT13}
{\sc F.~Cicalese, T.~Gagie, E.~Giaquinta, E.~S. Laber, Z.~Lipt{\'{a}}k,
  R.~Rizzi, and A.~I. Tomescu}, {\em Indexes for jumbled pattern matching in
  strings, trees and graphs}, in SPIRE, 2013, pp.~56--63.

\bibitem{ColettoGGL2017}
{\sc M.~Coletto, K.~Garimella, A.~Gionis, and C.~Lucchese}, {\em Automatic
  controversy detection in social media: {A} content-independent motif-based
  approach}, Online Social Networks and Media, 3-4 (2017), pp.~22--31.

\bibitem{colettoKGL2017}
{\sc M.~Coletto, K.~Garimella, A.~Gionis, and C.~Lucchese}, {\em A motif-based
  approach for identifying controversy}, in Eleventh International AAAI
  Conference on Web and Social Media, 2017.

\bibitem{CyganFKLMPPS2015}
{\sc M.~{Cygan}, F.~V. {Fomin}, {\L}.~{Kowalik}, D.~{Lokshtanov}, D.~{Marx},
  M.~{Pilipczuk}, M.~{Pilipczuk}, and S.~{Saurabh}}, {\em {Parameterized
  algorithms}}, 2015.

\bibitem{DeFAGLY2010}
{\sc M.~DeChoudhury, M.~Feldman, S.~Amer-Yahia, N.~Golbandi, R.~Lempel, and
  C.~Yu}, {\em Automatic construction of travel itineraries using social
  breadcrumbs}, in HT, 2010, pp.~35--44.

\bibitem{Dechter1991}
{\sc R.~Dechter, I.~Meiri, and J.~Pearl}, {\em Temporal constraint networks},
  Artificial intelligence, 49 (1991), pp.~61--95.

\bibitem{DellLM2020}
{\sc H.~Dell, J.~Lapinskas, and K.~Meeks}, {\em Approximately counting and
  sampling small witnesses using a colourful decision oracle}, in SODA, 2020,
  pp.~2201--2211.

\bibitem{Fomin2016efficient}
{\sc F.~V. Fomin, D.~Lokshtanov, F.~Panolan, and S.~Saurabh}, {\em Efficient
  computation of representative families with applications in parameterized and
  exact algorithms}, J. ACM, 63 (2016).

\bibitem{GagieHLW13}
{\sc T.~Gagie, D.~Hermelin, G.~M. Landau, and O.~Weimann}, {\em Binary jumbled
  pattern matching on trees and tree-like structures}, in ESA, 2013.

\bibitem{garey2002computers}
{\sc M.~R. Garey and D.~S. Johnson}, {\em Computers and intractability},
  vol.~29, W. H. Freeman and Co., 2002.

\bibitem{GeorgeKS2007}
{\sc B.~George, S.~Kim, and S.~Shekhar}, {\em Spatio-temporal network databases
  and routing algorithms: A summary of results}, in International Symposium on
  Spatial and Temporal Databases, 2007, pp.~460--477.

\bibitem{GiaquintaG13}
{\sc E.~Giaquinta and S.~Grabowski}, {\em New algorithms for binary jumbled
  pattern matching}, IPL, 113 (2013), pp.~538--542.

\bibitem{GionisLPT2014}
{\sc A.~Gionis, T.~Lappas, K.~Pelechrinis, and E.~Terzi}, {\em Customized tour
  recommendations in urban areas}, WSDM, 2014, pp.~313--322.

\bibitem{GuptaAH2011}
{\sc M.~Gupta, C.~C. Aggarwal, and J.~Han}, {\em Finding top-k shortest path
  distance changes in an evolutionary network}, in SSTD, 2011, pp.~130--148.

\bibitem{Holme2015}
{\sc P.~Holme}, {\em Modern temporal network theory: a colloquium}, European
  Physical Journal B, 88 (2015), p.~234.

\bibitem{holme2012temporal}
{\sc P.~Holme and J.~Saram{\"a}ki}, {\em Temporal networks}, Physics reports,
  519 (2012), pp.~97--125.

\bibitem{Holmes2012}
\leavevmode\vrule height 2pt depth -1.6pt width 23pt, {\em Temporal networks},
  Physics reports, 519 (2012), pp.~97--125.

\bibitem{honey2007network}
{\sc C.~J. Honey, R.~K{\"o}tter, M.~Breakspear, and O.~Sporns}, {\em Network
  structure of cerebral cortex shapes functional connectivity on multiple time
  scales}, PNAS, 104 (2007), pp.~10240--10245.

\bibitem{kaskiLT2018}
{\sc P.~Kaski, J.~Lauri, and S.~Thejaswi}, {\em {Engineering Motif Search for
  Large Motifs}}, in SEA, 2018, pp.~1--19.

\bibitem{Kostakos2009}
{\sc V.~Kostakos}, {\em Temporal graphs}, Physica A: Statistical Mechanics and
  its Applications, 388 (2009), pp.~1007--1023.

\bibitem{koutis-icalp08}
{\sc I.~Koutis}, {\em Faster algebraic algorithms for path and packing
  problems}, in ICALP, 2008.

\bibitem{koutis-dagstuhl}
\leavevmode\vrule height 2pt depth -1.6pt width 23pt, {\em The power of group
  algebras for constrained multilinear monomial detection}, Dagstuhl meeting
  10441,  (2010).

\bibitem{koutis-ipl}
\leavevmode\vrule height 2pt depth -1.6pt width 23pt, {\em Constrained
  multilinear detection for faster functional motif discovery}, IPL, 112
  (2012), pp.~889--892.

\bibitem{koutis-williams-icalp09}
{\sc I.~Koutis and R.~Williams}, {\em Limits and applications of group algebras
  for parameterized problems}, in ICALP (1), 2009.

\bibitem{koutisW2016}
{\sc I.~Koutis and R.~Williams}, {\em Algebraic fingerprints for faster
  algorithms}, Comm. of the ACM, 59 (2016), pp.~98--105.

\bibitem{KovanenKKKS2011}
{\sc L.~Kovanen, M.~Karsai, K.~Kaski, J.~Kert{\'e}sz, and J.~Saram{\"a}ki},
  {\em Temporal motifs in time-dependent networks}, Journal of Statistical
  Mechanics: Theory and Experiment, 2011 (2011), p.~P11005.

\bibitem{KowalikL2016}
{\sc {\L}.~Kowalik and J.~Lauri}, {\em On finding rainbow and colorful paths},
  TCS, 628 (2016), pp.~110 -- 114.

\bibitem{koblenz}
{\sc J.~Kunegis}, {\em {KONECT:} the {K}oblenz network collection}, in WWW,
  2013, pp.~1343--1350.
\newblock \url{http://konect.uni-koblenz.de/networks/}.

\bibitem{lacroix2006motif}
{\sc V.~Lacroix, C.~G. Fernandes, and M.-F. Sagot}, {\em Motif search in
  graphs: application to metabolic networks}, IEEE Transactions on
  Computational Biology and Bioinformatics (TCBB), 3 (2006), pp.~360--368.

\bibitem{latapy2018stream}
{\sc M.~Latapy, T.~Viard, and C.~Magnien}, {\em Stream graphs and link streams
  for the modeling of interactions over time}, Social Network Analysis and
  Mining, 8 (2018).

\bibitem{snapnets}
{\sc J.~Leskovec and A.~Krevl}, {\em {SNAP Datasets}: {Stanford} large network
  dataset collection}.
\newblock \url{http://snap.stanford.edu/data}, June 2014.

\bibitem{LinAHC2016}
{\sc S.-J. Lin, T.~Y. Al-Naffouri, Y.~S. Han, and W.-H. Chung}, {\em Novel
  polynomial basis with fast {F}ourier transform and its application to
  {R}eed-{S}olomon erasure codes}, ITIT, 62 (2016).

\bibitem{liu2010mining}
{\sc L.~Liu, J.~Tang, J.~Han, M.~Jiang, and S.~Yang}, {\em Mining topic-level
  influence in heterogeneous networks}, in CIKM, 2010, pp.~199--208.

\bibitem{LiuBC2019}
{\sc P.~Liu, A.~Benson, and M.~Charikar}, {\em Sampling methods for counting
  temporal motifs}, in WSDM, 2019, pp.~294--302.

\bibitem{MiloSKC2002}
{\sc R.~Milo, S.~Shen-Orr, S.~Itzkovitz, N.~Kashtan, D.~Chklovskii, and
  U.~Alon}, {\em Network motifs: Simple building blocks of complex networks},
  Science, 298 (2002), pp.~824--827.

\bibitem{ParanjapeBL2017}
{\sc A.~Paranjape, A.~Benson, and J.~Leskovec}, {\em Motifs in temporal
  networks}, WSDM, 2017, pp.~601--610.

\bibitem{schwartz}
{\sc J.~T. Schwartz}, {\em Fast probabilistic algorithms for verification of
  polynomial identities}, J. ACM, 27 (1980), pp.~701--717.

\bibitem{conf-code}
{\sc S.~Thejaswi and A.~Gionis}, 2019.
\newblock \url{https://github.com/suhastheju/temporal-patterns}.

\bibitem{conf-paper}
{\sc S.~Thejaswi and A.~Gionis}, {\em Pattern detection in large temporal
  graphs using algebraic fingerprints}, in SDM, 2020, pp.~1--10.

\bibitem{journal-code}
{\sc S.~Thejaswi, A.~Gionis, and J.~Lauri}, 2020.
\newblock \url{https://github.com/suhastheju/temporal-patterns-mk2}.

\bibitem{Uchizawa2013}
{\sc K.~Uchizawa, T.~Aoki, T.~Ito, A.~Suzuki, and X.~Zhou}, {\em On the rainbow
  connectivity of graphs: complexity and {FPT} algorithms}, Algorithmica, 67
  (2013), pp.~161--179.

\bibitem{VansteenwegenSO2011}
{\sc P.~Vansteenwegen, W.~Souffriau, and D.~V. Oudheusden}, {\em The
  orienteering problem: A survey}, EJOR, 209 (2011), pp.~1 -- 10.

\bibitem{wackersreuther2010frequent}
{\sc B.~Wackersreuther, P.~Wackersreuther, A.~Oswald, C.~B{\"o}hm, and
  K.~Borgwardt}, {\em Frequent subgraph discovery in dynamic networks}, in MLG,
  2010.

\bibitem{williams-ipl}
{\sc R.~Williams}, {\em Finding paths of length $k$ in {$O^*(2^k)$} time}, IPL,
  109 (2009).

\bibitem{WuCHKLX2014}
{\sc H.~Wu, J.~Cheng, S.~Huang, Y.~Ke, Y.~Lu, and Y.~Xu}, {\em Path problems in
  temporal graphs}, Proc. VLDB Endow., 7 (2014), pp.~721--732.

\bibitem{WuC2016}
{\sc H.~{Wu}, J.~{Cheng}, Y.~{Ke}, S.~{Huang}, Y.~{Huang}, and H.~{Wu}}, {\em
  Efficient algorithms for temporal path computation}, TKDE, 28 (2016),
  pp.~2927--2942.

\bibitem{yang2013community}
{\sc J.~Yang, J.~McAuley, and J.~Leskovec}, {\em Community detection in
  networks with node attributes}, in ICDM, 2013, pp.~1151--1156.

\bibitem{Zippel1979}
{\sc R.~Zippel}, {\em Probabilistic algorithms for sparse polynomials}, in
  Proc. International Symposium on Symbolic and Algebraic Computation, vol.~72
  of LNCS, 1979, pp.~216--226.

\end{thebibliography}


\end{document}